\documentclass[aps,prx,10pt,twocolumn,nobibnotes,nofootinbib,longbibliography,floatfix,superscriptaddress]{revtex4-2}
\usepackage[T1]{fontenc}
\usepackage{amsmath,amssymb,amsthm,mathtools,mathrsfs,microtype}
\usepackage{graphicx,xcolor,tikz,booktabs}
\usetikzlibrary{arrows.meta,positioning,calc,decorations.pathreplacing}
\usepackage[colorlinks=true,linkcolor=blue!45!black,citecolor=blue!45!black,urlcolor=blue!45!black]{hyperref}
\usepackage[capitalise,nameinlink]{cleveref}
\crefname{section}{Sec.}{Secs.}
\crefname{subsection}{Sec.}{Secs.}
\definecolor{qecnavy}{RGB}{46,64,87}        % fixed QEC operations, quantum wires
\definecolor{qecpetrol}{RGB}{0,121,140}      % classical records and calibration
\definecolor{qecraspberry}{RGB}{209,73,91}   % noise
\definecolor{qecsaffron}{RGB}{237,174,73}    % decoder and final recovery
\newcommand{\Tr}{\operatorname{Tr}}
\newcommand{\C}{\mathcal C}
\newcommand{\Dtr}{D_{\mathrm{tr}}}
\newcommand{\TV}{D_{\mathrm{TV}}}
\newcommand{\Xbar}{\bar X}
\newcommand{\Zbar}{\bar Z}
\newcommand{\FP}[1]{F_{#1}^{\mathrm P,\star}}
\newcommand{\FC}[1]{F_{#1}^{\mathrm{CPTP},\star}}
\newcommand{\id}{\operatorname{id}}
\newcommand{\Fid}[1]{F_{#1}}
\newcommand{\Floss}[1]{\Delta F_{#1}}
\newcommand{\Fmm}[1][]{\Delta F^{\star}\if\relax\detokenize{#1}\relax\else_{#1}\fi}
\newcommand{\RefInf}[2][]{\varepsilon_{#2}\if\relax\detokenize{#1}\relax\else^{#1}\fi}

\theoremstyle{plain}
\newtheorem{theorem}{Theorem}
\newtheorem{proposition}{Proposition}
\newtheorem{lemma}{Lemma}
\newtheorem{corollary}{Corollary}
\theoremstyle{definition}
\newtheorem{definition}{Definition}

\hypersetup{pdftitle={What Must a Quantum-Memory Decoder Know About Temporally Correlated Noise?}}
\begin{document}
\raggedbottom
\title{What Must a Quantum-Memory Decoder Know About Temporally Correlated Noise?}

\author{Danesh Morales-Hashemi}
\email{danesh.moraleshashemi@uwaterloo.ca}
\author{Graeme Smith}
\email{graeme.smith@uwaterloo.ca}
\affiliation{%
Institute for Quantum Computing and Department of Applied Mathematics,
University of Waterloo, Waterloo, Ontario N2L 3G1, Canada
}

\author{Ningping Cao}
\email{ncao@uwaterloo.ca}
\affiliation{Digital Technologies, National Research Council Canada, Canada}
\affiliation{%
Institute for Quantum Computing and Department of Applied Mathematics,
University of Waterloo, Waterloo, Ontario N2L 3G1, Canada
}
\date{\today}

\begin{abstract}
What must a quantum error-correction decoder know about temporally correlated noise, and what does it cost to learn? We study this question for fixed stabilizer memory experiments with known system--environment interactions. We formulate calibration in terms of the fidelity differences that select a Pauli recovery and relate missing information to the resulting fidelity loss.
Our main example uses weak coherent $X$ rotations in a class of CSS codes including Shor, Steane, and odd-distance rotated surface codes. All physical qubits share a sign either fixed throughout a run or independently redrawn each interval. The two models have identical syndrome statistics and the same spatiotemporal Pauli process (SPP), obtained by Pauli-twirling every noise interval. Yet at suitable storage times, they require different logical corrections. In the weak-noise limit, each model is almost perfectly correctable when known. Without model information, even the best final recovery has a worst-case fidelity loss approaching $1/2$ relative to the noise-aware optimum, despite access to the full syndrome record.
For this pair of models, at fixed code and fixed positive target loss below $1/2$, we prove matching calibration bounds. With logical preparation and final readout, complete QEC cycles require a number of noise intervals scaling as the inverse rotation angle raised to the code's $X$ distance. Allowing two uninterrupted intervals before one extraction reduces the optimal cost to inverse-square scaling.
Physical Pauli twirling of every noise interval changes both recovery performance and its information requirements. Discarding the record of applied random Pauli operations makes the SPP sufficient for optimal recovery. Keeping this record can improve the attainable fidelity, but reaching that optimum may require additional noise information.
\end{abstract}
\maketitle

%\tableofcontents
\section{Introduction}\label{sec:intro}

Repeated stabilizer measurements produce the classical record from which the decoder of an error-corrected memory chooses a recovery. Experiments now run long sequences of error-correction cycles and give their decoders increasingly accurate noise models \cite{krinner2022,google2025}. The syndrome record is also a natural calibration resource, because it is generated during storage without measuring the unknown logical state. This raises two questions about temporally correlated noise: which of its properties must calibration determine to select a good recovery, and how much experimentation does that take?

A slowly drifting field and a rapidly fluctuating field illustrate the problem. Both can produce the same rotation strength in each noise interval, yet their effects accumulate differently over a memory run. We compare a fixed-sign model, $\mathsf F$, in which all physical qubits share a single random rotation sign for the entire run, with a resampled-sign model, $\mathsf R$, in which this common sign is redrawn independently in every interval. The two models share the single-interval channel and, for the codes studied here, also the distribution of complete syndrome histories and the multi-time Pauli-twirled description. Nevertheless, they can require different logical corrections after an identical storage circuit. The syndrome determines the branch rotation angle up to the unobserved field sign. For either fixed sign, a nonzero mean logical rotation accumulates under $\mathsf F$, whereas independent resampling removes this drift under $\mathsf R$.

Several lines of work show how error-correction data reveal noise properties that matter for decoding. Pauli channels and logical Pauli noise can be learned from syndrome measurements under structural assumptions \cite{wagner2022,wagner2023}, and work on Pauli-noise learning separates learnable functions of the noise from unidentifiable parameters \cite{chen2023}. Syndrome-based estimators can track drifting error rates and update decoding weights within stochastic Pauli and detector-error models~\cite{huoli2017,spitz2018}. Syndrome-based logical-error estimation \cite{takou2026}, detector error models for specified coherent errors \cite{takoubrown}, and optimization of decoder priors from logical outcomes \cite{sivak} provide complementary approaches.

For temporally correlated noise, process tensors capture how the noise responds to the interventions of a decoding experiment \cite{pollock2018,kobayashi2024}. In this language, $\mathsf F$ is non-Markovian, with a purely classical memory, and $\mathsf R$ is Markovian. Spatiotemporal Pauli processes (SPPs) describe the joint distributions of Pauli histories \cite{kam2026}. Threshold theorems establish scalable fault tolerance under sufficiently weak local non-Markovian noise~\cite{terhal2005,agp2006}. Here we fix the memory circuit and ask a narrower question than full noise characterization: even when the data do not determine the noise model, do they determine a recovery with the desired fidelity?

Our main result concerns long storage times. For a class of CSS codes that includes the Shor, Steane, and odd-distance rotated surface codes, weak physical $X$ rotations with a common sign give identical syndrome statistics and the same SPP under $\mathsf F$ and $\mathsf R$. The logical evolution, however, differs. Let $\theta$ be the rotation parameter of each noise interval and $d_X$ the $X$ distance of the code. For either realization of the fixed sign, a logical rotation accumulates coherently. After a storage time $T_\pi\propto\theta^{-d_X}$, the mean logical-$Z$ readout of a $\bar Z=+1$ input is close to $-1$ under $\mathsf F$ and $+1$ under $\mathsf R$. Each model is then almost perfectly correctable when known. Without model information, the smallest worst-case loss relative to the noise-aware optimum approaches $1/2$ in entanglement fidelity, even when recovery uses the entire syndrome history and an arbitrary final CPTP map. Coherent physical errors can leave residual logical rotations after syndrome measurement~\cite{huang2019,beale2018}, and these rotations can accumulate over repeated cycles~\cite[Sec.~3]{iverson2020}. Related analyses address coherent errors in surface codes~\cite{bravyi2018,pataki2024,behrends_beri_2025_coherence}. Rajmohan et al. derive repeated-cycle logical channels for correlated coherent Pauli rotations and compare dynamical decoupling and Pauli twirling, including their dependence on the encoding stabilizer eigenspace~\cite{rajmohan2026}. Here we use this structure to quantify the recovery-relevant information that syndrome statistics and the SPP leave undetermined.

Independent calibration runs can supply this information while the QEC circuit that stores the data stays unchanged. With logical preparation and a final logical readout, calibration by complete QEC cycles needs of order $\theta^{-d_X}$ noise intervals, and this order is optimal. Allowing two noise intervals before a single syndrome extraction reduces the optimal cost to order $\theta^{-2}$, and allowing longer delays reduces it further, down to order $\theta^{-1}$. The lower bounds hold even when calibration retains quantum outputs and processes them jointly with the data at the final recovery. Concurrent work examines how probe preparation changes the cost of learning spatial noise correlations relevant to QEC control~\cite{deng2026}.

We treat this question as a statistical decision problem for a given memory experiment. When the system--environment interactions are known, the fidelity contribution of each candidate Pauli correction is linear in the unknown initial environment state. Standard tools of observable identification and statistical decision theory \cite{dariano2008,jencova2021,lecam,helstrom} then answer three questions: which fidelity differences calibration determines, whether the remaining ambiguity changes the optimal correction, and how limited distinguishability of the calibration data translates into lost recovery fidelity. The framework builds on channel-adapted and robust recovery optimization \cite{fletcher2007,kosut2008,danageozian2023}. A two-round Shor-code example shows how it pinpoints one missing temporal correlation and guides the design of a calibration experiment that measures it.

Adding control operations to the QEC circuit offers a different route. Randomized compiling can improve QEC performance under coherent errors~\cite{jain2023}. In our setting, ideal logical dynamical decoupling (a midpoint Hahn echo) or independent logical Pauli twirling over $\{I,\bar Z\}$ removes the order-one conflict between the $\mathsf F$ and $\mathsf R$ corrections, and independent physical Pauli twirling of every noise interval can even reduce the leading-order infidelity below that of either noise-aware recovery in the original circuit. What randomization does for decoding depends on the records the decoder keeps. When the record of applied random Paulis is discarded, the SPP suffices for optimal recovery. In a separate exact example, however, discarding the record lowers the noise-aware optimum from one to one half. Keeping the record preserves this optimum, but attaining it can require additional noise information. Retained random choices have also been used to estimate coherent errors \cite{vyas2026}.

Section~\ref{sec:setting} defines memory experiments, final decoding, and calibration. Section~\ref{sec:general} develops the decision framework and the two-round Shor example. Section~\ref{sec:family} establishes the $\mathsf F/\mathsf R$ recovery gap and extends it to a physical process that requires quantum memory. Section~\ref{sec:calibration} proves matching calibration costs for $\mathsf F/\mathsf R$, Sec.~\ref{sec:protection} compares modified QEC circuits, and Sec.~\ref{sec:discussion} discusses the scope of the results and open questions. Appendix~\ref{app:notation} lists the notation.
\section{Preliminaries and setting}
\label{sec:setting}

\subsection{The physical process and the recorded logical instrument}

A memory experiment stores an unknown logical state, which may be entangled with a reference system that the experiment never acts on. Write $\mathcal H_L\coloneqq\mathbb C^D$, with $D\coloneqq 2^k$, for the logical space and $\mathcal H_{\rm phys}\coloneqq(\mathbb C^2)^{\otimes n}$ for the physical space. An encoding isometry $V:\mathcal H_L\to\mathcal H_{\rm phys}$ embeds the input in an $[[n,k]]$ stabilizer code. We define the code space as the simultaneous $+1$ eigenspace of the chosen stabilizer generators. The environment has space $\mathcal H_E$, and its initial state $\rho_E(h)$ is independent of the logical input. Each noise interval $U_j$ acts on $\mathcal H_{\rm phys}\otimes\mathcal H_E$. Noise intervals alternate with prescribed syndrome extraction, correction, and control operations, and the physical noise process can carry classical or quantum memory from one interval to the next. Figure~\ref{fig:setting} shows how these storage operations, the final recovery, and the information supplied by calibration fit together.

We write $\mathcal B(\mathcal H,\mathcal K)$ for linear operators from $\mathcal H$ to $\mathcal K$, with $\mathcal B(\mathcal H)\coloneqq\mathcal B(\mathcal H,\mathcal H)$. The Hermitian operators on $\mathcal H$ form the real vector space $\operatorname{Herm}(\mathcal H)$. A syndrome outcome $s$ is recorded as a classical basis state $|s\rangle$. In ideal extraction the syndrome projectors $\Pi_s$ and table corrections $R_s$ act on $\mathcal H_{\rm phys}$.

\begin{figure*}[t]
\centering
% Circuit body shared by the manuscript and the standalone preview.
\begin{tikzpicture}[x=1cm,y=0.8cm,font=\small,line cap=round,
 wire/.style={line width=0.65pt},
 block/.style={draw=qecnavy,fill=qecnavy!9,rounded corners=2pt,
   line width=0.7pt,align=center,inner sep=4pt,minimum height=0.85cm},
 info/.style={draw=qecpetrol,fill=qecpetrol!11,rounded corners=3pt,
   line width=0.7pt,align=center,inner sep=5pt,minimum height=0.72cm},
 focus/.style={draw=qecsaffron!70!black,fill=qecsaffron!35,line width=1pt},
 classical/.style={draw=qecpetrol,dashed,line width=0.7pt,
   -{Latex[length=1.5mm]}},
 quantum/.style={draw=qecnavy,line width=0.8pt,
   -{Latex[length=1.5mm]}}]

 % The physical data and environment pass through the same T-cycle block.
 \draw[wire] (0.65,0)--(15.75,0);
 \draw[wire] (2.02,1.13)--(9.30,1.13);
 \node[anchor=east] at (0.57,0) {$\rho_L$};
 \node[anchor=east] at (1.97,1.13) {$\rho_E(h)$};
 \node[anchor=west] at (9.34,1.13) {$\operatorname{Tr}_E$};
 \node[anchor=west] at (15.78,0) {$\rho'_L$};
 \draw[black!45,dashed,rounded corners=4pt]
   (2.39,-1.10) rectangle (8.83,1.94);
 \node[fill=white,inner sep=2pt,text=black!75] at (5.61,1.94)
   {fixed QEC cycle $j=1,\ldots,T$};

 \node[block,minimum width=0.85cm] (encode) at (1.38,0)
   {$V$\\encode};
 \node[block,draw=qecraspberry,fill=qecraspberry!12,
   minimum width=1.20cm,minimum height=1.72cm] (noise) at (3.32,0.54)
   {$U_j$\\joint noise};
 \node[block,minimum width=1.55cm] (extract) at (5.34,0)
   {$\mathcal M$\\syndrome};
 \node[block,minimum width=1.70cm] (correct) at (7.65,0)
   {$R_{s_j}$\\fixed table};
 \node[block,minimum width=1.40cm] (decode) at (10.38,0)
   {$\mathcal D$\\logical output};
 \node[block,focus,minimum width=2.30cm] (recover) at (13.69,0)
   {Final recovery\\$L$, $\mathcal R_y$, or $\mathcal G$};

 % The reference correction depends on this cycle's syndrome. The final
 % decoder has access to the entire record, not only the final syndrome.
 \draw[classical] (extract.south)--(5.34,-0.82)--(7.65,-0.82)
   --(correct.south);
 \node[fill=white,inner sep=1pt,text=qecpetrol] at (6.50,-0.82) {$s_j$};
 \fill[qecpetrol] (5.34,-0.82) circle (1.3pt);
 \node[info,minimum width=5.40cm] (calibrate) at (3.93,-2.31)
   {\textbf{Separate calibration runs}\\known inputs, same model $h$};
 \node[info,focus,minimum width=3.35cm] (decoder) at (12.53,-2.31)
   {\textbf{Decoder}\\$L$ from $y$ and $z$};
 \draw[classical] (5.34,-0.82)--(5.34,-1.38)--(11.37,-1.38)
   --([xshift=-1.16cm]decoder.north);
 \node[below,inner sep=2pt,text=qecpetrol] at (8.63,-1.38)
   {record $y=(s_1,\ldots,s_T)$};
 \draw[classical] (calibrate.east)--node[below,inner sep=3pt,text=qecpetrol]
   {classical data $z$}(decoder.west);
 \draw[classical] ([xshift=1.11cm]decoder.north)
   --([xshift=-0.05cm]recover.south);

 % Quantum calibration output is delivered directly to final processing.
 \draw[quantum] (calibrate.south)--(3.93,-3.60)--(15.28,-3.60)
   --(15.28,-0.86)--(14.47,-0.86)
   --([xshift=0.78cm]recover.south);
 \node[fill=white,inner xsep=5pt,text=qecnavy] at (9.60,-3.60)
   {retained quantum output $Q_{\rm cal}$ in state $\sigma_h$ (optional)};
 % The recorded logical instrument of Eq. (1): encoding, all cycles, and D.
 \draw[decorate,decoration={brace,amplitude=5pt},black!60,line width=0.6pt]
   (0.93,2.30)--(11.10,2.30);
 \node[anchor=south,inner sep=2pt,text=black!80] at (6.02,2.42)
   {recorded logical instrument $\Phi_h$: record $y$ and residual logical map $\mathcal N_{h,y}$};
\end{tikzpicture}
\caption{\label{fig:setting} Memory experiment with prescribed cycle corrections $R_{s_j}$ and a final recovery chosen from the full syndrome record $y$ and calibration transcript $z$. Yellow marks the decoder and its final recovery, the subject of this paper. Known interactions $U_j$ couple the encoded system to an unobserved environment initially in $\rho_E(h)$, $\mathcal M$ extracts the syndrome $s_j$ in cycle $j=1,\ldots,T$, and $\mathcal D$ returns the residual logical output before final recovery. The logical input and recovered output are $\rho_L$ and $\rho'_L$. The brace marks the recorded logical instrument $\Phi_h$ [\cref{eq:instrument}]. The decoder applies a Pauli correction $L$ with probability $d(L\mid y)$, or more generally a CPTP recovery $\mathcal R_y$. Calibration runs share the model $h$ with the data run but use independently prepared environments, and a retained quantum output $Q_{\rm cal}$ in state $\sigma_h$ can enter a joint final recovery $\mathcal G$. The drawing shows active corrections and identity logical controls between cycles, and \cref{lem:frame-defer} gives conditions for tracking the corrections in a Pauli frame instead.}
\end{figure*}

A process tensor, or quantum comb, generalizes a quantum channel to experiments with interventions at several times~\cite{chiribella2009,pollock2018}. It takes the operations applied to the system at the intermediate times as inputs, or slots, and returns the resulting final state and measurement records, with the environment traced out. Without environmental memory it factorizes into independent single-interval channels. The fixed-sign model $\mathsf F$ of \cref{sec:family} is a simple process that does not. Inserting the encoding and the prescribed QEC operations gives a \emph{logical process tensor with syndrome records}: its open slots accept logical interventions, and its classical output contains the reported syndromes. Process-tensor approaches to decoding use this object~\cite{kobayashi2024}. Once the intermediate logical schedule is also fixed, the result is the recorded logical instrument
\begin{equation}
 \Phi_h(\rho)=\sum_{y\in\mathcal Y}|y\rangle\langle y|
                         \otimes\mathcal N_{h,y}(\rho),
 \qquad \sum_y\mathcal N_{h,y}\text{ is CPTP}.
 \label{eq:instrument}
\end{equation}
Here $h$ labels the unknown noise model (below, simply the model), and $y\in\mathcal Y$ is the complete record supplied to the decoder, with $\mathcal Y$ the finite set of possible records. The records are stored in a classical register with space $\mathcal H_Y\coloneqq\operatorname{span}\{|y\rangle:y\in\mathcal Y\}$, so $\Phi_h$ maps $\mathcal B(\mathcal H_L)$ to $\mathcal B(\mathcal H_Y\otimes\mathcal H_L)$. The residual logical map $\mathcal N_{h,y}$ acts on $\mathcal B(\mathcal H_L)$ and returns the unnormalized logical output that accompanies $y$. It describes the logical action left after the circuit's own syndrome extraction and corrections, before the final recovery. Each $\mathcal N_{h,y}$ is completely positive, and $\Tr\mathcal N_{h,y}(\rho)$ is the probability of recording $y$ for input $\rho$. Any decoding map, leakage handling, or return to the code space is fixed as part of the experiment, and failure outcomes remain in the record. The decoder selects only the final recovery. The general framework allows noisy extraction and prescribed feedback, whereas the explicit $\mathsf F/\mathsf R$ construction uses ideal syndrome extraction and an identity logical schedule. Appendix~\ref{app:process-conventions} specifies our Choi conventions.

\subsection{Recovery actions and fidelity loss}

Let $\mathcal P_k$ be a phase-fixed set of $D^2$ Hermitian logical Paulis. A Pauli decoder $d$ applies the final correction $L\in\mathcal P_k$ with probability $d(L\mid y)$ when the record is $y$, where $\sum_Ld(L\mid y)=1$ for every $y$. A deterministic decoder assigns probability one to a single correction for each record. We allow randomized decoders because randomized choices attain the minimax values in \cref{sec:general,sec:family}. For Kraus operators $K_{h,y,\mu}\in\mathcal B(\mathcal H_L)$ of $\mathcal N_{h,y}$, define the unnormalized branch contributions
\begin{equation}
 w_h(y,L)\coloneqq D^{-2}\sum_\mu|\Tr(LK_{h,y,\mu})|^2.
 \label{eq:branch}
\end{equation}
We measure the quality of a recovery by its entanglement fidelity with the identity logical channel~\cite{schumacher1996}: the overlap of the final joint state with a maximally entangled state of the logical system and a reference system of the same dimension. This figure of merit suits a quantum memory for two reasons. It measures how well the memory preserves entanglement with the reference, and it is an affine function of the fidelity averaged over pure input states, $\bar F=(DF_{\mathrm e}+1)/(D+1)$ for entanglement fidelity $F_{\mathrm e}$~\cite{horodecki1999,nielsen2002}. It is also linear in the logical channel, so the fidelity contributed by each candidate recovery is linear in the unknown initial environment state. This linearity underlies the decision framework of \cref{sec:general} and makes entanglement fidelity the standard objective of channel-adapted and robust recovery~\cite{barnum2002,reimpell2005,fletcher2007,kosut2008}. For a Pauli decoder $d$, the entanglement fidelity and the noise-aware Pauli optimum are
\begin{equation}
 \begin{aligned}
 \Fid{h}(d)&=\sum_{y,L}d(L\mid y)w_h(y,L),\\
 \FP{h}&=\sum_y\max_L w_h(y,L).
 \end{aligned}
 \label{eq:fidelity}
\end{equation}
We call $\Floss{h}^{\mathrm P}(d)\coloneqq \FP{h}-\Fid{h}(d)$ the \emph{fidelity loss}. The noise-aware optimum uses the same record and the same recovery actions as the decoder. It knows $h$ but neither the realized environmental trajectory nor the unknown input. A final CPTP recovery $\mathcal R=\{\mathcal R_y\}_{y\in\mathcal Y}$ applies a CPTP map $\mathcal R_y:\mathcal B(\mathcal H_L)\to\mathcal B(\mathcal H_L)$ to the logical output when the record is $y$, and $\Fid{h}(\mathcal R)$ is the entanglement fidelity of $\sum_y\mathcal R_y\circ\mathcal N_{h,y}$. The choice $\mathcal R_y(\rho)=\sum_Ld(L\mid y)L\rho L$ recovers $\Fid{h}(d)$. We write $\FC{h}$ for the optimum over arbitrary record-dependent final CPTP recoveries and define the corresponding loss $\Floss{h}^{\mathrm{CPTP}}(\mathcal R)\coloneqq \FC{h}-\Fid{h}(\mathcal R)$. For coherent errors, worst-case measures such as the diamond distance can scale differently from the infidelity~\cite{gilchrist2005,sanders2016,kueng2016}. Half the diamond distance of a recovered logical channel from the identity is at least one minus its entanglement fidelity, and hence at least the corresponding fidelity loss. Every fidelity loss reported below is therefore also a lower bound on the worst-case error.

Pauli orthogonality gives $\sum_{y,L}w_h(y,L)=1$ and $\sum_Lw_h(y,L)=p_h(y)$, where $p_h(y)\coloneqq \Tr\mathcal N_{h,y}(I/D)$ is the probability of record $y$ for a maximally mixed logical input. Both identities hold even when syndrome probabilities depend on the input. For stochastic Pauli noise, $w_h(y,L)$ is the joint probability of the record and the corrected logical error class. For coherent noise, amplitudes of different physical error paths interfere inside the squared magnitude in Eq.~\eqref{eq:branch}, so the contributions are fidelity weights rather than probabilities of physical error trajectories.

Equivalently, these weights are the Pauli--Choi diagonal entries of $J(\mathcal N_{h,y})/D$, the logical branch Choi operator rescaled by the input dimension. Coherences between different logical Pauli histories can contribute to these diagonal entries when the intermediate operations are inserted (Appendix~\ref{app:choi-weights}).

\subsection{Final decoding and frame tracking}
\label{sec:frame-tracking}

Repeated syndrome extraction followed by a decoding decision based on the full record is an established memory workflow~\cite{krinner2022,google2025}. The recovery may be applied physically, tracked as a Pauli frame, or used to interpret a final destructive readout. Whether active and deferred corrections are equivalent depends on the noise and the circuit. When Pauli errors anticommute with coherent noise, tracking them without physical correction can suppress its accumulation across cycles~\cite[Sec.~IV~B]{witzel2026}. \Cref{lem:frame-defer} in \cref{app:frame-defer} shows that they are equivalent whenever every noise interval commutes with the $X$-type corrections and no logical gates intervene. The $X$-rotation models of \cref{sec:family} satisfy these conditions. Every syndrome-compatible final $X$-Pauli decoder then differs from the accumulated frame only by a record-dependent choice of $I$ or $\Xbar$. Changing the reference correction table therefore changes only a known logical frame, and at a given storage time all tables give the same attainable fidelities, noise-aware optima, and minimax fidelity loss. Quantities computed from a particular table, such as the coefficients and characteristic storage time of \cref{sec:family}, can differ between tables, so we fix one reference table before defining them. The general instrument framework does not require deferral.

\subsection{Calibration}

Calibration consists of experiments with known inputs, used to select the recovery for a later memory experiment on the unknown state, which we call the data run. Calibration runs and the data run share the unknown noise model $h$, while their environment states are prepared independently given $h$. Each access class, defined in Sec.~\ref{sec:access}, specifies the permitted preparations, controls, syndrome-extraction times, environment resets, and retained outputs. A retained output may be a classical transcript or a quantum state. A quantum output interacts with the data only at the final recovery, so the QEC circuit for the data run is the same with or without calibration.

In the identification criterion of Sec.~\ref{sec:general}, the environment dimension and all interactions are known and only the initial environment state is unknown. In the cost results, the code and rotation angle are known, and the only unknown is which of the two temporal models holds. Calibration cost counts physical noise intervals to isolate the role of extraction timing. Section~\ref{sec:access} specifies the free preparations, controls, and readout; runs, measurement settings, postprocessing, and wall-clock time are separate resources.

\subsection{Spatiotemporal Pauli process}

The Pauli twirl of a channel on $n$ qubits averages its conjugation by all physical Paulis:
\begin{equation}
 \mathcal T(\mathcal E)\coloneqq 4^{-n}\sum_{Q\in\mathcal P_n}
        \operatorname{Ad}_{Q}^{-1}\circ\mathcal E\circ
        \operatorname{Ad}_{Q},
 \qquad \operatorname{Ad}_Q(\rho)\coloneqq Q\rho Q^\dagger.
 \label{eq:pauli-twirl}
\end{equation}
An independent twirl at every noise interval removes the coherences between distinct Pauli histories. The surviving diagonal coefficients form the spatiotemporal Pauli process (SPP), the joint distribution over sequences of physical Pauli errors~\cite{kam2026}. The SPP retains the temporal correlations among Pauli errors and carries more information than the single-interval marginals. For a coherent process, it is generally a reduced description. For the physically randomized experiment, it gives a complete description when the randomization record is discarded. Retaining that record gives a conditional experiment that may require additional noise information (Sec.~\ref{sec:protection}).

\section{From calibration data to recovery decisions}
\label{sec:general}

This section asks what calibration must determine to select a recovery. We divide this question into three parts. First, which differences between the fidelities of candidate corrections do the calibration data determine, and which remain undetermined (\cref{sec:identification})? Second, when some differences remain undetermined, does the unknown model change which correction is optimal, or is a single decoder optimal for every model consistent with the data (\cref{sec:decision})? Third, with a finite calibration budget, how much fidelity must any recovery lose when the calibration outputs of different models are hard to distinguish (\cref{sec:calibration-loss})? \Cref{sec:shor-example} illustrates the first two questions, and \cref{app:computation} discusses the computational cost of evaluating these criteria. Sections~\ref{sec:family} and~\ref{sec:calibration} then establish the physical gaps and the matching costs.

\subsection{Identifying the quantities that select a recovery}
\label{sec:identification}

We describe the data circuit by a single known dilation shared by all candidate models: the system--environment interactions, syndrome extraction, and corrections are fixed and known, and the models differ only in the initial environment state $\rho_E$. A finite family of models always admits such a description, because a classical environment register can record which model acts. Let $\mathcal H_{E'}$ denote the final environment space. The corrected joint branches $\widetilde K_{y,\mu}$ map $\mathcal H_L\otimes\mathcal H_E$ to $\mathcal H_L\otimes\mathcal H_{E'}$ and have the Pauli expansion
\begin{equation}
 \begin{aligned}
 \widetilde K_{y,\mu}&=\sum_{L\in\mathcal P_k}L\otimes B_{y,L,\mu},\\
 E_{y,L}&\coloneqq \sum_\mu B_{y,L,\mu}^\dagger B_{y,L,\mu}.
 \end{aligned}
 \label{eq:environment-effects}
\end{equation}
Here $B_{y,L,\mu}\coloneqq D^{-1}\Tr_{\mathcal H_L}[(L\otimes I_{E'})\widetilde K_{y,\mu}]$ belongs to $\mathcal B(\mathcal H_E,\mathcal H_{E'})$. The fidelity effects $E_{y,L}$ act on $\mathcal H_E$. The expansion is unique because $\Tr(LL')=D\delta_{L,L'}$. The index $\mu$ includes unobserved outcomes. Writing $w_{\rho_E}(y,L)$ for $w_h(y,L)$ when the initial environment state is $\rho_E$, orthogonality of the logical Paulis gives
\begin{equation}
 \begin{gathered}
 w_{\rho_E}(y,L)=\Tr(\rho_EE_{y,L}),\\
 E_{y,L}\ge0,\qquad\sum_{y,L}E_{y,L}=I_E.
 \end{gathered}
 \label{eq:score}
\end{equation}
The difference between the scores of two corrections at the same record is therefore the expectation of a known difference operator in $\operatorname{Herm}(\mathcal H_E)$,
\begin{equation}
 \Delta E_{y;L,L'}\coloneqq E_{y,L}-E_{y,L'}.
 \label{eq:contrast}
\end{equation}
A calibration setting runs a known circuit, with known logical inputs, on an independently prepared copy of the environment state and then measures. Its outcome $i$ occurs with probability $\Tr(\rho_EC_i)$, where the environment effect $C_i$, with $0\le C_i\le I_E$, is the measurement operator of that outcome pulled back through the known calibration circuit. The index $i$ runs over all outcomes of all permitted settings, and the effects of each setting sum to $I_E$. Measurements on a retained quantum output are included in the same way. Write $\mathbf p\coloneqq(\Tr(\rho_EC_i))_i$ for the exact calibration probabilities. The fidelity effects need not be accessible within the permitted calibration settings.

\begin{figure}[tbp]
\centering
\begin{tikzpicture}[x=1cm,y=1cm,font=\footnotesize,line cap=round]
 % the calibration span, drawn as a plane through the origin
 \filldraw[draw=qecpetrol,fill=qecpetrol!10,line width=0.7pt]
   (0,0) -- (5.4,0) -- (6.8,1.35) -- (1.4,1.35) -- cycle;
 \node[anchor=south west,text=qecpetrol] at (0.3,0.04)
   {$\mathcal V_{\rm cal}=\operatorname{span}_{\mathbb R}\{I_E,C_i\}$};
 \node[anchor=north east,text=black!60] at (8.2,-0.08)
   {$\operatorname{Herm}(\mathcal H_E)$};
 % two difference operators drawn from the origin
 \coordinate (O) at (1.55,0.62);
 \fill (O) circle (1.1pt);
 \draw[qecsaffron!70!black,line width=1pt,-{Latex[length=1.6mm]}]
   (O) -- (3.6,1.02) node[anchor=west,inner sep=2pt,text=black]
   {$\Delta E\in\mathcal V_{\rm cal}$};
 \draw[qecsaffron!70!black,line width=1pt,-{Latex[length=1.6mm]}]
   (O) -- (2.45,2.15) node[anchor=south east,inner sep=1pt,text=black]
   {$\Delta E\notin\mathcal V_{\rm cal}$};
 \draw[qecraspberry,densely dotted,line width=0.7pt,-{Latex[length=1.3mm]}]
   (2.45,0.93) -- node[anchor=west,inner sep=2pt,text=qecraspberry] {$Q_E$}
   (2.45,2.08);
 % states with the same calibration data
 \coordinate (projection) at (5.35,0.72);
 \draw[qecraspberry,dashed,line width=0.7pt] (projection) -- (5.35,2.45);
 \fill[qecpetrol] (projection) circle (1.3pt);
 \node[anchor=north,text=qecpetrol,inner sep=2pt] at (projection) {data $\mathbf p$};
 \fill[qecraspberry] (5.35,1.62) circle (1.4pt);
 \fill[qecraspberry] (5.35,2.2) circle (1.4pt);
 \node[anchor=west,text=qecraspberry,inner sep=3pt,align=left] at (5.35,1.91)
   {$\rho_E\pm\eta Q_E$\\same $\mathbf p$};
\end{tikzpicture}
\caption{\label{fig:identification}Geometry of \cref{prop:span} in $\operatorname{Herm}(\mathcal H_E)$ with the Hilbert--Schmidt inner product. Since every calibration probability has the form $\Tr(\rho_EC_i)$, the data $\mathbf p$ fix only the orthogonal projection of $\rho_E$ onto the calibration span $\mathcal V_{\rm cal}$ [\cref{eq:span}], marked $\mathbf p$, and all states on the dashed line give the same data. If $\Delta E=\Delta E_{y;L,L'}$ [\cref{eq:contrast}] lies in $\mathcal V_{\rm cal}$, the score difference $\Tr(\rho_E\Delta E)$ is constant along this line, so the data determine it. Otherwise $\Delta E$ has a nonzero orthogonal component $Q_E$, and for full-rank $\rho_E$ and small $\eta$ the states $\rho_E\pm\eta Q_E$ give the same data but different score differences.}
\end{figure}

\begin{proposition}[Exact identification of fidelity differences]
\label{prop:span}
Suppose the unknown initial environment state ranges over all states on its specified finite-dimensional Hilbert space. Exact calibration probabilities determine $\Tr(\rho_E\Delta E_{y;L,L'})$ for every such state if and only if
\begin{equation}
 \Delta E_{y;L,L'}\in\mathcal V_{\rm cal},\qquad
 \mathcal V_{\rm cal}\coloneqq \operatorname{span}_{\mathbb R}\{I_E,C_i\}.
 \label{eq:span}
\end{equation}
\end{proposition}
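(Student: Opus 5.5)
The plan is a standard linear-algebra argument in the real inner-product space $\operatorname{Herm}(\mathcal H_E)$ with the Hilbert--Schmidt inner product $\langle A,B\rangle=\Tr(AB)$, following the geometry of \cref{fig:identification}. We interpret "determine" as: any two states $\rho_E,\rho_E'$ with $\Tr(\rho_EC_i)=\Tr(\rho_E'C_i)$ for all $i$ also satisfy $\Tr(\rho_E\Delta E)=\Tr(\rho_E'\Delta E)$, where $\Delta E\coloneqq\Delta E_{y;L,L'}$. The term $I_E$ in $\mathcal V_{\rm cal}$ encodes the normalization $\Tr\rho_E=1$, which is known without any measurement.

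\emph{Sufficiency.} Suppose $\Delta E=a_0I_E+\sum_ia_iC_i$ with real coefficients. Then
\[
\Tr(\rho_E\Delta E)=a_0+\sum_ia_ip_i
\]
is an explicit affine function of $\mathbf p$. Hence $\mathbf p$ fixes the score difference for every state. If the effects are linearly dependent the coefficients are not unique, but any choice gives the same value on every state, so the map from data to score difference is well defined.

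\emph{Necessity.} Suppose $\Delta E\notin\mathcal V_{\rm cal}$. Decompose $\Delta E=P+Q_E$ with $P\in\mathcal V_{\rm cal}$ and $Q_E\neq0$ orthogonal to $\mathcal V_{\rm cal}$. Orthogonality to $I_E$ gives $\Tr Q_E=0$, and orthogonality to each $C_i$ gives $\Tr(Q_EC_i)=0$. Take $\rho_0\coloneqq I_E/\dim\mathcal H_E$, which is full rank; any full-rank state would do. For $0<\eta\le\lambda_{\min}(\rho_0)/\|Q_E\|_\infty$, the operators $\rho_\pm\coloneqq\rho_0\pm\eta Q_E$ are positive semidefinite and have unit trace, so both are admissible states under the hypothesis that $\rho_E$ ranges over all states. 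They produce identical calibration data, since $\Tr(\rho_\pm C_i)=\Tr(\rho_0C_i)$. Their score differences, however, differ by
\[
\Tr(\rho_+\Delta E)-\Tr(\rho_-\Delta E)=2\eta\Tr(Q_E\Delta E)=2\eta\Tr(Q_E^2)>0,
\]
where we used $\Tr(Q_EP)=0$. Therefore the data do not determine $\Tr(\rho_E\Delta E)$.

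The only delicate step is necessity, specifically ensuring that the perturbation stays inside the state space. This is why the proof starts from an interior (full-rank) point and why the hypothesis that the model set contains all states matters. Including $I_E$ in the span is what makes the perturbation $Q_E$ traceless, so that $\rho_\pm$ remain normalized. Finite dimensionality guarantees that the orthogonal decomposition exists and that $\|Q_E\|_\infty$ is finite.
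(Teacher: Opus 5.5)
Your proposal is correct and follows essentially the same route as the paper. Sufficiency is by linearity, and necessity uses the Hilbert--Schmidt orthogonal component $Q_E$ to build the indistinguishable states $I_E/\dim\mathcal H_E\pm\eta Q_E$ with $\eta$ of order $1/(\dim\mathcal H_E\,\|Q_E\|_\infty)$. Your non-strict bound on $\eta$ is also fine, since positive semidefiniteness is all that is required.
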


If the difference operator lies in the span, its expectation is a known linear combination of the calibration probabilities~\cite[Sec.~II~A]{dariano2008}; see also Ref.~\cite[Sec.~V]{keith2018}. Otherwise, its component orthogonal to the span in the Hilbert--Schmidt inner product is a nonzero Hermitian $Q_E$ with $\Tr Q_E=\Tr(Q_EC_i)=0$ and $\Tr(Q_E\Delta E_{y;L,L'})\ne0$. For small $\eta$, the states $I_E/\dim\mathcal H_E\pm\eta Q_E$ then give identical calibration data but distinct values of the fidelity difference, which proves necessity. \Cref{fig:identification} illustrates this geometry.

The criterion assumes that the environment state is unrestricted. For a promised family of states, or for data consistent only with boundary states, fewer observables can suffice. Collective measurements on several independently prepared copies of the environment state have probabilities linear in a tensor power of $\rho_E$, but their identifiability analysis must retain the product-state constraint. The criterion concerns exact calibration probabilities. Section~\ref{sec:calibration-loss} addresses finite budgets.

\subsection{When does the missing information change the decision?}
\label{sec:decision}

For exact calibration probabilities $\mathbf p$, let $\Lambda_{\mathbf p}$ be the set of models consistent with those data. 
The recovery choice can be determined even when \cref{prop:span} leaves a fidelity difference undetermined, because the choice depends only on which correction has the largest score at each record.
The following criterion states when a single decoder is optimal for every model in $\Lambda_{\mathbf p}$.

\begin{proposition}[A common optimal Pauli decoder]
\label{prop:common-optimum}
A single Pauli decoder attains $\FP{h}$ for every $h\in\Lambda_{\mathbf p}$ if and only if, for every record $y$,
\begin{equation}
 \bigcap_{h\in\Lambda_{\mathbf p}}
       \operatorname*{arg\,max}_{L\in\mathcal P_k}w_h(y,L)
       \ne\varnothing.
 \label{eq:common-optimum}
\end{equation}
Every common optimal randomized decoder is supported on this intersection at each record.
\end{proposition}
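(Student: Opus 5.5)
The argument is a recordwise optimality comparison based on the linear form of the fidelity in \cref{eq:fidelity}. Fix any Pauli decoder $d$ and any model $h$. Then
\begin{equation*}
 \FP{h}-\Fid{h}(d)=\sum_y\sum_L d(L\mid y)\Bigl[\max_{L'}w_h(y,L')-w_h(y,L)\Bigr].
\end{equation*}
Every summand is nonnegative, because $d(\cdot\mid y)$ is a probability distribution and each bracket is nonnegative. So $d$ attains $\FP{h}$ exactly when, for every record $y$, $d(L\mid y)>0$ implies $L\in\arg\max_{L'}w_h(y,L')$. In words, $d(\cdot\mid y)$ must be supported on $A_h(y)\coloneqq\arg\max_L w_h(y,L)$. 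This gives a clean characterization of the optimal decoders for one model, and the remaining steps only intersect over models.

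\emph{Sufficiency.} Assume \cref{eq:common-optimum} holds. For each record $y$, choose some $L_y\in\bigcap_{h\in\Lambda_{\mathbf p}}A_h(y)$ and set $d(L\mid y)=\delta_{L,L_y}$. This deterministic decoder is supported in $A_h(y)$ for every $h\in\Lambda_{\mathbf p}$ and every $y$. By the characterization above, it attains $\FP{h}$ for all $h\in\Lambda_{\mathbf p}$. If the intersection contains several corrections, we may use any randomization over it.

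\emph{Necessity and the support claim.} Suppose $d$ attains $\FP{h}$ for every $h\in\Lambda_{\mathbf p}$. By the characterization, $\operatorname{supp}d(\cdot\mid y)\subseteq A_h(y)$ for each such $h$ and each $y$. Hence the support lies in $\bigcap_h A_h(y)$. The support is nonempty because $\sum_L d(L\mid y)=1$. Therefore the intersection is nonempty, which proves \cref{eq:common-optimum}, and $d$ is supported on it, which is the final sentence of the statement.

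I expect only one small point to need care. The characterization requires that a positive summand can never be offset by another term. This holds because all summands are nonnegative. It holds even for records with $p_h(y)=0$: there every $w_h(y,L)=0$, so $A_h(y)=\mathcal P_k$ and the condition is vacuous. The argument also works without change when $\Lambda_{\mathbf p}$ is infinite, since no finiteness is used, and $\mathcal P_k$ is finite so each $A_h(y)$ is nonempty.
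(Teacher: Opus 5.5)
Your proposal is correct and follows the paper's proof. The paper likewise writes the loss as the nonnegative sum $\sum_{y,L}d(L\mid y)[\max_{L'}w_h(y,L')-w_h(y,L)]$ and concludes that it vanishes for every $h\in\Lambda_{\mathbf p}$ exactly when each action with positive probability maximizes every consistent model's score at that record; your remark on zero-probability records matches its observation that such records impose no restriction.
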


\begin{proof}
The loss is a sum of nonnegative terms,
\[
 \Floss{h}^{\mathrm P}(d)=\sum_{y,L}d(L\mid y)
              [\max_{L'}w_h(y,L')-w_h(y,L)].
\]
It vanishes for all models precisely when every action assigned positive probability maximizes the score of every model consistent with the calibration data at that record. Records with zero probability under every model impose no restriction.
\end{proof}

Determining all fidelity differences is sufficient for a common optimum but not necessary, because a difference can vary without changing sign. The decision changes when an unobserved perturbation passes through a tie: if two actions tie for the maximum and strictly exceed all other actions at a full-rank state, and an unobserved direction $Q_E$ changes their fidelity difference, then small opposite perturbations select different unique actions. Every common decoder consequently has a strictly positive loss on one of them. Appendix~\ref{app:framework-proofs} gives the quantitative bound. For the models consistent with the calibration data, we measure the cost of this ambiguity by
\begin{equation}
 \Delta F^{\mathrm P,\star}(\Lambda_{\mathbf p})
      \coloneqq \inf_d\sup_{h\in\Lambda_{\mathbf p}}\Floss{h}^{\mathrm P}(d).
 \label{eq:general-minimax}
\end{equation}
This criterion compares each model with its own optimum, rather than maximizing the worst-case absolute fidelity, one objective considered in robust recovery~\cite{kosut2008}. Appendix~\ref{app:framework-proofs} gives its dual form for finitely many models.

\subsection{From calibration distinguishability to fidelity loss}
\label{sec:calibration-loss}

The third question requires a quantitative bound. Consider two candidate models, labeled $h=0,1$. To include retained quantum calibration outputs, let the state $\sigma_h$ on $\mathcal H_{\rm cal}$ be the final accessible calibration state under model $h$, carried by $Q_{\rm cal}$ in \cref{fig:setting}. The space $\mathcal H_{\rm cal}$ includes any retained classical registers. Given $h$, the environment for the data run is prepared independently of all calibration runs. A common final CPTP processor $\mathcal G$ maps operators on $\mathcal H_{\rm cal}\otimes\mathcal H_Y\otimes\mathcal H_L$ to operators on $\mathcal H_L$. It may act jointly on $\sigma_h$ and the recorded logical data output. For a calibration state $\sigma$, we write $\Fid{h}(\mathcal G,\sigma)$ for the entanglement fidelity of the data run under model $h$ when $\mathcal G$ acts jointly on $\sigma$ and the recorded output of $\Phi_h$, and abbreviate $\Fid{h}(\mathcal G)\coloneqq \Fid{h}(\mathcal G,\sigma_h)$. We define the equal-prior conflict without calibration by
\begin{equation}
 \Gamma_{01}^{\mathrm{CPTP}}
   \coloneqq \frac{\FC{0}+\FC{1}}{2}
       -\sup_{\mathcal R}\frac{\Fid{0}(\mathcal R)+\Fid{1}(\mathcal R)}2,
 \label{eq:general-conflict}
\end{equation}
where $\mathcal R$ ranges over common full-record final CPTP maps.

\begin{proposition}[Calibration distinguishability bounds recovery]
\label{prop:calibration-loss}
Under these assumptions, every common joint final processor obeys
\begin{equation}
 \begin{aligned}
 &\max_{h=0,1}\{\FC{h}-\Fid{h}(\mathcal G,\sigma_h)\}\\
 &\qquad\ge\left[\Gamma_{01}^{\mathrm{CPTP}}
              -\tfrac12\Dtr(\sigma_0,\sigma_1)\right]_+,
 \end{aligned}
 \label{eq:program-bound}
\end{equation}
where $\Dtr(\sigma_0,\sigma_1)\coloneqq \|\sigma_0-\sigma_1\|_1/2$ and $[x]_+\coloneqq \max\{x,0\}$.
\end{proposition}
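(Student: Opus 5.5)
The plan is to bound the maximum by the average and to swap $\sigma_1$ for $\sigma_0$ at a cost set by the trace distance. First, $\max_h\{\FC{h}-\Fid{h}(\mathcal G,\sigma_h)\}\ge\frac12\sum_h[\FC{h}-\Fid{h}(\mathcal G,\sigma_h)]$. Since the left side is nonnegative (each term is, because $\FC{h}$ is an optimum over processors, as argued below), it suffices to show
\[
 \frac{\Fid{0}(\mathcal G,\sigma_0)+\Fid{1}(\mathcal G,\sigma_1)}{2}
 \le \sup_{\mathcal R}\frac{\Fid{0}(\mathcal R)+\Fid{1}(\mathcal R)}{2}+\tfrac12\Dtr(\sigma_0,\sigma_1),
\]
after which rearranging with \cref{eq:general-conflict} gives \cref{eq:program-bound}.

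\emph{Step 1 (fixing a calibration state gives a full-record recovery).} For any fixed state $\sigma$ on $\mathcal H_{\rm cal}$, the map $\mathcal R^\sigma$ acting on the record register and logical system by $X\mapsto\mathcal G(\sigma\otimes X)$ is CPTP on $\mathcal B(\mathcal H_Y\otimes\mathcal H_L)$. Because the data-run environment is independent of calibration given $h$, the joint input is $\sigma\otimes(\id_{\rm ref}\otimes\Phi_h)(\phi^+)$. Writing $\mathcal R^\sigma_y(\rho)\coloneqq\mathcal R^\sigma(|y\rangle\langle y|\otimes\rho)$ then gives a record-dependent final CPTP recovery with $\Fid{h}(\mathcal G,\sigma)=\Fid{h}(\mathcal R^\sigma)$. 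In particular $\Fid{h}(\mathcal G,\sigma)\le\FC{h}$, which supplies the nonnegativity used above.

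\emph{Step 2 (continuity in $\sigma$).} The map $\sigma\mapsto\Fid{1}(\mathcal G,\sigma)$ is affine. It equals $\Tr[M_1\sigma]$, where $M_1$ is obtained by applying the adjoint of $\mathcal G$ to the maximally entangled projector and pairing with the data-run output, partially traced against the reference and data registers. Since $\mathcal G$ is CPTP and the data-run state is normalized, $0\le M_1\le I$. Hence
\[
 |\Fid{1}(\mathcal G,\sigma_1)-\Fid{1}(\mathcal G,\sigma_0)|
 =|\Tr[M_1(\sigma_1-\sigma_0)]|\le\Dtr(\sigma_0,\sigma_1),
\]
using the Helstrom bound $\sup_{0\le M\le I}\Tr[M(\sigma_1-\sigma_0)]=\|\sigma_1-\sigma_0\|_1/2$.

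\emph{Step 3 (combine).} With $\sigma=\sigma_0$ in both fidelities, $\frac12[\Fid{0}(\mathcal G,\sigma_0)+\Fid{1}(\mathcal G,\sigma_0)]=\frac12[\Fid{0}(\mathcal R^{\sigma_0})+\Fid{1}(\mathcal R^{\sigma_0})]$, which is at most the supremum in \cref{eq:general-conflict}. Replacing $\sigma_0$ by $\sigma_1$ in the second fidelity costs at most $\frac12\Dtr$ by Step 2, which gives the displayed inequality.

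\emph{Main obstacle.} The care needed is in Step 2: verifying $0\le M_1\le I$ rigorously. I would write $\Fid{1}(\mathcal G,\sigma)=\Tr[(P_{\phi^+}\otimes I)\,(\id\otimes\mathcal G)(\sigma\otimes\omega_1)]$, with $\omega_1$ the normalized reference--record--logical state. Then $M_1=\Tr_{\rm ref,Y,L}[(I_{\rm cal}\otimes\omega_1^{T\text{-free pairing}})(\id\otimes\mathcal G)^\dagger(P_{\phi^+})]$, and $0\le M_1\le I$ follows because $(\id\otimes\mathcal G)^\dagger$ is unital and positive, so $0\le(\id\otimes\mathcal G)^\dagger(P_{\phi^+})\le I$. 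Pairing with the normalized positive $\omega_1$ preserves these bounds. The independence assumption is exactly what makes the joint input a product $\sigma\otimes\omega_h$.
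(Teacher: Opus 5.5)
Your proposal is correct and follows the paper's argument essentially step for step. The paper also writes the model-1 fidelity as an effect $\Tr(M_h\sigma)$ with $0\le M_h\le I$, and replaces $\sigma_1$ by $\sigma_0$ at cost $\tfrac12\Dtr(\sigma_0,\sigma_1)$ in the average. It likewise treats $X\mapsto\mathcal G(\sigma_0\otimes X)$ as an admissible common full-record recovery, and it obtains the positive part from the same state-insertion argument for nonnegativity of each loss. The only difference is cosmetic: the paper gets $0\le M_h\le I$ directly from $\Tr(M_h\sigma)$ being a probability for every density operator $\sigma$, whereas you derive it from unitality and positivity of the adjoint processor.
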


For a fixed data model, the entanglement fidelity after $\mathcal G$ is the probability of an effect on the calibration state, so replacing $\sigma_1$ by $\sigma_0$ changes it by at most their trace distance~\cite[Theorem~3.4]{watrous2018}. With the same state inserted under both models, $\mathcal G$ becomes an admissible common recovery, and averaging the two scores gives Eq.~\eqref{eq:program-bound}. Appendix~\ref{app:framework-proofs} contains the complete argument.

Consequently, if every protocol with interval budget $M$ produces calibration states within trace distance $\delta(M)$, the loss is at least $[\Gamma_{01}^{\mathrm{CPTP}}-\delta(M)/2]_+$. A conflict creates the need for information, and the calibration signal sets its cost. When calibration is measured only to select a Pauli decoder, a multiplicative bound relative to the Pauli benchmark also holds (Appendix~\ref{app:framework-proofs}). Equation~\eqref{eq:program-bound} is the version that covers joint quantum processing.

\subsection{A two-round Shor example}
\label{sec:shor-example}

The following small example shows how the first two questions guide the design of a calibration experiment. It complements the long-storage, order-one gap of Sec.~\ref{sec:family}. We use the Shor $[[9,1,3]]$ code, with encoding isometry $V$, and let noise act only on qubits $1,2,3$ of its first repetition block. The physical normalizer $P_{\rm sh}\coloneqq X_1X_2X_3$ induces a nontrivial logical Pauli $L_{\rm sh}\coloneqq V^\dagger P_{\rm sh}V$ (logical $Z$ in the conventional Shor basis). Two environment qubits, with $\mathcal H_E=\mathbb C^2\otimes\mathbb C^2$, interact with the code through
\begin{equation}
 \begin{gathered}
 U_\nu\coloneqq \prod_{j=1}^3e^{-i\vartheta_jX_j\otimes Z_{E_\nu}},\qquad \nu=1,2,\\
 Z_{12}\coloneqq Z_{E_1}Z_{E_2}.
 \end{gathered}
 \label{eq:shor-interactions}
\end{equation}
After each interval we extract the syndrome and apply the standard single-error correction table. We label the four possible outcomes by the error they correct:
\begin{center}
\begin{tabular}{c|cccc}
$(Z_1Z_2,Z_2Z_3)$ & $(+,+)$ & $(-,+)$ & $(-,-)$ & $(+,-)$\\
\hline
$s$ & $0$ & $1$ & $2$ & $3$\\
$R_s$ & $I$ & $X_1$ & $X_2$ & $X_3$
\end{tabular}
\end{center}
The other checks have trivial syndrome. After removal of an environment-only phase that cancels from all effects, each branch takes the form $u_sI-i v_sL_{\rm sh}\otimes Z_{E_\nu}$, with real $u_s,v_s$ and $q_s=u_s^2+v_s^2$. The syndrome effects of complete QEC cycles and all effects determining the SPP are therefore proportional to $I_E$, so these calibration data span only $I_E$, whatever the initial environment state.

The two-round Pauli scores, by contrast, depend on $\kappa\coloneqq \Tr(\rho_EZ_{12})$. For the record $y=(s_1,s_2)$, direct multiplication gives
\begin{align}
 \Delta E_{s_1s_2;I,L_{\rm sh}}
  ={}&(u_{s_1}^2-v_{s_1}^2)(u_{s_2}^2-v_{s_2}^2)I_E\notag\\
     &-4u_{s_1}v_{s_1}u_{s_2}v_{s_2}Z_{12}.
 \label{eq:shor-all-contrasts}
\end{align}
The histories $(I,I)$ and $(L_{\rm sh},L_{\rm sh})$ have the same final logical action $I$, while $(I,L_{\rm sh})$ and $(L_{\rm sh},I)$ both produce $L_{\rm sh}$. Their amplitudes interfere within each final logical class. The correlation-dependent contributions cancel in the sum of the two recovery weights, leaving the syndrome probability unchanged, but add in their difference. Appendix~\ref{app:shor-history-coherence} expresses this mechanism in the logical process tensor.

Every difference operator for Pauli recoveries thus lies in $\operatorname{span}_{\mathbb R} \{I_E,Z_{12}\}$. Choose $\tan\vartheta_2=\tan\vartheta_3=\zeta$ and $\tan\vartheta_1=\sqrt3\zeta^2$ with $0<\zeta\le1/5$. This range ensures that every record other than $(1,1)$ prefers $I$ (Appendix~\ref{app:shor-example}). Then $u_1^2=3v_1^2$, and the exceptional record $(1,1)$ has fidelity difference
\begin{equation}
 \begin{gathered}
 w_{\rho_E}(11,I)-w_{\rho_E}(11,L_{\rm sh})
       =\frac{q_1^2}{4}(1-3\kappa),\\
 q_1=\frac{4\zeta^4}{(1+3\zeta^4)(1+\zeta^2)^2}.
 \end{gathered}
 \label{eq:shor-crossing}
\end{equation}
The optimal correction on this record therefore switches at $\kappa=1/3$, a value that syndrome data and the SPP cannot reveal. In the family $\rho_\kappa\coloneqq (I_E+\kappa Z_{12})/4$ of unbiased signs, the fixed-sign and resampled models are $\kappa=1$ and $\kappa=0$. Between these two models, the exact full-record minimax fidelity loss is $q_1^2/6=O(\zeta^8)$, including arbitrary final CPTP maps (Appendix~\ref{app:shor-example}). Because the conflict occurs only on a rare record, the example illustrates calibration design rather than a large loss.

Now consider a separate calibration run in which extraction is delayed until after both intervals. Opposite signs then cancel, and equal signs double the rotation angles. If $p_*>0$ is the probability of a nonzero syndrome at doubled angles, the nonzero-syndrome event has environment effect
\begin{equation}
 \begin{aligned}
 C_{\rm nz}&=\frac{p_*}{2}(I_E+Z_{12}),\\
 p_*&=1-\prod_{j=1}^3\cos^2(2\vartheta_j)
         -\prod_{j=1}^3\sin^2(2\vartheta_j).
 \end{aligned}
 \label{eq:shor-calibration}
\end{equation}
The calibration span is now $\operatorname{span}_{\mathbb R}\{I_E,Z_{12}\}$, so Proposition~\ref{prop:span} guarantees that this single additional rate determines every two-round Pauli recovery choice without reconstructing the environment state. The data circuit itself still extracts after every interval. In this way the framework identifies an undetermined fidelity difference and a calibration experiment that measures it.
\section{Temporal correlations and incompatible recoveries}
\label{sec:family}

Section~\ref{sec:general} showed that missing calibration information costs fidelity when it leaves undetermined a fidelity difference that changes the optimal correction. We now construct two temporal noise models with identical syndrome statistics and SPPs whose optimal final corrections differ, so that at suitable storage times the worst-case fidelity loss approaches $1/2$, even with the full syndrome record.

\subsection{Fixed and independently resampled sign models}
During one physical noise interval, every qubit undergoes the same coherent rotation. The resulting unitary $U_\xi(\theta)$ is
\begin{equation}
 U_\xi(\theta)\coloneqq \bigotimes_{j=1}^n e^{-i\xi\theta X_j},
 \qquad \xi\in\{+1,-1\}.
 \label{eq:physical}
\end{equation}
Model $\mathsf F$ draws one fair sign at the beginning of a run and keeps it. Model $\mathsf R$ draws an independent fair sign in every noise interval, whether or not a syndrome is extracted. The parameter $\theta>0$ is half the Bloch-sphere rotation angle.

Physically, both models describe a spatially uniform stray field that rotates every qubit by the same angle with a random sign. Model $\mathsf F$ idealizes quasi-static noise, whose correlation time exceeds the storage time~\cite{pataki2024}, and model $\mathsf R$ idealizes noise that decorrelates between successive intervals. In the terminology of multi-time quantum processes~\cite{pollock2018}, model $\mathsf R$ is Markovian, because its process is a product of identical single-interval channels, each an equal mixture of conjugation by $U_+$ and $U_-$. Model $\mathsf F$ is non-Markovian. Its fixed sign acts as a classical memory that correlates all intervals, and its process is an equal mixture of two product processes, which is not itself a product.

Syndrome measurement can suppress logical coherence without eliminating it~\cite{beale2018}. For a fixed sign, coherent physical rotations across the code block can leave a residual logical rotation through interference within a syndrome sector~\cite{huang2019}, and this rotation can accumulate over repeated QEC cycles~\cite[Sec.~3]{iverson2020}. The $\mathsf F/\mathsf R$ comparison isolates the role of temporal correlations in this mechanism by keeping the single-interval averaged channel and spatial correlations fixed.

\begin{definition}[Code and fixed $X$-correction lookup table]
\label{def:family}
The CSS code $\C$ encodes one qubit in the common $+1$ eigenspace of positive-phase stabilizer generators. Its $X$ stabilizers have even weight and its $X$ distance is $d_X=2r+1\ge3$. A fixed $X$-correction lookup table assigns an $X$ correction $R_s$ to every reachable syndrome $s$ and corrects all physical $X$ errors of weight at most $r$. Write $\Pi_s\in\mathcal B(\mathcal H_{\rm phys})$ for the corresponding syndrome projector. Subscripts $\C$ on constants refer to the code together with this lookup table.
\end{definition}
The specified stabilizer eigenvalues fix the relative signs of interfering physical error amplitudes; changing the encoding eigenspace can change the logical channel~\cite{rajmohan2026}. The Shor, Steane, and odd-distance rotated surface codes satisfy these conditions. The code and lookup table stay fixed as $\theta\to0$. In the data run, ideal syndrome extraction and lookup-table correction follow every interval, the logical schedule between cycles is the identity, and the entire syndrome history is available for final recovery. Section~\ref{sec:frame-tracking} explains the equivalence of correction frames at the same storage time. Unless stated otherwise, these conventions apply throughout Secs.~\ref{sec:family}--\ref{sec:protection}. Subscripts on $O$, $o$, and $\Theta$ list the fixed quantities on which the implied constants depend.

The parity conditions of \cref{def:family} make $Z^{\otimes n}$ a physical representative of logical $\Zbar$: it commutes with all stabilizers and anticommutes with the odd-weight logical $\Xbar$. Hence
\begin{equation}
 Z^{\otimes n}\simeq\Zbar,\qquad
 Z^{\otimes n}U_\xi Z^{\otimes n}=U_{-\xi}=U_\xi^*,
 \label{eq:global-z-conjugation}
\end{equation}
where complex conjugation is taken in the computational basis. This parity structure makes the syndromes independent of the sign (\cref{sec:equivalence}) and underlies the echo and randomization protocols of \cref{sec:protection}.

The two models fit the common-interaction framework of \cref{sec:general}. For a given storage time $T$, take one hidden sign register per interval, so that $\mathcal H_E=(\mathbb C^2)^{\otimes T}$ with $Z|\xi\rangle=\xi|\xi\rangle$ for $\xi=\pm1$, and the initial environment states
\begin{align}
 \rho_E^{\mathsf F}&\coloneqq \frac12\sum_{\xi=\pm1}
       (|\xi\rangle\langle\xi|)^{\otimes T},\notag\\
 \rho_E^{\mathsf R}&\coloneqq (I_2/2)^{\otimes T}.
 \label{eq:classical-environment}
\end{align}
In interval $j$, the known interaction reads register $j$ and applies \cref{eq:physical}. Each fresh run uses a new block of registers, drawn independently given the model. Neither the noise-aware decoder nor the common decoder observes the realized signs.

\subsection{Exact equality of syndrome statistics and SPPs}
\label{sec:equivalence}
Each one-cycle corrected branch has a simple form. Up to an irrelevant scalar phase, the branch $K_s(\xi)\coloneqq V^\dagger R_s\Pi_sU_\xi(\theta)V$ for syndrome $s$ is
\begin{equation}
 K_s(\xi)=u_sI-i\xi v_s\Xbar,
 \qquad K_s(\xi)^\dagger K_s(\xi)=q_sI,
 \label{eq:branch-form}
\end{equation}
where $u_s,v_s\in\mathbb R$, $q_s\coloneqq u_s^2+v_s^2$, and $\sum_sq_s=1$. These coefficients do not depend on $\xi$. To derive \cref{eq:branch-form}, divide the physical $X$ errors with syndrome $s$ into those that the correction maps to logical $I$ and those it maps to $\Xbar$. The two groups have opposite weight parity, so, up to a common phase, the amplitudes of the first group are real and those of the second are imaginary. Positive-phase stabilizers introduce no relative signs within a group. The cross terms in $K_s^\dagger K_s$ therefore cancel, and the effect is a scalar. Conditional logical unitarity and input-independent syndrome probabilities have also been established using time-reversal symmetry under suitable code conditions~\cite[Theorem~2 and Corollary~2.1]{cheng2025}; see also Ref.~\cite{behrends_beri_2025_coherence}. Here the coset calculation also establishes independence from the noise sign. \Cref{app:branch} gives the coset formulas.

\begin{proposition}[Identical observed statistics]\label{thm:equivalence}
For every sign trajectory $\boldsymbol\xi=(\xi_1,\ldots,\xi_T)$ and fixed number of cycles,
\begin{equation}
 \Pr(s_1,\ldots,s_T\mid\boldsymbol\xi)=\prod_{j=1}^Tq_{s_j}.
 \label{eq:syndrome}
\end{equation}
This identity remains valid with syndrome-adaptive trace-preserving logical controls and a private quantum register. Models $\mathsf F$ and $\mathsf R$ also have identical SPPs, obtained by Pauli-twirling each noise interval.
\end{proposition}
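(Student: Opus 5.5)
The plan is to build everything on the one-cycle branch form \cref{eq:branch-form}, whose key feature is $K_s(\xi)^\dagger K_s(\xi)=q_sI$ with $q_s$ independent of $\xi$. Each cycle acts on the code space, because the corrected branch returns to it. Hence, for a fixed sign trajectory $\boldsymbol\xi$, the unnormalized logical state after cycles $1,\dots,j$ is $K_{s_j}(\xi_j)\cdots K_{s_1}(\xi_1)\rho K_{s_1}(\xi_1)^\dagger\cdots K_{s_j}(\xi_j)^\dagger$.

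First, I would prove \cref{eq:syndrome} by induction on $T$. The probability of the record equals the trace of this state. Cyclicity of the trace and $K_{s_T}^\dagger K_{s_T}=q_{s_T}I$ peel off one factor $q_{s_T}$ at a time, leaving $\Tr\rho=1$. This gives $\prod_j q_{s_j}$ independently of $\boldsymbol\xi$ and of the input $\rho$. Averaging over any distribution of $\boldsymbol\xi$, in particular $\rho_E^{\mathsf F}$ or $\rho_E^{\mathsf R}$, then yields identical syndrome statistics for the two models.

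Second, for syndrome-adaptive trace-preserving logical controls and a private quantum register, I would replace the state by a joint state on $\mathcal H_L\otimes\mathcal H_A$. Between cycles, a control $\mathcal C_{s_1\ldots s_j}$ acts as a CPTP map on $L\otimes A$ chosen from the past record. The same induction works: the effect of cycle $j$ on the joint state is $(K_{s_j}^\dagger K_{s_j})\otimes I_A=q_{s_j}I$. This effect is scalar, so the conditional probability of $s_j$ given the past record and the current joint state equals $q_{s_j}$. Trace preservation of each $\mathcal C$ keeps the remaining trace equal to $\prod_{i<j}q_{s_i}$. The one point I would check carefully is that the controls are logical, so they keep the state in the code space, and that \cref{eq:branch-form} holds for arbitrary logical input operators, including ones entangled with $A$. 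Both follow because $K_s$ is defined as a logical operator $V^\dagger R_s\Pi_sU_\xi V$.

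Third, for the SPPs, twirling interval $j$ replaces $\operatorname{Ad}_{U_{\xi_j}}$ by its Pauli-diagonal part. Because $U_\xi=\bigotimes_j(\cos\theta\, I-i\xi\sin\theta\, X_j)$, the Pauli expansion of $U_\xi$ contains only $X$-type Paulis $X^{\mathbf a}$, with coefficients $(\cos\theta)^{n-|\mathbf a|}(-i\xi\sin\theta)^{|\mathbf a|}$. The twirled channel is therefore $\sum_{\mathbf a}|c_{\mathbf a}|^2\operatorname{Ad}_{X^{\mathbf a}}$ with $|c_{\mathbf a}|^2=\cos^{2(n-|\mathbf a|)}\theta\,\sin^{2|\mathbf a|}\theta$, which does not depend on $\xi$. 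Each interval's twirled channel is then the same fixed Pauli channel for every sign. The joint distribution over Pauli histories, for any sign trajectory, is the product of these identical marginals, and averaging over the sign distribution of $\mathsf F$ or $\mathsf R$ gives the same SPP.

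I expect no real obstacle. The most delicate point is the bookkeeping in the adaptive-control case, in particular being precise that the conditional syndrome distribution is $q_{s_j}$ regardless of the joint state at that time.
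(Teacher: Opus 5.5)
Your proposal is correct and follows the paper's argument. The scalar effect $K_s(\xi)^\dagger K_s(\xi)=q_sI$, tensored with the identity on the private register, is peeled off one cycle at a time. Twirling gives each $X$ string the sign-independent weight $(\sin^2\theta)^{|\mathbf a|}(\cos^2\theta)^{n-|\mathbf a|}$, so both models share the i.i.d.\ SPP of a single twirled interval. Your version spells out the induction and the adaptive-control bookkeeping that the paper compresses into one sentence.
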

The first statement follows by applying the scalar branch effects one cycle at a time, including to states entangled with the private register. For the second, twirling gives each physical $X$ error $P$ the probability $(\sin^2\theta)^{|P|}(\cos^2\theta)^{n-|P|}$, which does not depend on the sign, so every Pauli-history probability is the same under both models. Both SPPs are therefore the i.i.d. process of a single twirled interval, although $\mathsf F$ is non-Markovian. The twirl preserves correlations between Pauli errors~\cite{kam2026}, but the memory of $\mathsf F$ resides in the sign of a coherent rotation, which affects no Pauli error probability. The equality holds for the complete multi-time distributions at every angle. Syndromes from complete QEC cycles are thus \emph{blind} to the temporal model: their distribution is the same under $\mathsf F$ and $\mathsf R$. Logical measurements can reveal the model, and Sec.~\ref{sec:calibration} uses them for calibration. When such measurements are inserted, the syndrome distribution summed over their outcomes still obeys \cref{eq:syndrome}.

Physically, each syndrome branch in \cref{eq:branch-form} acts on the logical qubit as $\sqrt{q_s}$ times a unitary rotation about $\Xbar$. The weight $q_s$ depends neither on the sign nor on the logical state, and reversing the sign reverses the syndrome-dependent rotation angle. Rotations accumulated in earlier cycles therefore cannot feed back into later syndrome probabilities. The sign correlations can affect the logical output without appearing in the syndrome probabilities. This blindness relies on extracting after every interval. If one extraction is omitted, the physical rotations of two neighboring intervals compose before projection, doubling for equal signs and cancelling for opposite signs, so the sign correlation becomes visible in the syndrome statistics [\cref{eq:correlation-query}].

\paragraph{Rotation magnitudes and signs.} The same equality of syndrome statistics and SPPs holds for $U_\xi=\bigotimes_j e^{-i\xi\theta_jX_j}$ with arbitrary real angles $\theta_j$, provided the angle profile is the same in both models and the sign is common to all qubits. Each physical $X$ string with support $e$ has a coefficient $(-i\xi)^{|e|}$ times a real number, so the coset parity argument is unchanged. More generally, let nonnegative magnitudes vary across qubits and intervals according to an external classical process independent of the logical input. Conditional on the magnitude trajectory, the same scalar-effect argument applies interval by interval. Models with the same joint distribution of magnitudes therefore have identical syndrome statistics and SPPs, regardless of their common-sign correlations. Magnitude correlations, by contrast, generally affect syndrome statistics; in noise models that tie them to sign correlations, syndrome data may constrain the latter indirectly. The recovery-gap and calibration-cost results below use the uniform fixed angle of \cref{eq:physical}.

\paragraph{Faults that commute with the noise.} Blindness persists with stochastic $X$ faults and incorrect syndrome reports, even when a wrong report selects the wrong physical correction. The classical fault and reporting kernels must be common to the two models and independent of the signs and logical state, with any initial fault memory independent of the source. Boundary controls must preserve syndrome sectors, and private logical measurement outcomes cannot enter these kernels. Under the precise assumptions of \cref{app:reporting-faults}, tracking the residual syndrome sector keeps every branch effect scalar. Thus the reported history and final residual syndrome have the same distribution under both models, including when the fault process has memory.

The fidelity conclusion needs stronger assumptions. For the independent physical-fault and i.i.d.-report-mask model of \cref{prop:reporting-faults}, take distance three, a physical $X$-fault probability $p_X=\kappa_X\theta^2$ per qubit and round, and a probability $p_{\rm rep}=\kappa_{\rm rep}\theta^2$ per round that the reported syndrome is wrong, with fixed $\kappa_X,\kappa_{\rm rep}\ge0$, and a trusted final cleanup round. Then the near-half fidelity loss of \cref{thm:regret} persists at the storage time defined from the fault-averaged logical phase.

\begin{figure}[tbp]
\centering
\begin{tikzpicture}[x=3.2cm]
 % (a) mean logical readout
 \begin{scope}[y=0.95cm,yshift=4.55cm]
  \fill[qecsaffron!18] (0.5,-1.15) rectangle (1.5,1.15);
  \draw[-{Latex[length=1.5mm]}] (0,-1.25)--(0,1.45);
  \draw[gray!60] (0,0)--(2.08,0);
  \foreach \yy/\lab in {-1/-1,0/0,1/+1}{
   \draw (-0.02,\yy)--(0.02,\yy);
   \node[anchor=east,font=\small] at (-0.02,\yy) {$\lab$};
  }
  \foreach \xx in {0.5,1,1.5,2}{\draw[gray!60] (\xx,-0.06)--(\xx,0.06);}
  \draw[gray!60,dashed] (1,-1)--(1,0);
  \node[rotate=90,font=\small] at (-0.27,0) {$\langle\Zbar\rangle$};
  \draw[qecpetrol,thick] (0,1)--(2,1);
  \node[anchor=south,text=qecpetrol,font=\small] at (1.9,1.0) {$\mathsf R$};
  \draw[qecraspberry,thick,domain=0:2,samples=121,smooth]
   plot (\x,{cos(180*\x)});
  \node[anchor=north west,text=qecraspberry,font=\small] at (1.05,-0.88) {$\mathsf F$};
  \node[anchor=north west,font=\small] at (-0.36,1.6) {(a)};
 \end{scope}
 % (b) full-record minimax loss
 \begin{scope}[y=4.1cm]
  \fill[qecsaffron!18] (0.5,0) rectangle (1.5,0.55);
  \draw[-{Latex[length=1.5mm]}] (0,0)--(2.12,0);
  \draw[-{Latex[length=1.5mm]}] (0,0)--(0,0.61);
  \foreach \xx/\lab in {0/0,0.5/{1/2},1/1,1.5/{3/2},2/2}{
   \draw (\xx,0.012)--(\xx,-0.012);
   \node[below,font=\small] at (\xx,-0.012) {$\lab$};
  }
  \draw (-0.02,0.5)--(0.02,0.5);
  \node[left,font=\small] at (-0.02,0.5) {$1/2$};
  \node[left,font=\small] at (-0.02,0) {$0$};
  \draw[gray!60,dashed] (0,0.5)--(1,0.5)--(1,0);
  \draw[qecpetrol,thick,domain=0:2,samples=121]
   plot (\x,{max(0,-cos(180*\x))/(1+max(0,-cos(180*\x)))});
  \draw[qecraspberry,thick,dashed] plot coordinates {%
   (0.00,0.000000) (0.02,0.000000) (0.04,0.000000) (0.06,0.000000)
   (0.08,0.000000) (0.10,0.000000) (0.12,0.000000) (0.14,0.000000)
   (0.16,0.000000) (0.18,0.000003) (0.20,0.000013) (0.22,0.000046)
   (0.24,0.000146) (0.26,0.000364) (0.28,0.000867) (0.30,0.001741)
   (0.32,0.003361) (0.34,0.005788) (0.36,0.009607) (0.38,0.014734)
   (0.40,0.021773) (0.42,0.030506) (0.44,0.041336) (0.46,0.053976)
   (0.48,0.068474) (0.50,0.084597) (0.52,0.102022) (0.54,0.120648)
   (0.56,0.139871) (0.58,0.159777) (0.60,0.179623) (0.62,0.199631)
   (0.64,0.219070) (0.66,0.238241) (0.68,0.256502) (0.70,0.274299)
   (0.72,0.291117) (0.74,0.307084) (0.76,0.321964) (0.78,0.335846)
   (0.80,0.348758) (0.82,0.360576) (0.84,0.371202) (0.86,0.380600)
   (0.88,0.388707) (0.90,0.395459) (0.92,0.400791) (0.94,0.404646)
   (0.96,0.407102) (0.98,0.408031) (1.00,0.407402) (1.02,0.405384)
   (1.04,0.401843) (1.06,0.396915) (1.08,0.390625) (1.10,0.383009)
   (1.12,0.374146) (1.14,0.364184) (1.16,0.353165) (1.18,0.341233)
   (1.20,0.328406) (1.22,0.314925) (1.24,0.300763) (1.26,0.286224)
   (1.28,0.271218) (1.30,0.256078) (1.32,0.240704) (1.34,0.225415)
   (1.36,0.210122) (1.38,0.195105) (1.40,0.180321) (1.42,0.165950)
   (1.44,0.152030) (1.46,0.138692) (1.48,0.125998) (1.50,0.113859)
   (1.52,0.102516) (1.54,0.091734) (1.56,0.081857) (1.58,0.072587)
   (1.60,0.064256) (1.62,0.056391) (1.64,0.049472) (1.66,0.043054)
   (1.68,0.037475) (1.70,0.032314) (1.72,0.027874) (1.74,0.023876)
   (1.76,0.020433) (1.78,0.017377) (1.80,0.014793) (1.82,0.012556)
   (1.84,0.010658) (1.86,0.009100) (1.88,0.007801) (1.90,0.006790)
   (1.92,0.006004) (1.94,0.005472) (1.96,0.005143) (1.98,0.005057)
   (2.00,0.005163)};
  \draw[qecpetrol,thick] (1.55,0.43)--(1.69,0.43);
  \node[anchor=west,font=\scriptsize] at (1.71,0.43) {$\theta\to0$};
  \draw[qecraspberry,thick,dashed] (1.55,0.35)--(1.69,0.35);
  \node[anchor=west,font=\scriptsize,align=left]
   at (1.71,0.35) {Steane,\\[-2pt]$\theta=0.01$};
  \node[below,font=\small] at (1,-0.11) {$T/T_\pi$};
  \node[rotate=90,font=\small] at (-0.27,0.28) {$\Fmm$};
  \node[anchor=north west,font=\small] at (-0.36,0.70) {(b)};
 \end{scope}
\end{tikzpicture}
\caption{\label{fig:separation}Mean logical readout and minimax fidelity loss versus storage time, where $T_\pi$ [\cref{eq:horizon}] is the time at which the fixed sign accumulates a logical $\pi$ rotation. (a) Mean readout of a $\Zbar=+1$ input before final recovery, in the fixed-code weak-noise limit at fixed $T/T_\pi$. It approaches $\cos(\pi T/T_\pi)$ under $\mathsf F$ and one under $\mathsf R$, although the syndrome statistics and SPPs coincide. (b) Full-record minimax fidelity loss in the same limit (solid) and for the Steane code at $\theta=0.01$ (dashed). In the shaded interval, the $\mathsf F$ readout is negative and the limiting optimal corrections differ. At $T/T_\pi=3/4$ the finite-angle loss is $0.3146$, against the limit $0.4142$. At $T/T_\pi=1/2$ it is about $56\%$ of the leading term in \cref{eq:half-horizon-tightness} ($69\%$ for Shor), which becomes accurate only at much smaller angles.}
\end{figure}

\subsection{Distance-order visibility and coherent accumulation}
\label{sec:visibility}
Define
\begin{equation}
 A\coloneqq \sum_s(u_s^2-v_s^2),\qquad B\coloneqq 2\sum_su_sv_s.
 \label{eq:AB}
\end{equation}
For a given sign, averaging over the syndrome leaves the logical $\Xbar$ component unchanged and multiplies the two perpendicular Bloch components, combined into one complex coordinate, by $A-i\xi B$. Retaining the syndrome gives the one-cycle instrument $\mathcal E_\xi(\rho)\coloneqq \sum_s|s\rangle\langle s|\otimes K_s(\xi)\rho K_s(\xi)^\dagger$. Its sensitivity to the sign is $\beta_{\rm ret}\coloneqq \|\mathcal E_+-\mathcal E_-\|_\diamond/2$.

For every code and lookup table of \cref{def:family}, counting pairs of errors with the same syndrome gives
\begin{align}
 A&=1-a_{\C}\theta^{d_X+1}+O_{\C}(\theta^{d_X+3}),\notag\\
 B&=b_{\C}\theta^{d_X}+O_{\C}(\theta^{d_X+2}),\notag\\
 |b_{\C}|&=2N_{d_X}\binom{2r}{r},\qquad a_{\C}>0,
 \label{eq:orders}\\
 \beta_{\rm ret}&=2N_{d_X}4^r\theta^{d_X}
                  +O_{\C}(\theta^{d_X+2}).
 \label{eq:retained-orders-main}
\end{align}
Here $N_{d_X}$ counts the minimum-weight supports of the nontrivial logical $X$ class. The order-$d_X$ terms pair two errors with the same syndrome whose product is a minimum-weight logical operator. Splitting the support of that operator into a correctable subset and its complement gives the signed coefficient $\binom{2r}{r}$ in $B$ and the absolute coefficient $4^r$ in $\beta_{\rm ret}$. The two coefficients differ because a syndrome whose correctable error has weight $j$ contributes to the logical rotation with sign $(-1)^{r-j}\xi$ at this order. Contributions of different syndromes therefore partly cancel in the syndrome average $B$, whereas retaining the syndrome keeps them apart. The cancellation is incomplete, so $B$ has a nonzero leading term of order $\theta^{d_X}$. Let $m_s$ be the number of weight-$(r+1)$ errors with syndrome $s$ that the table corrects to $\Xbar$. Then $a_{\C}=2\sum_sm_s^2$. \Cref{app:branch} gives the counting proof, including this noncancellation and the norm of the retained output.

For a $\Zbar=+1$ input, the mean readouts before final correction are
\begin{equation}
 \langle\Zbar\rangle_{\mathsf R}(T)=A^T,\qquad
 \langle\Zbar\rangle_{\mathsf F}(T)=\operatorname{Re}[(A-iB)^T].
 \label{eq:readout}
\end{equation}
Let $\alpha\coloneqq \arg(A+iB)$ and, for sufficiently small $\theta$, define
\begin{equation}
 T_\pi\coloneqq \operatorname{round}\frac{\pi}{|\alpha|}
      =\Theta_{\C}(\theta^{-d_X}).
 \label{eq:horizon}
\end{equation}
By this time the mean accumulated logical angle under the fixed sign has magnitude approximately $\pi$, whereas resampling averages the drift away. Both models lose only $O_{\C}(\theta)$ in amplitude, owing to cycle-to-cycle fluctuations of the logical rotation angle (\cref{sec:regret}). \Cref{fig:separation}(a) shows this contrast. Comparisons across correction tables use this same storage time (Sec.~\ref{sec:frame-tracking}).

\subsection{A nearly maximal loss with the full syndrome record}
\label{sec:regret}
For $q_s>0$, write \cref{eq:branch-form} as $K_s(\xi)=\sqrt{q_s}\,e^{-i\xi\phi_s\Xbar/2}$, so that branch $s$ rotates the logical qubit about $\Xbar$ by the angle $\xi\phi_s$. Equivalently, $\phi_s=2\arg(u_s+iv_s)$, defined modulo $2\pi$ because \cref{eq:branch-form} holds up to a scalar phase. For a history $y=(s_1,\ldots,s_T)$, define its accumulated phase $\phi_{\rm tot}(y)\coloneqq\sum_{j=1}^T\phi_{s_j}$. Each history has the same probability $q_y=\prod_jq_{s_j}$ under both models, and its conditional logical multipliers are
\begin{equation}
 g_{\mathsf F}(y)=\cos\phi_{\rm tot}(y),\qquad
 g_{\mathsf R}(y)=\prod_j\cos\phi_{s_j}.
 \label{eq:conditional}
\end{equation}
Under $\mathsf F$ the signed branch angles determined by the record add before averaging over the common sign, whereas under $\mathsf R$ the hidden signs are independent and the cosines multiply. Each conditional channel is the real mixture $(1+g_h(y))\id/2+(1-g_h(y))\mathcal X/2$, with $\mathcal X(\rho)\coloneqq \Xbar\rho\Xbar$. For such channels, a randomized choice between $I$ and $\Xbar$ performs at least as well as any final CPTP recovery, simultaneously for both models: the fidelities of any recovery on the two Pauli components sum to at most one, and a suitably randomized Pauli raises both by the same amount. \Cref{app:regret} proves this reduction for the full record.

Let $\Fmm(T)$ denote the smallest worst-case fidelity loss, relative to each model's full-record CPTP optimum, that a single recovery can guarantee without knowing the model. The noise-aware optimum knows the model but not the realized signs.

\begin{theorem}[Recovery separation and storage time]
\label{thm:regret}\label{cor:scaled-horizon} At the storage time $T_\pi$ defined above,
\begin{gather}
 \frac12-O_{\C}(\theta)\le\Fmm(T_\pi)\le\frac12,
 \notag\\
 \FC{h}(T_\pi)=\FP{h}(T_\pi)\ge1-O_{\C}(\theta).
 \label{eq:regret}
\end{gather}
More generally, let $T_\theta(x)\coloneqq \operatorname{round}(xT_\pi)$ and $\chi(x)\coloneqq [-\cos(\pi x)]_+$. For every fixed finite $x_{\max}>0$,
\begin{equation}
 \sup_{0\le x\le x_{\max}}
 \left|\Fmm(T_\theta(x))-
              \frac{\chi(x)}{1+\chi(x)}\right|
 =O_{\C,x_{\max}}(\sqrt\theta).
 \label{eq:horizon-curve}
\end{equation}
Both statements permit arbitrary full-record final CPTP recoveries, including ancillas. Randomized choices between $I$ and $\Xbar$ attain the minimax value, and deterministic record-dependent choices attain each noise-aware optimum.
\end{theorem}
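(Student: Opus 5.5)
The plan is to reduce everything to a two-action, record-by-record decision problem, solve the resulting minimax problem exactly in terms of the conditional multipliers \cref{eq:conditional}, and then pass to the weak-noise limit with concentration estimates for $\phi_{\rm tot}$.

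First I will use the reduction stated before the theorem and proved in \cref{app:regret}. Each conditional channel has the form $(1+g_h(y))\id/2+(1-g_h(y))\mathcal X/2$. Hence for every full-record CPTP recovery there is a randomized $I/\Xbar$ decoder that is at least as good simultaneously for both models. This gives $\FC{h}=\FP{h}$, and it lets me take $\Fmm$ over Pauli decoders only.

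Let $p(y)$ be the probability of applying $\Xbar$ at record $y$. From \cref{eq:fidelity} the branch weights are $w_h(y,I)=q_y(1+g_h)/2$ and $w_h(y,\Xbar)=q_y(1-g_h)/2$. Then
\[
\FP{h}=\sum_yq_y\frac{1+|g_h(y)|}{2},\qquad
\Floss{h}^{\mathrm P}(p)=\sum_yq_y\Big[\frac{|g_h(y)|-g_h(y)}2+p(y)g_h(y)\Big].
\]
The noise-aware optimum is attained deterministically, by $p(y)=\mathbf 1[g_h(y)<0]$. Each loss is affine in $p$, so $\Fmm=\inf_p\max\{\Floss{\mathsf F},\Floss{\mathsf R}\}$ is a convex program. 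Its value, and the statement that randomization attains it, follow from the minimax theorem over mixtures $\lambda\Floss{\mathsf F}+(1-\lambda)\Floss{\mathsf R}$.

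Second, I will solve the idealized problem in which $g_{\mathsf R}\equiv1$ and $g_{\mathsf F}\equiv c=\cos(\pi x)$ on every record. If $c\ge0$, the decoder $p\equiv0$ gives zero loss. If $c<0$, write $\chi=-c$. Then $\Floss{\mathsf F}=\chi(1-\bar p)$ and $\Floss{\mathsf R}=\bar p$, where $\bar p=\sum_yq_yp(y)$. Equalizing gives $\bar p=\chi/(1+\chi)$, so the value is $\chi/(1+\chi)$, which equals $1/2$ at $x=1$. For the real problem I will argue that both losses change by at most $\sum_yq_y|g_h(y)-g_h^{\rm ideal}|$ for every fixed $p$, uniformly in $p$. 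This perturbation bound transfers to the minimax value.

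Third, and this is the main obstacle, I need the concentration estimates. Since $q_se^{i\phi_s}=(u_s+iv_s)^2$, we have $\sum_sq_se^{i\phi_s}=A+iB$. Choosing representatives $\phi_s\in(-\pi,\pi]$, I will show from the coset counting of \cref{app:branch} that $\sum_sq_s\phi_s^2=O_{\C}(\theta^{d_X+1})$. The reason is that the only syndromes with non-negligible angle are those where $u_s$ and $v_s$ are comparable, and these have total probability $O(\theta^{d_X+1})$. It then follows that $\mathbb E\phi_s=\alpha+O(\theta^{d_X+1})$.

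With these in hand, the variance of $\phi_{\rm tot}$ under the product measure $q_y$ is $T\cdot O(\theta^{d_X+1})=O_{x_{\max}}(\theta)$ for $T\le x_{\max}T_\pi=O(\theta^{-d_X})$. The mean is $T\alpha=\pi x+O(\theta)$ after absorbing the rounding in $T_\theta(x)$. Since cosine is 1-Lipschitz, Cauchy--Schwarz gives
\[
\sum_yq_y\,|g_{\mathsf F}(y)-\cos(\pi x)|=O(\sqrt\theta).
\]
For the resampled model, $1-\prod_j\cos\phi_{s_j}\le\sum_j\phi_{s_j}^2/2$, so $\sum_yq_y\,|1-g_{\mathsf R}(y)|=O(\theta)$. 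These two bounds yield \cref{eq:horizon-curve}.

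At $x=1$ the $\sqrt\theta$ can be sharpened to $\theta$, because $\cos$ is stationary at $\pi$: a second-order bound gives $\sum_yq_y\,|1+g_{\mathsf F}(y)|\le\tfrac12\mathbb E(\phi_{\rm tot}-\pi)^2=O(\theta)$. The same estimates show $\FP{h}\ge1-O(\theta)$. The upper bound $\Fmm\le1/2$ holds exactly, since the decoder $p\equiv1/2$ has loss $\sum_yq_y(|g_h|+\ldots)/2\le1/2$ for both models.

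The delicate parts are the second-moment bound on $\phi_s$, which requires controlling the signs and magnitudes of $\phi_s$ across all syndromes, and the treatment of $\phi_s$ near $\pm\pi$ when $u_s\approx0$. For the latter I will bound the total probability of such syndromes by $O(\theta^{d_X+1})$ and use $|g|\le1$ on them.
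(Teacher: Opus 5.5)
Your proposal is correct and follows the paper's outline. Both arguments use the CPTP-to-randomized-Pauli reduction, solve the constant-multiplier problem to get $\chi/(1+\chi)$, and apply the uniform stability bound $|\Fmm-\Fmm[\widetilde g]|\le\max_h\sum_yq_y|g_h(y)-\widetilde g_h(y)|$.

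Where you differ is in how you control the multipliers. The paper never lifts the branch angles to real numbers. It works with $e^{i\phi_s}$, whose mean is exactly $A+iB$. For $\mathsf F$ it uses the exact identity $\mathbb E|e^{i\phi_{\rm tot}(y)}-e^{iT\alpha}|^2=2(1-|A+iB|^T)\le2T(1-A)$. For $\mathsf R$ it uses $\mathbb E g_{\mathsf R}=A^T$ together with $g_{\mathsf R}\le1$. At $T_\pi$ it needs only first moments: $\mathbb E(1+g_{\mathsf F})=2\RefInf{\mathsf F}$ and $\mathbb E(1-g_{\mathsf R})=2\RefInf{\mathsf R}$ are exact mean readouts, so $\Fmm(T_\pi)\ge\frac12-\RefInf{\mathsf F}-\RefInf{\mathsf R}$ follows directly.

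Your bias--variance route on representatives $\phi_s\in(-\pi,\pi]$ also works. The step you flag as the main obstacle is in fact immediate. Since $A=\sum_sq_s\cos\phi_s$ and $1-\cos\phi\ge2\phi^2/\pi^2$ on $[-\pi,\pi]$, you get $\sum_sq_s\phi_s^2\le\frac{\pi^2}{2}(1-A)=O_{\C}(\theta^{d_X+1})$ from \cref{eq:orders}. This needs no coset counting and no special treatment of angles near $\pm\pi$. Similarly, $|\phi-\sin\phi|\le\pi\phi^2/6$ gives $\mathbb E\phi_s=B+O_{\C}(\theta^{d_X+1})$.

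Your stated mechanism is slightly off. By opposite coset parity, $u_s$ and $v_s$ are never of the same order. Large angles occur only where the table selects the heavier class, and such syndromes have $q_s=O_{\C}(\theta^{d_X+1})$.

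Two small repairs are needed:
\begin{itemize}
\item At $x=1$ you must center at $\operatorname{sgn}(B)\pi$, not $\pi$, because $\mathbb E(\phi_{\rm tot}-\pi)^2\approx4\pi^2$ when $B<0$.
\item The CPTP reduction is proved in the paper as part of this theorem, via the bound $F_{\mathrm e}(\mathcal R_y)+F_{\mathrm e}(\mathcal R_y\circ\mathcal X)\le1$ and a matching randomized Pauli. You should include that short argument rather than cite it.
\end{itemize}

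The paper's route is shorter, avoids branch choices, and uses only $1-A$ and $\alpha$. Yours exposes the drift and spread of $\phi_{\rm tot}$, which is the structure the paper later uses, for minimum-weight tables, to prove the sharp $\sqrt\theta$ rate at $T_\pi/2$.
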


For the bound at $T_\pi$, the fixed corrections $I$ under $\mathsf R$ and $\Xbar$ under $\mathsf F$ have infidelities
\begin{align}
 \RefInf{\mathsf R}&=\frac{1-\langle\Zbar\rangle_{\mathsf R}(T_\pi)}2,
 &\RefInf{\mathsf F}&=\frac{1+\langle\Zbar\rangle_{\mathsf F}(T_\pi)}2,
 \label{eq:defects}\\
 \RefInf{h}&=\frac{\pi a_{\C}}{2|b_{\C}|}\theta
                  +O_{\C}(\theta^2).
 \label{eq:defect-order}
\end{align}
Because $\mathbb E|g_{\mathsf F}+g_{\mathsf R}| \le2(\RefInf{\mathsf F}+\RefInf{\mathsf R})$, no single full-record recovery can approach both noise-aware optima. At general storage times, the accumulated branch phase concentrates: the conditional multiplier approaches $\cos(\pi x)$ under $\mathsf F$ and one under $\mathsf R$, and optimizing over these two limiting channels gives $\chi/(1+\chi)$ [\cref{fig:separation}(b)]. \Cref{app:regret} proves both statements. For each fixed $x\ge0$, the limiting optimal corrections therefore differ whenever $\cos(\pi x)<0$. The $\mathsf F$ optimum tends to $(1+|\cos\pi x|)/2$, so $\mathsf F$ is almost perfectly correctable only near integer $x$, such as $x=1$. The noise-aware recovery knows the temporal model but not the realized sign. Near $T_\pi$, rotations by $+\pi$ and $-\pi$ about $\Xbar$ coincide up to a global phase. Near $T_\pi/2$, the two signs give rotations by $\pm\pi/2$, which no single recovery undoes, so even the noise-aware $\mathsf F$ fidelity approaches $1/2$. For the Steane code at $\theta=0.01$, it equals $0.643$ at $T_\theta(1/2)$.

\paragraph{Drift and diffusion of the logical angle.} Call a lookup table \emph{minimum-weight} if, for every syndrome, the class it corrects to logical $I$ contains an error of minimum weight. For such tables every branch angle $\phi_s$ is $O_{\C}(\theta)$ (\cref{lem:minimum-weight-small-phase}), and $\phi_{\rm tot}(y)$ performs a biased random walk. The largest angles occur on rare syndromes, while the leading contributions to the mean have alternating signs according to the correctable-error weight (\cref{sec:visibility}). The mean angle per cycle, equal to $B$ at leading order, is therefore much smaller than the per-cycle fluctuation, whose variance is $\Theta_{\C}(\theta^{d_X+1})$. Over $T_\pi$ the mean accumulated angle has magnitude approximately $\pi$, while the standard deviation of $\phi_{\rm tot}$ is $O_{\C}(\sqrt\theta)$. For the Steane code, the leading contributions to the mean angle per cycle are $+42\theta^3$ from the nonzero syndromes and $-14\theta^3$ from the zero syndrome. At $\theta=0.01$, the mean is about $2\%$ of the root-mean-square angle, and the standard deviation at $T_\pi$ is about $0.53$ radians. This spread sets the $O_{\C}(\theta)$ infidelity of the noise-aware $\mathsf F$ recovery, which knows $\phi_{\rm tot}(y)$ but not the sign. Under $\mathsf R$ independent signs multiply the same branch angles and dephase the logical qubit by a comparable amount.

\paragraph{The square-root rate is sharp.} For minimum-weight tables, the square-root order in \cref{eq:horizon-curve} is sharp, as shown by the loss at $T_\theta(1/2)$ (\cref{prop:half-horizon-tightness}). At this time the limiting $\mathsf F$ multiplier vanishes, so phase fluctuations enter the loss at first order rather than at second order as at $T_\pi$. Convergence at small angles is slow, as the dashed curve in \cref{fig:separation}(b) shows.

\begin{table}[tbp]
\caption{\label{tab:finite-angle}Finite-angle comparison with minimum-weight $X$-correction lookup tables, at the storage time $T_\pi$ of the original data circuit. Original circuit: optimal fidelities of the two models over full-record final CPTP recoveries, and the minimax loss $\Fmm$. Randomized circuit: independent physical Pauli randomization with its record discarded, followed by the same lookup table and the final identity correction. Its entanglement fidelity $F_{\rm tw}$ [\cref{eq:twirl-fidelity}] is the same under $\mathsf F$ and $\mathsf R$.}
\small
\begin{ruledtabular}
\setlength{\tabcolsep}{4.5pt}
\begin{tabular}{lrcccc}
 & & \multicolumn{3}{c}{Original circuit} & Randomized\\
\cmidrule(lr){3-5}\cmidrule(l){6-6}
$\theta$ & $T_\pi$ & $\FC{\mathsf F}$ & $\FC{\mathsf R}$ & $\Fmm$ & $F_{\rm tw}$\\
\hline
\multicolumn{6}{l}{\rule{0pt}{2.6ex}Shor $[[9,1,3]]$}\\
0.10 & 267 & 0.8510 & 0.8073 & 0.2593 & 0.8111\\
0.05 & 2\,105 & 0.9003 & 0.8943 & 0.3558 & 0.8949\\
0.01 & 261\,852 & 0.9770 & 0.9770 & 0.4673 & 0.9770\\
\multicolumn{6}{l}{\rule{0pt}{2.9ex}Steane $[[7,1,3]]$}\\
0.10 & 128 & 0.8231 & 0.6111 & 0.0848 & 0.8000\\
0.05 & 929 & 0.8297 & 0.7438 & 0.1956 & 0.8931\\
0.01 & 112\,357 & 0.9346 & 0.9340 & 0.4074 & 0.9770\\
\end{tabular}
\end{ruledtabular}
\end{table}

\Cref{tab:finite-angle} lists finite-angle fidelities and full-record minimax losses. At $\theta=0.05$, the twirled physical error probability is $\sin^2\theta\simeq2.5\times10^{-3}$ per qubit per interval, and $T_\pi$ is 929 QEC cycles for Steane and 2105 for Shor. The last column refers to the physically randomized data circuit analyzed in \cref{sec:protection}. \Cref{app:numerics} describes the numerical methods.

\subsection{A process requiring quantum memory}
\label{sec:quantum-extension}
Classical signs already produce the recovery conflict, and the conflict persists when the physical noise has quantum memory. Replace \cref{eq:physical} by the system--environment unitary $U(\theta,\omega)\in\mathcal B(\mathcal H_{\rm phys}\otimes\mathbb C^2)$, in which coupling and precession act simultaneously,
\begin{equation}
 U(\theta,\omega)\coloneqq \exp\!\left[-i\left(
 \theta\sum_{j=1}^nX_j\otimes Z_E+\omega I\otimes X_E\right)\right].
 \label{eq:quantum-unitary}
\end{equation}
The environment qubit starts in $I_E/2$. Model $\mathrm{Ret}_\omega$ retains it between intervals, while $\mathrm{Res}_\omega$ resets it to $I_E/2$ at every interval boundary. Both use the same interaction. At $\omega=0$ they reduce to $\mathrm{Ret}_0=\mathsf F$ and $\mathrm{Res}_0=\mathsf R$. We call the noise channel of one interval between neighboring intervention slots a \emph{link}.

\paragraph{Identical observations despite quantum memory.} At any fixed $\theta,\omega$, these two models have identical SPPs and syndrome distributions under the controls of \cref{thm:equivalence}. If $\theta>0$, $\omega\ne0$, and $\sqrt{n^2\theta^2+\omega^2}<\pi/2$, the retained-environment process on four physical noise intervals has a negative partial transpose across a partition of whole channel links and therefore requires quantum memory. At fixed $\theta,T$, the recovery separation persists for sufficiently small $|\omega|$ whenever it is positive at $\omega=0$. The equalities of the observed statistics follow from scalar effects for both the physical Pauli coefficients and the corrected joint branches. In each interval, even-weight physical error paths act on the environment through $I_E$ and $X_E$, and odd-weight paths through $Z_E$. Their relative phase and the anticommutation of $X_E$ and $Z_E$ cancel the interference between the two corrected logical classes in every syndrome probability, as the real and imaginary amplitudes did for a classical sign. For $\omega\ne0$, the coupling operator $Z_E$ does not commute with the environment precession, which coherently mixes the $Z_E$ sectors in the joint evolution. The partial-transpose witness establishes that this process requires quantum memory under the conditions above. \Cref{app:quantum} gives these identities, an explicit four-link witness, a common known dilation, and the finite-storage-time continuity estimate. For example, the Shor code at $(\theta,T)=(0.1,267)$ keeps a certified loss of at least $0.0669$ at $\omega=10^{-4}$.

The witness certifies quantum memory of the physical noise process before QEC. The long-time results, namely the near-half loss and the matching calibration costs of the next section, are proved for the pair $\mathsf F/\mathsf R$ at the classical-memory point $\omega=0$. Extending them to fixed nonzero $\omega$ would require an analysis beyond finite-storage-time continuity.

\section{The cost of calibrating the recovery decision}
\label{sec:calibration}

We now keep the data circuit unchanged and ask how independent calibration experiments can supply the missing model information. The data run has storage time $T_\pi$ and the identity logical schedule between cycles, and its complete syndrome record is available to the final recovery. The code, fixed lookup table, angle $\theta$, and the two candidate models $\mathsf F,\mathsf R$ are known. Calibration therefore needs only to select the recovery, not reconstruct a general temporally correlated noise model.

\subsection{Calibration access and resource accounting}
\label{sec:access}

Calibration experiments differ in what the experimenter may prepare, control, and read out, but none of them gives direct access to the environment. A syndrome-only experiment returns stabilizer outcomes. A logical experiment additionally allows a known encoded input and a final logical readout. A third option also lets the experimenter choose when extraction occurs. Comparing these options separates the value of preparation and readout from the value of extraction timing, while the data run itself is unchanged. We compare three nested access classes, summarized in \cref{fig:overview}:
\begin{enumerate}
\item $\mathcal A_{\rm syn}$ uses complete QEC cycles and passes to the recovery only the syndrome records and randomness independent of the source. Logical operations between cycles may be adaptive and may use a private register, but their measurement results, their quantum outputs, and any action records that encode such results are withheld from the recovery.
\item $\mathcal A_{\rm log}$ uses the same complete QEC cycles and may pass on their full classical and quantum output. Ideal logical operations are available between complete QEC cycles, and different calibration runs may share a private quantum register.
\item $\mathcal A_{\rm del}$ additionally permits fresh, independently reset delayed-extraction queries: two consecutive noise intervals with no intervening control or measurement, followed by a single syndrome extraction and the fixed lookup-table correction. Their full output may be retained.
\end{enumerate}
\begin{figure*}[tbp]
\centering
\begin{tikzpicture}[x=1cm,y=1cm,
 box/.style={draw=qecnavy,rounded corners=1.5pt,fill=qecnavy!5,
 minimum height=0.43cm,inner sep=3pt,font=\normalsize},
 gap/.style={draw=gray!60,dashed,rounded corners=1.5pt,
 minimum height=0.43cm,minimum width=0.95cm},
 arrow/.style={-{Latex[length=1.5mm]},draw=qecnavy},
 lab/.style={anchor=west,font=\normalsize}]
 \node[lab,font=\bfseries] at (0,0.62) {(a) Calibration access};
 \node[lab,font=\bfseries] at (13.2,0.62) {Interval cost};
 \node[lab] at (0,0) {$\mathcal A_{\rm syn}$: syndromes only};
 \node[lab] at (0,-0.68) {$\mathcal A_{\rm log}$: all cycle outputs};
 \foreach \yy [count=\idx] in {0,-0.68}{
  \node[box] (u\idx) at (6.8,\yy) {$U$};
  \node[box] (q\idx) at (8.1,\yy) {QEC};
  \node[box] (v\idx) at (9.4,\yy) {$U$};
  \node[box] (r\idx) at (10.7,\yy) {QEC};
  \draw[arrow] (u\idx)--(q\idx);
  \draw[arrow] (q\idx)--(v\idx);
  \draw[arrow] (v\idx)--(r\idx);
 }
 \filldraw[draw=qecnavy,fill=qecnavy!5] (6.06,-0.895)--(5.35,-0.895) arc (270:90:0.215)--(6.06,-0.465)--cycle;
 \node at (5.6,-0.68) {prep};
 \draw[arrow] (6.06,-0.68)--(u2);
 \node[box] (out1) at (12.27,0) {$s$};
 \node[box] (out2) at (12.27,-0.68) {$s$, logical};
 \draw[arrow] (r1)--(out1);
 \draw[arrow] (r2)--(out2);
 \node[lab] at (13.2,0) {$\infty$};
 \node[lab] at (13.2,-0.68) {$\Theta_{\C,\epsilon}(\theta^{-d_X})$};
 \node[lab] at (0,-1.36) {$\mathcal A_{\rm del}$: $+$ delayed extraction};
 \filldraw[draw=qecpetrol,fill=qecpetrol!5] (6.06,-1.575)--(5.35,-1.575) arc (270:90:0.215)--(6.06,-1.145)--cycle;
 \node at (5.6,-1.36) {prep};
 \node[box,draw=qecpetrol,fill=qecpetrol!5] (ud) at (6.8,-1.36) {$U$};
 \node[gap] at (8.1,-1.36) {};
 \node[box,draw=qecpetrol,fill=qecpetrol!5] (vd) at (9.4,-1.36) {$U$};
 \node[box,draw=qecpetrol,fill=qecpetrol!5] (qd) at (10.7,-1.36) {QEC};
 \draw[arrow,draw=qecpetrol] (6.06,-1.36)--(ud);
 \draw[arrow,draw=qecpetrol] (ud)--(vd);
 \draw[arrow,draw=qecpetrol] (vd)--(qd);
 \node[box,draw=qecpetrol,fill=qecpetrol!5] (out3) at (12.27,-1.36) {$s$, logical};
 \draw[arrow,draw=qecpetrol] (qd)--(out3);
 \node[lab] at (13.2,-1.36) {$\Theta_{\C,\epsilon}(\theta^{-2})$};
 \draw[gray!40] (0,-1.95)--(15.7,-1.95);
 \node[lab,font=\bfseries] at (0,-2.4) {(b) Two-interval signal};
 \node at (6.2,-2.9) {$\Pr(s\ne0)$};
 \foreach \yy/\model/\second [count=\idx] in
    {-3.5/\mathsf F/\xi,-4.4/\mathsf R/\xi'}{
  \node[anchor=west] at (0,\yy) {$\model$};
  \node[box] (bfirst\idx) at (1.3,\yy) {$U_\xi$};
  \node[box] (bsecond\idx) at (2.65,\yy) {$U_{\second}$};
  \node[box] (bqec\idx) at (4.1,\yy) {QEC};
  \draw[arrow] (bfirst\idx)--(bsecond\idx);
  \draw[arrow] (bsecond\idx)--(bqec\idx);
  \draw[arrow] (bqec\idx)--(5.15,\yy);
 }
 \node at (6.2,-3.5) {$p_{\rm nz}$};
 \node at (6.2,-4.4) {$p_{\rm nz}/2$};
 \node[lab,font=\bfseries] at (8,-2.4) {(c) Cost versus extraction delay};
 \begin{scope}[shift={(9.3,-5.25)},x=2.7cm,y=1cm]
  \draw[-{Latex[length=1.5mm]},qecnavy] (-0.08,-0.2)--(-0.08,2.08);
  \draw[-{Latex[length=1.5mm]},qecnavy] (-0.08,-0.2)--(2.2,-0.2);
  \node[anchor=west,font=\small] at (-0.08,2.35) {Calibration cost $M_\epsilon^*$};
  \foreach \yy/\ticklabel in {0/{\theta^{-1}},1.69897/{\theta^{-2}}}{
   \draw[qecnavy] (-0.105,\yy)--(-0.055,\yy);
   \node[anchor=east] at (-0.12,\yy) {$\ticklabel$};
  }
  \draw[gray!50,dashed] (-0.08,0)--(1.40654,0)--(1.40654,-0.2);
  \draw[gray!50,dashed] (-0.08,1.69897)--(0,1.69897);
  % Schematic log--log scaling for ell >= 2; multiplicative constants suppressed.
\draw[qecpetrol,very thick]
  (0,1.69897)--(1.40654,0)--(2,0);
\fill[qecpetrol] (0,1.69897) circle (2pt);
\fill[qecpetrol] (1.40654,0) circle (2pt);
\draw[qecnavy] (0,-0.17)--(0,-0.23);
\node[below] at (0,-0.25) {$2$};
\draw[qecnavy] (1.40654,-0.17)--(1.40654,-0.23);
\node[below] at (1.40654,-0.25) {$\ell\sim\theta^{-1}$};
\node[text=qecpetrol,font=\small] at (0.84,1.4)
  {$\displaystyle\Theta\!\left(\frac{1}{\ell\theta^2}\right)$};
\node[text=qecpetrol,font=\small] at (1.82,0.32)
  {$\Theta(\theta^{-1})$};
\node[font=\small] at (1.0,-0.87)
  {Maximum query length $\ell$ (noise intervals)};
 \end{scope}
 \node[anchor=west,font=\small,text=gray!75!black] at (0,-5.05)
   {$\xi'$ is independent of $\xi$: opposite signs cancel.};
\end{tikzpicture}
\caption{\label{fig:overview}Calibration for the fixed data circuit at storage time $T_\pi$, with fixed code and target loss $0<\epsilon<1/2$, as $\theta\to0$. (a) Each $U$ costs one noise interval, and QEC is one syndrome extraction followed by the fixed lookup-table correction. A delayed-extraction query omits the intermediate QEC step (dashed). Outputs are the syndrome $s$ and, where shown, the logical output, which may be measured or retained. The known-state preparations (prep) are those used by the protocols attaining the upper bounds. \Cref{sec:access} specifies the more general operations, retained outputs, and source resets that the classes permit. All classes receive the same SPP. (b) Under $\mathsf R$, independent signs cancel in half of the two-interval queries, halving the nonzero-syndrome probability $p_{\rm nz}$ [\cref{eq:delayed-rate}]. (c) Optimal calibration cost in noise intervals versus maximum query length $\ell\ge2$ [\cref{eq:delay-cost}], shown schematically on logarithmic axes with multiplicative constants suppressed. The cost scales as $(\ell\theta^2)^{-1}$ until $\ell$ is of
order $\theta^{-1}$, then saturates at order $\theta^{-1}$.}
\end{figure*}
All classes receive the SPP at no cost. It is identical for the two models and therefore carries no information about which one holds. The types and lengths of all runs, the source-reset boundaries, and the total interval budget are set in advance, while operations at the permitted boundaries may depend on earlier outcomes. The model does not change between runs, and each source reset supplies fresh hidden randomness, independent of the other runs given the model. A source reset is a separate operation from the preparation of a new logical state. The environment for the data run is prepared independently of the calibration environments, conditional on the model. Retained calibration systems first interact with the data at final recovery.

The cost of a protocol is its number of physical noise intervals: a protocol with $m$ complete QEC cycles and $N_{\rm del}$ two-interval queries costs $M=m+2N_{\rm del}$. Logical preparations, ideal logical controls, readout, and private registers are free in this accounting. Let $M^*_{\mathcal A,\epsilon}$ be the smallest budget, fixed in advance, for which the worst-case expected fidelity loss relative to the noise-aware final CPTP optimum is at most $\epsilon$. The noise-aware recovery knows which model governs the run but is not given its hidden signs. The budget measures interval complexity rather than wall-clock time or adaptive stopping. The lower bounds below also cover calibration procedures that keep a quantum state and never output a classical estimate of the model.

\subsection{Two experiments that reveal the temporal correlation}

With complete QEC cycles, we prepare a logical $\Zbar=+1$ eigenstate, store it for $T_\pi$ cycles, and measure $\Zbar$ once at the end. Because the logical channel is a mixture of $I$ and $\Xbar$, the success probability of this ideal logical-$Z$ memory test equals the entanglement fidelity. The mean readouts, $A^{T_\pi}$ under $\mathsf R$ and $\operatorname{Re}(A-iB)^{T_\pi}$ under $\mathsf F$, approach $+1$ and $-1$, respectively. A single such run therefore identifies the model with conditional error probabilities $\RefInf{\mathsf R}$ and $\RefInf{\mathsf F}$, both $O_{\C}(\theta)$. Choosing the corresponding fixed final correction in the independent data run then incurs a loss of at most $2\max_h\RefInf{h}=O_{\C}(\theta)$.

A delayed-extraction query instead probes interference before extraction [\cref{fig:overview}(b)]. Each query channel returns the syndrome in a classical register together with the logical state. Write $\mathcal E_{\rm av}(2\theta)$ for the channel, with the syndrome retained, of one extraction and correction after the rotation $U_\xi(2\theta)$ with a fair sign $\xi$, and $\mathcal E_{\rm id}$ for the channel that records a zero syndrome and leaves the logical state unchanged. The two-interval channels are
\begin{equation}
 \mathcal Q_{\mathsf F,2}=\mathcal E_{\rm av}(2\theta),\qquad
 \mathcal Q_{\mathsf R,2}
 =\tfrac12\mathcal E_{\rm av}(2\theta)
       +\tfrac12\mathcal E_{\rm id}.
 \label{eq:delayed-mixture}
\end{equation}
Under $\mathsf R$, opposite signs cancel exactly. Consequently the nonzero-syndrome probabilities are $p_{\rm nz}$ under $\mathsf F$ and $p_{\rm nz}/2$ under $\mathsf R$, where
\begin{equation}
 p_{\rm nz}=4\kappa_{\C}\theta^2+O_{\C}(\theta^4),\qquad
 \kappa_{\C}\coloneqq \sum_{s\ne0}\bigl(n_s^{(1)}\bigr)^2>0.
 \label{eq:delayed-rate}
\end{equation}
Here $n_s^{(1)}$ is the number of weight-one physical $X$ errors with syndrome $s$. Weight-one errors with the same syndrome differ by an $X$ stabilizer, which acts as $+1$ on the code space, so their amplitudes add before squaring and $\kappa_{\C}$ can exceed $n$. Independent repetitions of this binary experiment distinguish the two rates with fixed error using $O_{\C,\epsilon}(\theta^{-2})$ noise intervals.

For any joint distribution of two fair signs, the same measurement gives
\begin{equation}
 \Pr(s\ne0)=\frac{p_{\rm nz}}2
       \bigl(1+\mathbb E[\xi_1\xi_2]\bigr).
 \label{eq:correlation-query}
\end{equation}
The query thus directly measures the nearest-neighbor sign correlation, a two-time correlation of the field direction, which equals one under $\mathsf F$ and zero under $\mathsf R$. For a general temporal process, recovery can also depend on higher-order correlations.

\subsection{Matching costs, including quantum calibration output}

\begin{theorem}[Optimal calibration interval costs]
\label{thm:costs}
Fix a code and lookup table as in \cref{def:family}, and fix $0<\epsilon<1/2$. For the data run at $T_\pi$, as $\theta\to0^+$,
\begin{align}
 M^*_{\mathcal A_{\rm syn},\epsilon}&=\infty,\notag\\
 M^*_{\mathcal A_{\rm log},\epsilon}
   &=\Theta_{\C,\epsilon}(T_\pi)
     =\Theta_{\C,\epsilon}(\theta^{-d_X}),\notag\\
 M^*_{\mathcal A_{\rm del},\epsilon}
   &=\Theta_{\C,\epsilon}(\theta^{-2}).
 \label{eq:costs}
\end{align}
The first equality holds for all sufficiently small $\theta$. The lower bounds allow an arbitrary private quantum register and a common final CPTP processor acting jointly on the retained calibration state and the complete data output.
\end{theorem}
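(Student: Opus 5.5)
The plan is to reduce all three statements to \cref{prop:calibration-loss}. That requires two ingredients: a lower bound on the no-calibration conflict $\Gamma_{01}^{\mathrm{CPTP}}$ at $T_\pi$, and, for each access class, an upper bound on the trace distance $\Dtr(\sigma_{\mathsf F},\sigma_{\mathsf R})$ between the retained calibration states as a function of the interval budget $M$. The upper bounds on $M^*$ come from explicit protocols.

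\emph{Step 1 (conflict).} I would show $\Gamma_{01}^{\mathrm{CPTP}}\ge\tfrac12-O_{\C}(\theta)$ at $T_\pi$.
\begin{itemize}
\item By the reduction in \cref{app:regret}, a randomized choice between $I$ and $\Xbar$ is optimal among full-record CPTP recoveries simultaneously for both models. So the equal-prior average fidelity of any common recovery is at most $\tfrac12\sum_y q_y\max_{L}\tfrac12\bigl[(1\pm g_{\mathsf F}(y))/2+(1\pm g_{\mathsf R}(y))/2\bigr]$.
\item This equals $\tfrac12+\tfrac14\,\mathbb E|g_{\mathsf F}+g_{\mathsf R}|$.
\item The stated estimate $\mathbb E|g_{\mathsf F}+g_{\mathsf R}|\le2(\RefInf{\mathsf F}+\RefInf{\mathsf R})=O_{\C}(\theta)$ bounds the second term.
\item \Cref{thm:regret} gives $\FC{h}\ge1-O_{\C}(\theta)$.
\end{itemize}
Together these give $\Gamma\ge\tfrac12-O_{\C}(\theta)$, which exceeds any fixed $\epsilon<1/2$ for small $\theta$.

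\emph{Step 2 ($\mathcal A_{\rm syn}$).} By \cref{thm:equivalence}, including adaptive logical controls and a private register, every syndrome record has the same distribution under $\mathsf F$ and $\mathsf R$. Source-independent randomness is common to both models, and everything else is withheld from the recovery. Hence $\sigma_{\mathsf F}=\sigma_{\mathsf R}$ for every budget. \Cref{prop:calibration-loss} then gives loss at least $\Gamma>\epsilon$, so $M^*=\infty$ for all sufficiently small $\theta$.

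\emph{Step 3 (hybrid lower bounds for $\mathcal A_{\rm log}$ and $\mathcal A_{\rm del}$).} Condition on the hidden signs of every source-reset block. Given these signs, each $\mathsf F$ run is an adaptive sequence of the one-cycle instruments $\mathcal E_\xi$, interleaved with free logical operations that may act on a shared private register. The corresponding $\mathsf R$ run replaces every cycle by $\mathcal E_{\rm av}=\tfrac12(\mathcal E_++\mathcal E_-)$.

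A telescoping hybrid argument swaps one channel at a time. Because the diamond norm is stable under adaptive composition with ancillas, each swap costs at most $\tfrac12\|\mathcal E_\xi-\mathcal E_{\rm av}\|_\diamond=\beta_{\rm ret}/2$. Averaging over the signs uses convexity of trace distance. This yields $\Dtr(\sigma_{\mathsf F},\sigma_{\mathsf R})\le m\beta_{\rm ret}/2=O_{\C}(m\theta^{d_X})$ by \cref{eq:retained-orders-main}.

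For a delayed query, the two query channels satisfy $\mathcal Q_{\mathsf F,2}-\mathcal Q_{\mathsf R,2}=\tfrac12(\mathcal E_{\rm av}(2\theta)-\mathcal E_{\rm id})$. I expect the diamond norm of $\mathcal E_{\rm av}(2\theta)-\mathcal E_{\rm id}$ to be $O_{\C}(\theta^2)$: the nonzero-syndrome weight is $p_{\rm nz}=O(\theta^2)$, and the zero-syndrome branch is $\theta^2$-close to the identity. Since each query uses a fresh source, the same hybrid applies query by query. Combining cycles and queries gives
\[
\Dtr\le O_{\C}(m\theta^{d_X}+N_{\rm del}\theta^2)\le O_{\C}(M\theta^{2}),
\]
using $d_X\ge3$.

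\Cref{prop:calibration-loss} then forces $M=\Omega_{\C,\epsilon}(\theta^{-d_X})=\Omega(T_\pi)$ in $\mathcal A_{\rm log}$ and $M=\Omega_{\C,\epsilon}(\theta^{-2})$ in $\mathcal A_{\rm del}$, because $\Gamma-\epsilon$ is bounded below by a positive constant.

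\emph{Step 4 (upper bounds).}
\begin{itemize}
\item For $\mathcal A_{\rm log}$: run a constant number of $\Zbar=+1$ memory tests of $T_\pi$ cycles each, then apply the fixed correction $I$ or $\Xbar$ selected by majority vote. Each test errs with probability $O_{\C}(\theta)$, so the loss is $O_{\C}(\theta)<\epsilon$ and the cost is $O(T_\pi)$.
\item For $\mathcal A_{\rm del}$: use $N=C_{\epsilon}/p_{\rm nz}=O_{\C,\epsilon}(\theta^{-2})$ delayed queries and threshold the count of nonzero syndromes at $\tfrac34 Np_{\rm nz}$. A Chernoff bound makes the misidentification probability small enough that the resulting loss, at most the misidentification probability plus $O_{\C}(\theta)$, is below $\epsilon$.
\end{itemize}

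\emph{Main obstacle.} I expect the hardest part to be making Step 3 rigorous under the full generality of the access classes. The difficulty is adaptivity across runs through a shared private quantum register while the fixed sign persists only within a source-reset block. The proof must show that conditioning on the signs of every block leaves a comb built from product instruments, so that the per-slot diamond-norm hybrid is valid. It must also verify the $O(\theta^2)$ diamond bound for $\mathcal E_{\rm av}(2\theta)-\mathcal E_{\rm id}$, including the small logical rotation on the zero-syndrome branch.
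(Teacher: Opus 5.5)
Your proposal is correct and follows the paper's proof essentially step for step. It uses the conflict bound $\Gamma^{\mathrm{CPTP}}_{\mathsf F\mathsf R}\ge\tfrac12-\RefInf{\mathsf F}-\RefInf{\mathsf R}$ fed into \cref{prop:calibration-loss}, a hidden-sign hybrid giving \cref{eq:hybrid} (your conditioning on the signs plus convexity is equivalent to the paper's inaccessible sign register), exact equality of syndrome-only outputs, and the logical memory test and Bernoulli nonzero-syndrome test for achievability. The only local variation is the delayed-query bound: you go through the logical branch structure, $\|\mathcal E_{\rm av}(2\theta)-\mathcal E_{\rm id}\|_\diamond\le2p_{\rm nz}+O_{\C}(\theta^{2d_X})$, since the fair-sign average removes the zero-syndrome rotation, whereas the paper uses the physical identity $\tfrac12(\mathcal U_+^2+\mathcal U_-^2)-\mathcal U_{\rm av}^2=\mathcal U_{\rm odd}^2$ to obtain the explicit bound $2n^2\theta^2$. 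Also, the leading $\tfrac12$ in your first bullet of Step 1 is a stray factor and should be dropped.
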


The two experiments above give the upper bounds. The lower bounds limit the information that any retained calibration state can carry, before any measurement. Recall the one-cycle instruments $\mathcal E_\pm$ of \cref{sec:visibility}. The diamond norm allows extension by the identity on an arbitrary ancillary register, and their sign sensitivity is
\begin{equation}
 \beta_{\rm ret}
 =\left\|\frac{\mathcal E_+-\mathcal E_-}{2}\right\|_\diamond
 =2\sum_s|u_sv_s|=\Theta_{\C}(\theta^{d_X}),
 \label{eq:beta}
\end{equation}
where the syndrome blocks make the absolute contributions add. Although $\mathsf F$ and $\mathsf R$ have exactly the same averaged single-cycle channel, this sensitivity limits how fast information about the sign, and hence about the model, accumulates over a longer run. We bound the output distance by replacing one noise call at a time, following the hybrid method~\cite{bbbv1997} and using trace-norm contractivity,
\begin{equation}
 \Dtr(\sigma_{\mathsf F},\sigma_{\mathsf R})
 \le \frac{m\beta_{\rm ret}}2
              +2N_{\rm del}n^2\theta^2.
 \label{eq:hybrid}
\end{equation}
Combining this bound with \cref{prop:calibration-loss} and the bound $\Gamma^{\mathrm{CPTP}}_{\mathsf F\mathsf R}\ge1/2 -\RefInf{\mathsf F}-\RefInf{\mathsf R}$ on the conflict of \cref{eq:general-conflict} for the pair $\mathsf F,\mathsf R$ gives
\begin{equation}
 \begin{split}
 &\max_h\{\FC{h}-\Fid{h}(\mathcal G)\}\\
 &\quad\ge\left[\frac{1-\Dtr(\sigma_{\mathsf F},\sigma_{\mathsf R})}{2}
       -\RefInf{\mathsf F}-\RefInf{\mathsf R}\right]_+.
 \end{split}
 \label{eq:joint}
\end{equation}
Here $\sigma_h$ is the full retained calibration state, including all classical and quantum outputs, and $\Dtr(\rho,\tau)=\|\rho-\tau\|_1/2$. The joint processor $\mathcal G$ of \cref{sec:calibration-loss} acts on this state and the recorded logical data output, so the bound covers strategies that never measure the calibration output before recovery. \Cref{app:calibration} gives the proofs. Syndrome-only outputs have the same state under both models for every budget, so the loss stays at its uncalibrated value, which exceeds $\epsilon$ for small $\theta$.

The leading sensitivity also yields explicit constants relative to the storage time: $M^*_{\mathcal A_{\rm log},\epsilon}/T_\pi$ lies asymptotically between $2(1-2\epsilon)\binom{2r}{r}/(\pi4^r)$ and $1$, which for $d_X=3$ gives the interval $[(1-2\epsilon)/\pi,1]$ (\cref{app:cost-prefactors}).

\subsection{Longer extraction delays}

Two-interval delayed-extraction queries give the first step of a graded tradeoff between extraction delay and cost. Let $\mathcal A_{\rm del}(\ell)$ contain protocols using complete QEC cycles together with independent fresh delayed-extraction queries of any chosen length $2\le \lambda\le\ell$. Each query is an uninterrupted run of $\lambda$ noise intervals followed by a single syndrome extraction and the fixed lookup-table correction. Its channel $\mathcal Q_{h,\lambda}$ has the same input and output systems as the two-interval query, with arbitrary ancillary extensions allowed. All other permissions and the fixed-budget convention are unchanged. In particular, $\mathcal A_{\rm del}(1)=\mathcal A_{\rm log}$ and $\mathcal A_{\rm del}(2)=\mathcal A_{\rm del}$.

\begin{proposition}[Maximum-delay interpolation]
\label{thm:delay-cost}
For every code and fixed lookup table as in \cref{def:family} and every $0<\epsilon<1/2$, the smallest calibration interval budget $M^*_\epsilon(\theta,\ell)\coloneqq M^*_{\mathcal A_{\rm del}(\ell),\epsilon}$ satisfies
\begin{equation}
 M^*_\epsilon(\theta,\ell)
 =\Theta_{\C,\epsilon}\!\left(
   \frac{1}{\theta^{d_X}+\min\{(\ell-1)\theta^2,\theta\}}
                           \right).
 \label{eq:delay-cost}
\end{equation}
The constants and the range of sufficiently small angles are uniform in all integers $\ell\ge1$, including choices of $\ell$ that depend on the angle. The data circuit, the storage time $T_\pi$, and the noise-aware final CPTP benchmark do not depend on $\ell$.
\end{proposition}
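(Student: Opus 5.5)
The plan is to prove matching upper and lower bounds, following the template of \cref{thm:costs}. The target is the rate $\theta^{d_X}+\min\{(\ell-1)\theta^2,\theta\}$.

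For the upper bound, take the better of two protocols. The first is the single logical memory test of length $T_\pi=\Theta_{\C}(\theta^{-d_X})$ from \cref{sec:calibration}. The second repeats delayed queries of a single chosen length $\lambda=\min\{\ell,\lceil c/\theta\rceil\}$, with $c$ a small code-dependent constant.

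For a $\lambda$-interval query, the sign sum $S=\sum_j\xi_j$ has the following behaviour. Under $\mathsf F$, $|S|=\lambda$. Under $\mathsf R$, $S$ is a centred binomial with $\mathbb E S^2=\lambda$. The query rotates by the total angle $S\theta$ before a single extraction. I therefore use the exact formula
\[
\Pr(s\ne0\mid S)=p(|S|\theta),
\]
where $p(\vartheta)$ is the nonzero-syndrome probability after the rotation $U_+(\vartheta)$. For $\lambda\theta\le c$ this gives $p(\vartheta)=\kappa_{\C}\vartheta^2+O(\vartheta^4)$. Comparing the two models:
\begin{itemize}
\item under $\mathsf F$ the nonzero probability is about $\kappa_{\C}\lambda^2\theta^2$;
\item under $\mathsf R$ it is about $\kappa_{\C}\lambda\theta^2$.
\end{itemize}
Both are bounded away from one, and their ratio $\lambda\ge2$. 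For a Bernoulli pair with success rates $p_{\mathsf F}\ge2p_{\mathsf R}$, the squared Hellinger or Chernoff information per query is $\Theta(p_{\mathsf F})=\Theta(\lambda^2\theta^2)$. A fixed error probability therefore needs $O(1/(\lambda^2\theta^2))$ queries, which is $O(1/(\lambda\theta^2))$ intervals.

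This bound is $O(1/((\ell-1)\theta^2))$ when $\ell\lesssim 1/\theta$, and $O(\theta^{-1})$ once $\lambda$ saturates at $c/\theta$. Choosing the final fixed correction from the test outcome and using $\RefInf{h}=O_{\C}(\theta)$ gives loss at most $\epsilon$, as in \cref{sec:calibration}. Taking the minimum of the two protocols gives $O(1/\max\{\theta^{d_X},\min\{(\ell-1)\theta^2,\theta\}\})$, which matches \cref{eq:delay-cost} up to a factor of two. Uniformity in $\ell$ holds because $\lambda$ is capped.

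For the lower bound, combine \cref{eq:joint} with a hybrid bound that generalizes \cref{eq:hybrid}. It suffices to show
\[
\Dtr(\sigma_{\mathsf F},\sigma_{\mathsf R})\le \frac{m\beta_{\rm ret}}{2}+C_{\C}\sum_{\text{queries}}\min\{(\lambda-1)\lambda\theta^2,\lambda\theta\}.
\]
Since queries have fresh resets, the hybrid argument replaces one whole query at a time, so each query contributes $\|\mathcal Q_{\mathsf F,\lambda}-\mathcal Q_{\mathsf R,\lambda}\|_\diamond/2$. The complete cycles contribute $\beta_{\rm ret}$ per cycle, as before. Once the displayed bound holds, the loss can be at most $\epsilon<1/2$ only if this sum is $\Omega_\epsilon(1)$. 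The cost per unit of distance is then at least $\min\{1/\beta_{\rm ret},\,\lambda/\min\{\lambda^2\theta^2,\lambda\theta\}\}$, which yields the claimed budget. The factor $\lambda-1$ versus $\lambda$ is absorbed for $\lambda\ge2$.

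The main obstacle is the per-query diamond bound, which must hold uniformly in $\lambda$. I handle it in two regimes.

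\textbf{Regime $\lambda\theta\le 1$.} Write
\[
\mathcal Q_{h,\lambda}=\mathbb E_h[\mathcal C\circ\mathrm{Ad}_{U_+(S\theta)}],
\]
where $\mathcal C$ denotes extraction and correction. Expand $U_+(S\theta)=I-iS\theta\sum_jX_j+O(S^2\theta^2)$. The zeroth-order terms agree. The first-order terms involve $\mathbb E S=0$ in both models. The second-order terms differ through $\mathbb E S^2$, namely $\lambda^2$ versus $\lambda$. Remainders are controlled by $\mathbb E|S|^3\theta^3\le\lambda^3\theta^3\le\lambda^2\theta^2$. This gives the bound $O(n^2\lambda^2\theta^2)$, as in the two-interval case.

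\textbf{Regime $\lambda\theta>1$.} Here I need $O(\lambda\theta)$. The trivial bound of $1$ suffices because $\lambda\theta\ge1$.

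Both pieces are thus elementary, but they require care with the difference $\mathbb E_{\mathsf F}S^2-\mathbb E_{\mathsf R}S^2=\lambda(\lambda-1)$. This difference is exactly what makes $\ell=1$ reduce to \cref{thm:costs} with no spurious $\theta^2$ term.
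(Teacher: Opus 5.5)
Your proposal is correct and follows essentially the paper's proof. The lower bound combines a per-query diamond bound of order $n^2\lambda^2\theta^2$, capped at one, with the whole-query hybrid argument and \cref{eq:joint}; the upper bound is a Bernoulli nonzero-syndrome test with query length $\min\{\ell,\Theta_{\C}(1/\theta)\}$, and the complete-cycle logical test is needed only at $\ell=1$. The one difference is that the paper obtains $\min\{1,2n^2\lambda^2\theta^2\}$ from the second-order Taylor remainder with product-rule derivative bounds, which also allows controls inside a query, so it needs neither your third-order remainder estimate nor the split at $\lambda\theta=1$.
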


For a length-$\lambda$ delayed-extraction query, the net rotation is $\lambda\theta$ under $\mathsf F$ and $S_\lambda\theta$ under $\mathsf R$, where $S_\lambda$ is the sum of $\lambda$ independent fair signs. Recall that $\mathsf R$ resamples the sign in every interval, including within a query. While $\lambda\theta$ is small, the nonzero-syndrome probabilities are of order $\lambda^2\theta^2$ and $\lambda\theta^2$, respectively. A test with fixed confidence therefore needs $O_{\C,\epsilon}((\lambda^2\theta^2)^{-1})$ queries, that is, $O_{\C,\epsilon}((\lambda\theta^2)^{-1})$ noise intervals. Choosing $\lambda$ of order $\theta^{-1}$ reaches inverse-linear cost, and longer blocks cannot improve this order [\cref{fig:overview}(c)]. Longer queries amplify a coherent rotation, as in phase-estimation-based gate calibration~\cite{kimmel2015}. Here the angle is known, and cost is measured against a fixed target recovery loss. \Cref{app:cadence} proves the matching lower bound for the complete output and gives a test that is uniform at small angles.

Allowing every length up to $\ell$, rather than exactly $\ell$, matters because, as $\lambda\theta$ approaches $\pi$, the fixed-sign rotation returns close to the identity up to a global phase, so the best query can be shorter than $\ell$. Extraction timing thus determines which interference is observable, while the benchmark remains the original protected memory experiment. Stopping rules and source-reset times chosen adaptively lie outside these results.

\section{Changing the QEC circuit changes the decoding problem}
\label{sec:protection}

Calibration adds information about a given data run. Additional physical or logical controls change the experiment itself. We compare the original circuit with ideal physical Pauli randomization, logical Pauli twirling over $\{I,\Zbar\}$, and logical dynamical decoupling (a midpoint echo). The same three controls have been compared for correlated coherent noise~\cite{rajmohan2026}, and Pauli twirling can turn noise correlations into a resource for circuit fidelity under Gaussian noise~\cite{brillant2026}. Here we ask how these controls change the recovery ambiguity and the need for calibration.

\subsection{Physical Pauli randomization}

We conjugate each noise interval by an independent, uniformly random physical Pauli, using ideal controls and compensating the known frame. This implements the ideal noise-tailoring step of randomized compiling~\cite{wallman2016}. Noisy or gate-dependent implementations require additional assumptions.

Discarding the randomization record gives the interval-by-interval Pauli twirl, and the SPP describes the averaged experiment exactly~\cite{kam2026}. For any $[[n,k]]$ stabilizer memory with ideal syndrome extraction and Pauli corrections, the SPP therefore determines the joint distribution $p(y,L)$ of syndrome records and residual logical Paulis $L\in\mathcal P_k$. Choosing the most probable logical class for each record maximizes entanglement fidelity, even among final CPTP recoveries, because, for each record, the fidelities of any recovery against the orthogonal Pauli branches sum to at most one~\cite{fletcher2008}:
\begin{equation}
 \begin{aligned}
 d^\star(y)&\in\operatorname*{arg\,max}_L p(L\mid y),\\
 \FP{\mathrm{twirl}}=\FC{\mathrm{twirl}}
   &=\sum_y\max_L p(y,L).
 \end{aligned}
 \label{eq:pauli-optimum}
\end{equation}
This sufficiency follows directly from the exact SPP description and requires neither the CSS nor the parity assumptions of \cref{def:family}. It settles how to decode the twirled circuit, but not how that circuit performs relative to the original circuit decoded with full knowledge of the noise model.

For $\mathsf F/\mathsf R$, twirling gives independent physical $X$ errors with probability $\sin^2\theta$ under either model, which removes the recovery ambiguity without calibration. Let $F_{\rm tw}(T)$ be the fidelity of the twirled circuit with the original lookup table and a final identity correction [\cref{eq:twirl-fidelity}]. We compare it with the strongest benchmark of the original circuit, its full-record, noise-aware CPTP optimum.

\begin{proposition}[Physical-randomization performance for $\mathsf F/\mathsf R$]
\label{prop:protection-comparison}
Fix a code in \cref{def:family} and a minimum-weight $X$ lookup table. Let $m_s$ be the number of weight-$(r+1)$ physical $X$ errors with syndrome $s$ that the table corrects to $\Xbar$. At the original storage time $T_\pi$,
\begin{align}
 1-F_{\rm tw}(T_\pi)
    &=\frac{\pi\theta}{|b_{\C}|}\sum_s m_s+o_{\C}(\theta),\notag\\
 1-\FC{h}(T_\pi)
    &=\frac{\pi\theta}{|b_{\C}|}\sum_s m_s^2+o_{\C}(\theta),
       \quad h\in\{\mathsf F,\mathsf R\}.
 \label{eq:protection-leading}
\end{align}
The first leading coefficient is at most the second, with equality exactly when every $m_s\in\{0,1\}$.
\end{proposition}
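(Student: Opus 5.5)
The plan is to compute both leading coefficients directly: the twirled circuit as a Markov chain of independent logical flips, and the original circuit from the conditional multipliers of \cref{eq:conditional}. Both are then evaluated at $T_\pi=\pi/|b_{\C}|\,\theta^{-d_X}(1+O_{\C}(\theta^2))$, which follows from \cref{eq:horizon} and \cref{eq:orders}, since $\alpha=B/A+O(B^3)$. The final comparison of coefficients is elementary: $m_s\le m_s^2$ for every nonnegative integer $m_s$, with equality iff $m_s\in\{0,1\}$. Summing over $s$ gives the last sentence of the proposition.

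\emph{Twirled circuit.} Twirling produces i.i.d.\ physical $X$ errors with probability $p=\sin^2\theta$ per qubit and interval, under both models. After the lookup table, each cycle independently applies $\Xbar$ with probability
\[
p_L=\sum_{P\ \text{corrected to}\ \Xbar}p^{|P|}(1-p)^{n-|P|}.
\]
The table corrects all errors of weight at most $r$, so every error mapped to $\Xbar$ has weight at least $r+1$. By definition of $m_s$, the weight-$(r+1)$ ones number $\sum_s m_s$. Hence
\[
p_L=\Bigl(\sum_s m_s\Bigr)\theta^{d_X+1}+O_{\C}(\theta^{d_X+3}).
\]
The residual logical channel is a Pauli mixture with flip probability $(1-(1-2p_L)^T)/2$. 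With the final identity correction,
\[
F_{\rm tw}(T)=\tfrac12\bigl(1+(1-2p_L)^T\bigr).
\]
Since $T_\pi p_L=O_{\C}(\theta)$, expanding gives $1-F_{\rm tw}(T_\pi)=T_\pi p_L+O(\theta^2)$. This yields the first line of \cref{eq:protection-leading}.

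\emph{Original circuit.} By \cref{thm:regret} and its appendix reduction, each conditional channel is the mixture $(1\pm g_h)/2$, so
\[
\FC{h}=\tfrac12\bigl(1+\mathbb E_y|g_h(y)|\bigr),
\]
with $y$ distributed by $q_y$. The fixed corrections of \cref{eq:defects} already give the lower bound $\FC{h}\ge 1-\RefInf{h}$. By \cref{eq:defect-order} together with $a_{\C}=2\sum_s m_s^2$, we have $\RefInf{h}=\pi\theta\sum_s m_s^2/|b_{\C}|+O(\theta^2)$, which is exactly the claimed coefficient.

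It remains to show the matching upper bound $\FC{h}\le1-\RefInf{h}+o_{\C}(\theta)$. Equivalently, the gain from switching the correction on records where $g_h$ has the ``wrong'' sign must be $o(\theta)$. That gain is $\mathbb E[|g_h|-\sigma_h g_h]$, where $\sigma_{\mathsf R}=+1$ and $\sigma_{\mathsf F}=-1$. It is bounded by $2\Pr(\sigma_hg_h<0)$.

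For $\mathsf R$, I would use \cref{lem:minimum-weight-small-phase}, which gives $\phi_s=O_{\C}(\theta)$ for a minimum-weight table. Then every factor $\cos\phi_{s_j}$ is positive, so $g_{\mathsf R}>0$ for every record once $\theta$ is small, and the gain is exactly zero.

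For $\mathsf F$, a wrong sign requires $|\phi_{\rm tot}(y)\mp\pi|>\pi/2$. Here $\phi_{\rm tot}$ is a sum of $T_\pi$ i.i.d.\ angles, each bounded by $O(\theta)$. The mean of the sum is $\pm\pi+O(\theta)$ and its variance is $O(\theta)$, as in the drift--diffusion paragraph of \cref{sec:regret}. A Hoeffding bound with increments bounded by $O(\theta)$ then gives $\Pr\le\exp(-c/(T_\pi\theta^2))$, which is superpolynomially small because $T_\pi\theta^2=O(\theta^{2-d_X})$ … this needs care.

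\emph{Main obstacle.} The care needed is that Hoeffding's inequality involves $T_\pi\theta^2\to\infty$, so it is useless here. Instead I would use Bernstein's inequality with variance $O(\theta)$ and increment bound $O(\theta)$. This gives $\Pr\le\exp(-c/\theta)=o(\theta)$, which closes the $\mathsf F$ case.

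If the sign-switching gain turns out to be awkward to control in this way, I would fall back on a direct second-order expansion, $1-|g_{\mathsf F}|\approx(\phi_{\rm tot}\mp\pi)^2/2$. Its mean is the variance plus the squared bias, which reproduces $\RefInf{\mathsf F}$ and confirms that the noise-aware and fixed-correction infidelities agree to order $\theta$. The overall hinge of the argument is this concentration step, where the minimum-weight hypothesis is exactly what provides the uniform $O(\theta)$ bound on the branch angles.
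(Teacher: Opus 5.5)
Your proposal is correct and follows essentially the same route as the paper. For the randomized circuit you use independent logical flips with $p_{\rm tw}=\sum_s m_s\theta^{d_X+1}+O_{\C}(\theta^{d_X+3})$. For the original circuit you use the reduction $\FC{h}=\tfrac12(1+\mathbb E_y|g_h(y)|)$, positivity of $g_{\mathsf R}$ from the small-phase lemma, and Bernstein's inequality (variance and increment bound both $O_{\C}(\theta)$) to show that the $\mathsf F$ gain over the fixed $\Xbar$ correction is exponentially small.

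The only differences are minor. You quote the endpoint expansion of $\RefInf{h}$ together with $a_{\C}=2\sum_s m_s^2$ from the main text, whereas the paper rederives it from $A$, $|A+iB|$, and the rounding of $T_\pi$. Your factor-of-two slip in the switching gain, which is really $\tfrac12\mathbb E[|g_h|-\sigma_h g_h]$, is harmless because you only need an upper bound.
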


Coherent amplitudes add before squaring, giving $m_s^2$, whereas twirling adds probabilities, giving $m_s$. For Shor every $m_s\in\{0,1\}$, so the two coefficients agree. For Steane each nonzero syndrome has $m_s=3$, and twirling reduces the leading infidelity threefold. Jain et al.\ derived the same factor for a single round of Steane-code correction~\cite[Sec.~IV~A]{jain2023}. Our comparison differs in two respects: it is made at the storage time $T_\pi$, after many cycles, and it credits the original circuit with arbitrary record-dependent final CPTP recoveries and knowledge of the model (\cref{app:protection-comparison}). At finite angles the ordering can reverse, as for Steane at $\theta=0.1$ (\cref{tab:finite-angle}).

\subsection{What the randomization record changes}
\label{sec:retained-randomizers}

Keeping the record of applied Paulis cannot lower the attainable fidelity and can raise it, because recovery may then depend on the record. The SPP describes only the averaged experiment and need not determine this record-dependent optimum. Randomization records have also been used to estimate coherent errors~\cite{vyas2026}.

An exact two-interval example shows that the SPP can indeed fall short. Consider any stabilizer code encoding at least one qubit and a Hermitian physical Pauli $P$ in its normalizer that represents a nontrivial logical Pauli $\bar P$. In each fresh two-interval run, a fair classical sign $\xi$ is drawn, and the noise applies $W_\xi$ in the first interval and $W_{c\xi}$ in the second, where
\begin{equation}
 W_\xi\coloneqq e^{-i\xi\pi P/4},\qquad c\in\{+1,-1\}.
 \label{eq:normalizer-construction}
\end{equation}
The two values of $c$ define the two models. Each interval is dressed with an independent uniform physical Pauli $Q_j$, and its frame is compensated. Since $Q_jPQ_j^\dagger=\eta_jP$ with $\eta_j\in\{+1,-1\}$, the retained record determines each $\eta_j$. The net logical channel is conjugation by $\bar P$ when $c\eta_1\eta_2=+1$, and the identity otherwise.

The noise syndrome is always zero. Each interval has Pauli weights $1/2$ on $I$ and $P$, independent of the sign, so both models have the same SPP, uniform on $\{I,P\}^2$. Changing $c$ exchanges the required correction for every record, so the noise-aware record-dependent recovery attains fidelity one. For each record, the two possible logical channels have orthogonal Choi vectors, and the fidelities of any CPTP recovery under the two models sum to at most one. An equal mixture of the $I$ and $\bar P$ corrections attains $1/2$ under both models, so the best common recovery has worst-case fidelity $1/2$. Discarding the record gives the same channel $(\id+\operatorname{Ad}_{\bar P})/2$ in both models, with optimal fidelity $1/2$ even when the model is known.

Here the SPP suffices only after the record is discarded, at the cost of lowering the noise-aware optimum from one to $1/2$. Keeping the record permits fidelity one when $c$ is known, but the SPP does not determine $c$. At leading order, twirling the Steane $\mathsf F/\mathsf R$ circuit simplified the information requirement and improved protection at once, whereas here the two goals conflict. An accuracy-dependent variant that independently redraws $c$ with opposite biases of magnitude $2\epsilon$ requires $\Theta(\epsilon^{-2})$ independent two-interval queries to attain target loss $\epsilon$ (\cref{sec:accuracy-dependence}). This example uses classical memory and a finite-angle normalizer rotation, outside the weak local noise setting of \cref{sec:family}.

\subsection{Logical controls and the scope of calibration}
\label{sec:echo}

Logical dynamical decoupling using code normalizers has also been analyzed together with QEC in effective Pauli-noise models~\cite{kasatkin2026}. By \cref{eq:global-z-conjugation}, the physical Pauli $Z^{\otimes n}$ represents $\Zbar$ and reverses the sign. Logical Pauli twirling over $\{I,\Zbar\}$ is implemented with ideal logical controls: independently dress each interval with $I$ or $Z^{\otimes n}$, compensate the frame, and discard the record of applied Paulis. This averaging turns $\mathsf F$ exactly into $\mathsf R$, including the syndrome output. For logical dynamical decoupling, we use a midpoint echo: an ideal logical $\Zbar$ pulse at the midpoint and compensation of the known final frame. This changes the syndrome-averaged $\mathsf F$ multiplier to $|A-iB|^T$ for even $T$, or $A|A-iB|^{T-1}$ for odd $T$, while the $\mathsf R$ multiplier stays $A^T$. The midpoint echo is a logical Hahn echo~\cite{hahn1950}. Because $\Zbar$ anticommutes with $\Xbar$, it reverses all later branch angles. For even $T$ this cancels the mean drift under $\mathsf F$, leaving syndrome-induced fluctuations, while the independent signs of $\mathsf R$ are statistically unchanged. Both controls remove the order-one recovery conflict at $T_\pi$ and attain the original $\mathsf R$ optimum to first order.

For the generalized Shor code, echo is exactly optimal. This code has odd distance $d_X$ and consists of $d_X$ rows of $d_X$ qubits, with $Z$ checks within rows and products of adjacent full-row $X$ operators as $X$ checks. Suppose $0<\theta<\pi/(4d_X)$ and $T$ is even, the data circuit may include ideal syndrome-adaptive logical Pauli gates, and the recovery is a full-record final Pauli correction. Then a midpoint echo attains the optimal worst-case absolute fidelity $(1+A^T)/2$ over $\mathsf F$, $\mathsf R$, and their mixtures. The $\mathsf R$ model supplies the matching upper bound, proved in \cref{app:logical-echo}. General intermediate logical CPTP maps and physical Pauli randomization lie outside this class of controls. Physical Pauli randomization can perform strictly better, as the Steane comparison above shows.

Calibration is therefore the relevant tool when the data circuit is given, or when additional physical or logical controls are costly. With the ideal physical controls assumed above, the leading-order advantage of physical randomization over the original noise-aware optimum is $\pi\theta\sum_s(m_s^2-m_s)/|b_{\C}|$, which vanishes for Shor. The additional infidelity caused by imperfect controls should be weighed against this advantage and against the performance of the uncalibrated original circuit. The ideal comparison thus shows how the permitted QEC circuit and the retained records determine what the decoder must know.
\section{Discussion and outlook}\label{sec:discussion}

The information a decoder needs is set jointly by the storage circuit, the recovery actions, and the records available to them. In the circuit studied here, complete syndrome statistics and the SPP leave a temporal ambiguity that changes the best recovery. The $\mathsf F/\mathsf R$ separation arises from coherent logical evolution driven by a non-Markovian noise source with purely classical memory, which syndromes and the SPP cannot distinguish from a Markovian one. Extraction after every interval leaves the field sign unobserved, allowing its temporal correlations to alter the logical output without altering the syndrome record. At suitable storage times this ambiguity costs nearly half of the entanglement fidelity, even though each model alone is almost perfectly correctable. Identifying the relevant fidelity differences suffices to select an optimal Pauli recovery. Reconstructing the full noise process is unnecessary.

The resource results identify three ways to resolve the ambiguity. Logical preparation and readout expose the information carried by the protected logical output at the original extraction rate. Delayed extraction exposes a stronger physical signal in separate calibration runs. Logical dynamical decoupling and Pauli twirling instead change the QEC circuit itself. In the ideal $\mathsf F/\mathsf R$ comparison they remove the need for calibration, and their fidelity and implementation costs become the relevant measures. The two randomization examples show that a simpler noise description and better protection can pull in opposite directions. Keeping an available record never lowers the achievable fidelity. It can raise the optimum, but attaining that higher optimum can require more model information.

For an experiment with a prescribed data circuit, extraction can be delayed in separate calibration runs while the data run keeps its original QEC timing. The distance-three examples at $\theta=0.05$ have storage times $T_\pi$ of about $930$ and $2100$ QEC cycles for Steane and Shor, respectively (\cref{tab:finite-angle}). At this angle, the minimax losses are $0.20$ (Steane) and $0.36$ (Shor), increasing to $0.41$ and $0.47$ at $\theta=0.01$. The stated storage times are not calibration budgets, which also depend on the target loss and on the code-dependent constants in the bounds.

The matching costs hold for a fixed code in the weak-angle limit, with known interactions, noise realizations that are independent across runs given the model, trusted logical preparation and readout, and ideal syndrome extraction. They count noise intervals rather than gates or laboratory time. Two extensions go beyond these assumptions. Commuting $X$ faults and incorrect syndrome reports preserve the exact blindness of the reported histories under the conditions of \cref{app:reporting-faults}. For independent physical $X$ faults and i.i.d. report masks, distance-three codes retain the near-half loss at rates proportional to $\theta^2$, with trusted final syndrome cleanup. With a precessing environment qubit, the observed statistics remain identical, the physical noise requires quantum memory, and the recovery conflict persists at finite parameters. Extending the cost results to these settings is an open problem.

Once the effects or recorded logical channels are given explicitly, the span criterion is a rank test, the finite-model minimax Pauli loss is a linear program, and the CPTP optima are semidefinite programs. Constructing these inputs from a QEC circuit can, however, scale exponentially with storage time, and the cost results do not give a scalable algorithm for general correlated-noise decoding (\cref{app:computation}).

Several further directions remain. Unknown couplings and preparation or readout errors add unknowns beyond the initial environment state. A source that cannot be reset independently requires a joint analysis of calibration and storage instead of independent training runs. Adaptive choices of reset times, stopping rules, and control sequences define resource classes beyond the prescribed timing considered here. For scalable decoding, a further goal is to combine the identified recovery observables with controlled approximations of the recorded logical process that preserve the recovery choice or bound the resulting fidelity loss.

\begin{acknowledgments}

We acknowledge helpful discussions with Zachary Mann, Raymond Laflamme, and Shilin Huang. 
Part of this work was carried out while N.C. was at the Perimeter Institute and the Institute for Quantum Computing, and was partially supported by funding from Mike and Ophelia Lazaridis.
This work was supported by the NUCC Quantum Computing Seed Fund.

\end{acknowledgments}

\section*{Statement of AI Usage}

The quantum model presented in \cref{sec:quantum-extension} was proposed by the authors, whereas the corresponding classical limitations were identified through iterative interactions with AI tools. AI tools also suggested the delayed-calibration procedure and helped identify its associated cost, while the authors verified the proof. 

They were further used to assist with literature searches, which was subsequently checked by the authors. The code used for the numerical calculations was developed with AI assistance. AI tools were additionally used to assist with drafting and editing the manuscript.

\section*{Data and code availability}
The code and numerical data supporting this work are available at \url{https://github.com/DaneshMorales/what-must-a-decoder-know} and in the \texttt{anc/} directory of the accompanying source package. The README provides instructions for reproducing the finite-angle table and the curve in Fig.~\ref{fig:separation}. All figure definitions and plotting data are also embedded in the manuscript source.

\label{end:maintext}
\appendix
\crefalias{section}{appendix}
\crefalias{subsection}{subappendix}
\section{Process conventions and frame tracking}
\label{app:process-conventions}

\subsection{Choi conventions}
\label{app:choi-weights}

We use unnormalized Choi operators. For a map $\mathcal M:\mathcal B(\mathcal H_{\rm in})\to\mathcal B(\mathcal H_{\rm out})$, in a fixed orthonormal input basis,
\begin{equation}
 \begin{aligned}
 J(\mathcal M)&\coloneqq \sum_{j,l}\mathcal M(|j\rangle\langle l|)
                                    \otimes|j\rangle\langle l|,\\
 \mathcal M(X)&=\Tr_{\rm in}[J(\mathcal M)(I_{\rm out}\otimes X^{\mathsf T})].
 \end{aligned}
 \label{eq:choi-convention}
\end{equation}
The Choi operator acts on $\mathcal H_{\rm out}\otimes\mathcal H_{\rm in}$. Complete positivity is $J\ge0$, and trace preservation is $\Tr_{\rm out}J=I_{\rm in}$. Connecting matching output and input wires is the link product: tensor the unshared identity factors, transpose the shared subsystem of one Choi operator in the chosen basis, multiply, and trace over that subsystem. In particular, this contraction gives $J(\mathcal N\circ\mathcal M)$ when connecting two channels. The partial transpose is a convention for contraction, not a physical transpose operation.

A deterministic $T$-interval comb with successive input and output spaces $\mathcal H_{{\rm in},j},\mathcal H_{{\rm out},j}$ has positive Choi operator $\Upsilon^{(T)}$, and there exist positive operators $\Upsilon^{(j)}$ on the first $j$ input and output spaces, $j=1,\ldots,T-1$, such that
\begin{equation}
 \begin{gathered}
 \Tr_{{\rm out},j}\Upsilon^{(j)}
       =I_{{\rm in},j}\otimes\Upsilon^{(j-1)},\\
 j=1,\ldots,T,\qquad\Upsilon^{(0)}\coloneqq 1.
 \end{gathered}
 \label{eq:comb-causality}
\end{equation}
These constraints express the absence of signaling from later interventions to earlier outputs~\cite{chiribella2009,pollock2018}. An environment coupled sequentially to the system realizes such a comb and can carry memory between intervals. Contracting fixed encoding, extraction, and correction maps, while retaining the logical intervention wires, produces the logical process tensor. Classical syndrome wires are diagonal in their designated record bases. Once all logical interventions are inserted, the remaining Choi blocks are $J(\mathcal N_{h,y})$, with $\sum_y\Tr_{\rm out}J(\mathcal N_{h,y})=I_{\rm in}$.

For a logical map $\mathcal M:\mathcal B(\mathcal H_L)\to\mathcal B(\mathcal H_L)$, entanglement fidelity uses the normalized state $|\Omega_D\rangle\coloneqq D^{-1/2}\sum_j|j,j\rangle$. For any Kraus operators $K_\mu$ of $\mathcal M$,
\begin{equation}
 \begin{aligned}
 F_{\mathrm e}(\mathcal M)
 &\coloneqq \langle\Omega_D|(\mathcal M\otimes\id)
   (|\Omega_D\rangle\langle\Omega_D|)|\Omega_D\rangle\\
 &=D^{-2}\sum_\mu|\Tr K_\mu|^2.
 \end{aligned}
 \label{eq:kraus-fidelity}
\end{equation}
The last formula extends linearly to trace-nonincreasing branches. All branch fidelities in the paper are unnormalized contributions to the unconditional memory fidelity.

For a logical Pauli $P$, define the unit Pauli--Choi vector $|P\rangle\!\rangle\coloneqq(P\otimes I)|\Omega_D\rangle$ on $\mathcal H_L\otimes\mathcal H_L$, and the rescaled branch operator $\widehat J_{h,y}\coloneqq J(\mathcal N_{h,y})/D$ on the same space. Pauli orthogonality and Eq.~\eqref{eq:kraus-fidelity} give
\begin{equation}
 \begin{aligned}
 w_h(y,L)&=\langle\!\langle L|\widehat J_{h,y}|L\rangle\!\rangle,\\
 \Tr\widehat J_{h,y}&=p_h(y)=\Tr\mathcal N_{h,y}(I/D).
 \end{aligned}
 \label{eq:choi-recovery-weights}
\end{equation}
Thus $\widehat J_{h,y}$ retains the branch probability for a maximally mixed logical input; it is not conditioned to have unit trace. Keeping the record gives $\sum_y|y\rangle\langle y|\otimes\widehat J_{h,y}$, whereas summing the blocks without the record register discards it.

To see how logical histories contribute to these weights, consider ideal syndrome extraction, a fixed lookup table, and identity logical operations between intervals. The corrected joint branch in cycle $j$ acts on $\mathcal H_L\otimes\mathcal H_E$ as
\begin{equation}
 \begin{aligned}
 M_{j,s}&\coloneqq(V^\dagger R_s\Pi_s\otimes I_E)
                     U_j(V\otimes I_E)\\
        &=\sum_{P\in\mathcal P_k}P\otimes B^{(j)}_{s,P},
 \end{aligned}
 \label{eq:logical-cycle-expansion}
\end{equation}
with environment operators $B^{(j)}_{s,P}\in\mathcal B(\mathcal H_E)$. For $y=(s_1,\ldots,s_T)$ and a logical Pauli history $\boldsymbol P=(P_1,\ldots,P_T)$, define its environment amplitude $A_{y,\boldsymbol P}$. Multiplying the cycle branches gives $M_y=\sum_LL\otimes B_{y,L}$, where
\begin{equation}
 \begin{aligned}
 A_{y,\boldsymbol P}&\coloneqq
 B^{(T)}_{s_T,P_T}\cdots B^{(1)}_{s_1,P_1},\\
 B_{y,L}&=\sum_{\boldsymbol P}
 \frac{\Tr(L^\dagger P_T\cdots P_1)}{D}\,A_{y,\boldsymbol P}.
 \end{aligned}
 \label{eq:logical-history-sum}
\end{equation}
The trace coefficient retains the phase whenever $P_T\cdots P_1$ equals $L$ up to $\pm1$ or $\pm i$. Accordingly, $w_h(y,L)=\Tr[\rho_E(h)B_{y,L}^\dagger B_{y,L}]$ contains cross terms between distinct histories with the same final Pauli class. These history coherences contribute to the diagonal entries in Eq.~\eqref{eq:choi-recovery-weights}; discarding them before connecting the intervals can change the Pauli recovery scores. The expansion is over logical histories after the QEC operations, whose coefficients already include interference within each cycle. Their diagonal data are generally distinct from the physical Pauli-history probabilities defining the SPP.

The same history amplitudes give the logical process tensor before the intermediate logical operations are inserted. Write $|\boldsymbol P\rangle\!\rangle=|P_T\rangle\!\rangle\otimes\cdots\otimes|P_1\rangle\!\rangle$. Under the ideal extraction and correction assumptions above, the rescaled Choi operator for the syndrome record $y$, acting on $(\mathcal H_L\otimes\mathcal H_L)^{\otimes T}$, is
\begin{equation}
 \begin{aligned}
 \widehat\Upsilon_{h,y}^{(T)}
 &\coloneqq D^{-T}\Upsilon_{h,y}^{(T)}\\
 &=\sum_{\boldsymbol P,\boldsymbol Q}
 \Tr_E\!\left[
 \rho_E(h)A_{y,\boldsymbol Q}^{\dagger}
 A_{y,\boldsymbol P}
 \right]\\
 &\qquad\times
 |\boldsymbol P\rangle\!\rangle
 \langle\!\langle\boldsymbol Q|.
 \end{aligned}
 \label{eq:logical-history-choi}
\end{equation}
This expression makes the coherences between logical histories explicit. The blocks are positive and satisfy
\begin{equation}
 \begin{aligned}
 0&\leq\Tr\widehat\Upsilon_{h,y}^{(T)}\leq1,\\
 \sum_y\Tr\widehat\Upsilon_{h,y}^{(T)}&=1.
 \end{aligned}
\end{equation}
Their traces give the syndrome probabilities in the Choi construction~\cite{pollock2018}, equivalently when each interval receives a fresh maximally mixed logical input, with the preceding logical output discarded and the environment retained. For the memory experiment, the prescribed intermediate operations must first be contracted with the process tensor to obtain $J(\mathcal N_{h,y})$. Its rescaled trace gives the record probability $p_h(y)$ for a maximally mixed initial logical input, as in Eq.~\eqref{eq:choi-recovery-weights}.

\subsection{Deferred commuting corrections}
\label{app:frame-defer}

\begin{lemma}[Deferred commuting corrections]
\label{lem:frame-defer}
Consider ideal syndrome projectors $\Pi_s$ of a one-logical-qubit CSS code and $X$-type corrections $R_s$ with syndrome $s$. Suppose every joint system--environment noise interval $U_j$ commutes with every $R_s$, and there are no intervening logical gates. For an actively corrected record $y=(s_1,\ldots,s_T)$, set
\begin{equation}
 \begin{gathered}
 \bar s_0\coloneqq 0,\qquad \bar s_j\coloneqq \bigoplus_{i=1}^j s_i,\\
 P_{\rm fr}(y)\coloneqq R_{s_T}\cdots R_{s_1}.
 \end{gathered}
 \label{eq:frame-record-map}
\end{equation}
Then, with identities on the environment suppressed,
\begin{align}
 &R_{s_T}\Pi_{s_T}U_T\cdots R_{s_1}\Pi_{s_1}U_1 V\notag\\
 &\hspace{8mm}=P_{\rm fr}(y)\Pi_{\bar s_T}U_T\cdots\Pi_{\bar s_1}U_1 V.
 \label{eq:frame-identity}
\end{align}
Hence correcting actively is equivalent to recording the raw syndromes and applying only the final frame $P_{\rm fr}(y)$: the two recorded instruments coincide after the invertible relabeling $s_j=\bar s_j\oplus\bar s_{j-1}$.
\end{lemma}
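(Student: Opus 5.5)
The proof is an induction on $T$ that moves every correction to the left of the circuit. Two facts are needed. First, an $X$-type Pauli $R$ with syndrome $s$ (relative to the $Z$-type checks, which are the only checks $X$ errors flip) relabels syndrome projectors:
\begin{equation}
 R\,\Pi_{t}=\Pi_{t\oplus s}\,R .
 \label{eq:proj-shift-plan}
\end{equation}
This holds because conjugating each $Z$-check by $R$ flips exactly the checks in $s$. The $X$-checks commute with $R$, so their syndrome component is unchanged; in the CSS setting the $X$-type part of $s$ is zero. Second, $R$ commutes with every noise interval $U_j$ by hypothesis. Environment identities are suppressed throughout.

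For the inductive step, assume the identity holds for $T-1$:
\[
 R_{s_{T-1}}\Pi_{s_{T-1}}U_{T-1}\cdots R_{s_1}\Pi_{s_1}U_1V
 =P_{\rm fr}(y_{T-1})\,\Pi_{\bar s_{T-1}}U_{T-1}\cdots\Pi_{\bar s_1}U_1V ,
\]
where $y_{T-1}$ is the truncated record. Left-multiply by $R_{s_T}\Pi_{s_T}U_T$. Since $U_T$ commutes with each factor of $P_{\rm fr}(y_{T-1})$, it moves to the right of the frame. Then $\Pi_{s_T}$ passes through $P_{\rm fr}(y_{T-1})$ by \eqref{eq:proj-shift-plan}. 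The syndrome of $P_{\rm fr}(y_{T-1})$ is $\bar s_{T-1}$, because syndromes of Pauli products add mod $2$ and $R_{s_i}$ has syndrome $s_i$. So
\[
 \Pi_{s_T}P_{\rm fr}(y_{T-1})=P_{\rm fr}(y_{T-1})\,\Pi_{s_T\oplus\bar s_{T-1}}=P_{\rm fr}(y_{T-1})\,\Pi_{\bar s_T},
\]
and collecting $R_{s_T}P_{\rm fr}(y_{T-1})=P_{\rm fr}(y)$ gives \eqref{eq:frame-identity}. The base case $T=1$ is the same computation with $P_{\rm fr}=I$ and $\bar s_1=s_1$, or equivalently the trivial case $T=0$.

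For the instrument statement, the map $(s_1,\ldots,s_T)\mapsto(\bar s_1,\ldots,\bar s_T)$ is a bijection with inverse $s_j=\bar s_j\oplus\bar s_{j-1}$. The deferred circuit measures the raw syndromes $\bar s_j$ and applies no correction, and its branches differ from the actively corrected branches only by the known unitary $P_{\rm fr}(y)$, which depends only on the record. After the final $V^\dagger$ or decoding, these are exactly the same CP maps, so under the relabeling the two recorded instruments coincide.

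The step needing most care is \eqref{eq:proj-shift-plan}. I will define $\Pi_s$ as the product over generators $g_i$ of $(I+(-1)^{s_i}g_i)/2$, so that $R\,g_i\,R=(-1)^{[s_R]_i}g_i$ gives the relabeling directly. I will also check that no syndrome-dependent phase appears; none does, because the $R_s$ are Hermitian Paulis and phases cancel in the branch. I will also note that the hypothesis that $U_j$ commutes with $R_s$ is what lets $U_T$ pass through the accumulated frame. For the $X$-rotation models, $U_\xi$ is built from $X$'s and so commutes with all $X$-type corrections.
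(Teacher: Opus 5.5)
Your proposal is correct and matches the paper's proof: both move each correction leftward, past later noise intervals using $[R_s,U_j]=0$ and past later projectors using the syndrome-shift relation $\Pi_t R = R\,\Pi_{t\oplus s}$, so that the accumulated shift at step $j$ is $\bar s_j$ and the collected corrections form $P_{\rm fr}(y)$. Your induction on $T$ is an explicit bookkeeping of the paper's one-line argument, and like the paper's it is an operator-level identity, so it also covers inputs entangled with a reference.
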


\begin{proof}
Use $\Pi_sR_c=R_c\Pi_{s\oplus c}$ and $[R_c,U_j]=0$ to move each correction past all later intervals and projectors. The accumulated syndrome shift at step $j$ is $\bar s_j$. The $X$ corrections commute, and $\operatorname{syn}(P_{\rm fr}(y))=\bar s_T$, where $\operatorname{syn}(P)$ denotes the syndrome of a Pauli $P$, so the final frame restores the code space. The identity holds at the operator level, before the environment is traced out, and therefore also for inputs entangled with a reference.
\end{proof}

Every syndrome-compatible final $X$-Pauli decoder on the raw record equals $P_{\rm fr}(y)$ followed by a record-dependent choice of $I$ or $\Xbar$, up to an $X$ stabilizer: its product with $P_{\rm fr}(y)^\dagger$ is an $X$-type element of the normalizer, and such elements form only two classes modulo the $X$ stabilizers. Changing the reference correction table therefore changes only a known, record-dependent logical frame, which the final recovery can undo. This gives the correction-table equivalence stated in Sec.~\ref{sec:frame-tracking}.

The optimality of Pauli recovery among CPTP maps for the $\mathsf F/\mathsf R$ example is proved separately (Appendix~\ref{app:regret}).

\section{Fidelity observables and statistical decisions}
\label{app:framework-proofs}

\subsection{Derivation of the environment effects}

Let $\widetilde K_{y,\mu}$ be a known joint branch from $\mathcal H_L\otimes\mathcal H_E$ to $\mathcal H_L\otimes\mathcal H_{E'}$. A common final environment can be obtained by padding if needed. Take orthonormal bases of $\mathcal H_E,\mathcal H_{E'}$ and first let $\rho_E=|\psi\rangle\langle\psi|$. The induced logical Kraus operators are $\langle e'|\widetilde K_{y,\mu}|\psi\rangle$. The Pauli expansion in Eq.~\eqref{eq:environment-effects} and $\Tr(LL')=D\delta_{L,L'}$ give
\begin{align}
 w_{\psi}(y,L)
 &=D^{-2}\sum_{\mu,e'}
       |\Tr[L\langle e'|\widetilde K_{y,\mu}|\psi\rangle]|^2\notag\\
 &=\sum_{\mu,e'}|\langle e'|B_{y,L,\mu}|\psi\rangle|^2
 =\langle\psi|E_{y,L}|\psi\rangle.
\end{align}
Linearity extends the identity to every $\rho_E$. Taking the logical partial trace of $\sum_{y,\mu}\widetilde K_{y,\mu}^\dagger \widetilde K_{y,\mu}=I$ and using the same orthogonality gives $\sum_{y,L}E_{y,L}=I_E$. Independently, for an arbitrary recorded instrument, the Pauli Parseval identity yields
\begin{align}
 \sum_L w_h(y,L)
 &=D^{-1}\sum_\mu\Tr(K_{h,y,\mu}^\dagger K_{h,y,\mu})\notag\\
 &=\Tr[\mathcal N_{h,y}(I/D)]=p_h(y).
 \label{eq:score-normalization}
\end{align}
Fix a reference action $L_0\in\mathcal P_k$. If $p_h(y)$ and all differences $\Delta w_y(L)\coloneqq w_h(y,L)-w_h(y,L_0)$ are known, then
\begin{equation}
 w_h(y,L_0)=\frac{p_h(y)-\sum_L\Delta w_y(L)}{D^2}.
 \label{eq:recover-scores}
\end{equation}
This reconstruction uses only the maximally mixed input distribution and does not require all conditional branches to be channels.

\subsection{Invisible perturbations and decision crossings}

For completeness, let $\Delta E$ be a Hermitian difference operator and $\mathcal V_{\rm cal}$ the span in Eq.~\eqref{eq:span}. The Hilbert--Schmidt orthogonal decomposition $\Delta E=\Delta E_\parallel+Q_E$ has $Q_E=Q_E^\dagger$, $\Tr Q_E=0$, and $\Tr(Q_EC_i)=0$. If $Q_E\ne0$, then $\Tr(Q_E\Delta E)=\Tr(Q_E^2)>0$. For any $0<\eta<1/(\dim\mathcal H_E\,\|Q_E\|_\infty)$, the two states $I_E/\dim\mathcal H_E\pm\eta Q_E$ are positive, normalized, and calibration-indistinguishable, whereas their $\Delta E$ expectations differ by $2\eta\Tr(Q_E^2)$. This proves necessity in Proposition~\ref{prop:span}. Sufficiency follows by linearity.

A numerical difference need not change a maximizing action. To establish such a change, suppose that at a full-rank $\rho_*$ two actions $L,L'$ tie at record $y$ and strictly exceed every other action. Let $Q_E\perp\mathcal V_{\rm cal}$ be Hermitian and put $\upsilon\coloneqq \Tr(Q_E\Delta E_{y;L,L'})\ne0$. For sufficiently small $\eta>0$, positivity of $\rho_*\pm\eta Q_E$ and the strict margin to the other actions persist. The fidelity difference becomes $\pm\eta\upsilon$, so the two states have different unique maximizing actions. Each losing action at that record has a fidelity deficit of at least $\eta|\upsilon|$, after decreasing $\eta$ if necessary. If a common rule assigns probabilities $p_L,p_{L'}$ to the two actions, the sum of its losses on this record is at least $\eta|\upsilon|(2-p_L-p_{L'})\ge\eta|\upsilon|$. The contributions of all other records are nonnegative. Hence every common Pauli decoder has worst-case loss at least $\eta|\upsilon|/2$. This gives a sufficient condition for an undetermined fidelity difference to change the decision.

\subsection{Finite-model minimax formula}

For a finite set $\Lambda$ of models consistent with the calibration data, let $\pi$ be one prior on models, shared by all records. The decoder set is a finite product of simplices, and the loss is bilinear in the decoder and prior. Sion's minimax theorem~\cite[Theorem~3.4]{sion1958} therefore gives
\begin{align}
 \Delta F^{\mathrm P,\star}(\Lambda)
  &=\max_{\pi\in\Delta(\Lambda)}\mathcal B_{\mathrm P}(\pi),\notag\\
 \mathcal B_{\mathrm P}(\pi)
  &\coloneqq \sum_h\pi_h\FP{h}
       -\sum_y\max_L\sum_h\pi_hw_h(y,L).
 \label{eq:finite-minimax}
\end{align}
Here $\Delta(\Lambda)$ is the set of probability distributions on $\Lambda$. The inner optimization separates over records. In particular, the equal-prior Pauli conflict is
\begin{equation}
 \Gamma_{01}^{\mathrm P}
 \coloneqq \frac{\FP{0}+\FP{1}}{2}
   -\frac12\sum_y\max_L[w_0(y,L)+w_1(y,L)].
 \label{eq:pauli-conflict}
\end{equation}
It vanishes exactly when a common optimal Pauli decoder exists and lower-bounds the two-model minimax loss. It need not equal that minimax loss: the least favorable prior need not be balanced.

\subsection{A quantum calibration state and joint final recovery}

We prove Proposition~\ref{prop:calibration-loss} without a common-syndrome-distribution or Pauli-noise assumption. Let
\[
 \tau_h\coloneqq (\id_R\otimes\Phi_h)
       (|\Omega_D\rangle\langle\Omega_D|)
\]
be the normalized joint state of the reference, the record register, and the logical output. Conditional independence of the environment preparations makes the joint input to the recovery $\sigma_h\otimes\tau_h$. The final processor acts on the calibration system and the accessible recorded data, not on the reference. For fixed $h$ and $\mathcal G$, its fidelity is a linear functional of the calibration state. The entire experiment, including the final maximally entangled-state test, is an effect on that state:
\begin{equation}
 \Fid{h}(\mathcal G,\sigma)=\Tr(M_h\sigma),\qquad 0\le M_h\le I_{\rm cal}.
 \label{eq:program-effect}
\end{equation}
Here $M_h$ is an operator on $\mathcal H_{\rm cal}$, and the inequalities follow because this probability lies between zero and one for every density operator $\sigma$ on $\mathcal H_{\rm cal}$.

For a trace-zero Hermitian $X$ and an effect $M$, $|\Tr(MX)|\le\|X\|_1/2$~\cite[Theorem~3.4]{watrous2018}. With $\delta\coloneqq \Dtr(\sigma_0,\sigma_1)$, replacing just the state in the model-1 score gives
\begin{align}
 &\frac{\Fid{0}(\mathcal G,\sigma_0)+\Fid{1}(\mathcal G,\sigma_1)}2\notag\\
 &\quad\le
 \frac{\Fid{0}(\mathcal G,\sigma_0)+\Fid{1}(\mathcal G,\sigma_0)}2
           +\frac\delta2\notag\\
 &\quad\le\sup_{\mathcal R}
                  \frac{\Fid{0}(\mathcal R)+\Fid{1}(\mathcal R)}2
           +\frac\delta2.
 \label{eq:program-replacement}
\end{align}
The last step uses the admissible common map $\mathcal R(X)\coloneqq \mathcal G(\sigma_0\otimes X)$. Subtracting from the average noise-aware fidelity proves the bound for average loss, and worst-case loss is at least average loss. Each individual loss is nonnegative because insertion of any fixed normalized state gives a CPTP recovery included in its benchmark. This proves the positive-part form of Eq.~\eqref{eq:program-bound}.

The argument extends to a smaller recovery class only if it is closed under insertion of every fixed program state into every permitted processor. A Pauli-only benchmark lacks that closure for an arbitrary joint CPTP processor. This is why the two action classes are kept distinct. The factor $1/2$ in front of the trace distance is optimal in general. Take identity and logical-$X$ data channels, and let calibration return the model label flipped with probability $e$. Then $\Gamma_{01}^{\mathrm{CPTP}}=1/2$, the trace distance is $1-2e$, and the optimal worst-case loss is $e$.

\subsection{The bound when calibration selects only a Pauli action}

Suppose calibration instead produces a classical transcript with distributions $P_0,P_1$, and the transcript $z$ selects a Pauli decoder $d_z$. Take densities $f_h$ of $P_h$ relative to a common dominating measure and let $m(z)\coloneqq \min\{f_0(z),f_1(z)\}$. The common-mass argument for two-point testing bounds~\cite{yu1997}, together with the nonnegativity of the loss, gives
\begin{align}
 &\frac12\sum_h\int f_h(z)\Floss{h}^{\mathrm P}(d_z)\,dz\notag\\
 &\quad\ge\int m(z)\,
             \frac{\Floss{0}^{\mathrm P}(d_z)+\Floss{1}^{\mathrm P}(d_z)}2\,dz\notag\\
 &\quad\ge\Gamma_{01}^{\mathrm P}[1-\TV(P_0,P_1)].
 \label{eq:classical-testing-bound}
\end{align}
The left-hand side is no larger than the worst-case expected loss. If a quantum calibration state is measured to select a Pauli, possibly using a measurement chosen after the data record $y$ arrives, apply this common-mass argument separately at each record. The trace distance bounds the total variation distance of each measurement's action distributions~\cite[Proposition~3.5]{watrous2018}. Each record contributes a nonnegative share of the equal-prior conflict, and these shares sum to $\Gamma_{01}^{\mathrm P}$. Consequently
\begin{equation}
 \max_h \Floss{h}^{\mathrm P}
 \ge\Gamma_{01}^{\mathrm P}
                  [1-\Dtr(\sigma_0,\sigma_1)].
 \label{eq:pauli-quantum-testing-bound}
\end{equation}
Here the protocol accesses the data only through its classical record when selecting the Pauli. Proposition~\ref{prop:calibration-loss} covers arbitrary joint processing of calibration and data systems.

\subsection{Computational aspects}
\label{app:computation}

The quantities in Sec.~\ref{sec:general} have a computational cost distinct from the cost of acquiring calibration data. Given explicit matrices $E_{y,L}$ and $C_i$, the span criterion is a rank test in a real vector space of dimension $(\dim\mathcal H_E)^2$. For a finite list of models and explicit scores $w_h(y,L)$, common optimality is checked by recordwise comparisons, and the minimax Pauli loss is a linear program over $d(L\mid y)$ and a worst-case loss variable [\cref{eq:finite-minimax}]. For all environment states consistent with fixed calibration probabilities, testing a candidate common correction reduces to semidefinite programs: minimize the difference between its score and each competing score over positive, unit-trace states satisfying those calibration constraints, and check that every minimum is nonnegative~\cite[Eq.~(17)]{keith2018}. The CPTP optima entering $\Gamma_{01}^{\mathrm{CPTP}}$ are also recovery semidefinite programs once the recorded logical channels are supplied~\cite{fletcher2007,kosut2008}.

These formulations do not guarantee efficient computation from a compact QEC circuit description. Constructing the effects or branch channels requires evaluating the coherent dynamics. The environment dimension and the number of syndrome histories can grow exponentially with storage time, and there are $4^k$ candidate logical Paulis. Even with one logical qubit, a small recovery optimization for each record does not remove the cost of representing all records. Numerical tests of span membership and ties require explicit tolerances or certified error bounds. Moreover, the finite-model linear program does not directly give a comparably sized program for an arbitrary continuous model family.

For two explicitly given calibration states, $\Dtr(\sigma_0,\sigma_1)$ is obtained from the spectrum of $\sigma_0-\sigma_1$. Bounding the distinguishability attainable by every allowed protocol at a given budget is a different task. Our calibration lower bounds use channel-sensitivity estimates and hybrid arguments to avoid constructing all such outputs. The examples also have special reductions: the two-round Shor effects lie in $\operatorname{span}_{\mathbb R}\{I_E,Z_{12}\}$, and the $\mathsf F/\mathsf R$ analysis uses scalar syndrome effects and conditional logical multipliers. These structures make the fixed-code results tractable, but the calibration-cost results do not establish a scalable algorithm for general correlated-noise decoding.

\section{The two-round Shor calibration example}
\label{app:shor-example}

\subsection{Branches, effects, and Pauli fidelity differences}
\label{app:shor-history-coherence}

Use stabilizers $Z_1Z_2,Z_2Z_3,Z_4Z_5,Z_5Z_6,Z_7Z_8,Z_8Z_9$ and $X_1X_2X_3X_4X_5X_6,X_4X_5X_6X_7X_8X_9$. In the conventional logical basis, consisting of three GHZ-plus or three GHZ-minus blocks, $P_{\rm sh}=X_1X_2X_3$ acts as logical $Z$. Only the first block is noisy. Write $c_j\coloneqq \cos\vartheta_j$, $t_j\coloneqq \sin\vartheta_j$. Expanding the three physical rotations and applying the lookup table of Sec.~\ref{sec:shor-example} gives
\begin{align}
 u_0&=c_1c_2c_3,&v_0&=-t_1t_2t_3,\notag\\
 u_j&=t_j\prod_{l\ne j}c_l,&
 v_j&=c_j\prod_{l\ne j}t_l,\qquad j=1,2,3.
 \label{eq:shor-uv}
\end{align}
The joint corrected branch in interval $\nu$ is $\widetilde K_{\nu,s}=u_sI-i v_sL_{\rm sh}\otimes Z_{E_\nu}$ up to an environment-only diagonal unitary. For a nonzero syndrome this omitted factor can be chosen as $-iZ_{E_\nu}$, while the zero branch needs no such factor. These factors commute with both interval operators and cancel from every squared Pauli coefficient, so omitting them leaves all effects and fidelities unchanged.

Since $\widetilde K_{\nu,s}^\dagger\widetilde K_{\nu,s}=q_sI$, the probability of successive syndromes $(s_1,s_2)$ is $q_{s_1}q_{s_2}$ for every initial environment state and every encoded input. The same trace argument permits trace-preserving logical controls and a private controller register between complete QEC cycles, provided their additional measurement outputs are not included among the syndrome-only observations. Expanding $U_\nu=\sum_{e\subseteq\{1,2,3\}}X^e\otimes B_{\nu,e}$ over physical $X$ strings $e$ gives the interval environment coefficients
\begin{equation}
 B_{\nu,e}=z_e Z_{E_\nu}^{|e|},\qquad
 z_e\coloneqq (-i)^{|e|}\prod_{j\in e}t_j\prod_{j\notin e}c_j.
\end{equation}
Thus the two-interval SPP probability of strings $e$ and $f$ in intervals 1 and 2 has environment effect $|z_ez_f|^2I_E$. Both the syndrome statistics and the SPP are therefore independent of the initial environment state, not only within the one-parameter fair family.

For $y=(s_1,s_2)$, multiplication gives
\begin{align}
 \widetilde K_{2,s_2}\widetilde K_{1,s_1}
 &=I\otimes A_{s_1s_2}-iL_{\rm sh}\otimes B_{s_1s_2},\notag\\
 A_{s_1s_2}&\coloneqq u_{s_1}u_{s_2}I_E-v_{s_1}v_{s_2}Z_{12},\notag\\
 B_{s_1s_2}&\coloneqq u_{s_2}v_{s_1}Z_{E_1}+u_{s_1}v_{s_2}Z_{E_2}.
 \label{eq:shor-two-branch}
\end{align}
These Hermitian environment coefficients commute. Consequently
\begin{align}
 E_{s_1s_2,I}&=(u_{s_1}^2u_{s_2}^2+v_{s_1}^2v_{s_2}^2)I_E
                         -2u_{s_1}v_{s_1}u_{s_2}v_{s_2}Z_{12},\notag\\
 E_{s_1s_2,L_{\rm sh}}&=(v_{s_1}^2u_{s_2}^2+u_{s_1}^2v_{s_2}^2)I_E
                         +2u_{s_1}v_{s_1}u_{s_2}v_{s_2}Z_{12}.
 \label{eq:shor-score-effects}
\end{align}
The remaining two logical Pauli effects vanish, and $E_{s_1s_2,I}+E_{s_1s_2,L_{\rm sh}}=q_{s_1}q_{s_2}I_E$. This proves Eq.~\eqref{eq:shor-all-contrasts} and shows that every Pauli score depends only on $\kappa=\Tr(\rho_EZ_{12})$, including for a nondiagonal environment state. It does not prove that all optimal CPTP recoveries depend only on $\kappa$: other environment moments can produce a mean logical rotation. The CPTP statement below is restricted to the fair family $\rho_\kappa=(I_E+\kappa Z_{12})/4$.

The same effect has a direct interpretation in the logical process tensor. Restrict to the fair family $\rho_\kappa$, and use the unit Pauli--Choi vectors of Appendix~\ref{app:choi-weights}, with logical dimension $D=2$. On $\mathcal H_L\otimes\mathcal H_L$, set
\begin{equation}
 \begin{aligned}
 \mathsf D_s&\coloneqq u_s^2|I\rangle\!\rangle\langle\!\langle I|
       +v_s^2|L_{\rm sh}\rangle\!\rangle\langle\!\langle L_{\rm sh}|,\\
 \mathsf C_s&\coloneqq iu_sv_s\bigl(
       |I\rangle\!\rangle\langle\!\langle L_{\rm sh}|
       -|L_{\rm sh}\rangle\!\rangle\langle\!\langle I|\bigr).
 \end{aligned}
 \label{eq:shor-history-diagonal-coherent}
\end{equation}
The rescaled one-cycle block for sign $\xi$ is $\mathsf D_s+\xi\mathsf C_s$. Averaging the two fair signs with $\mathbb E(\xi_1\xi_2)=\kappa$ gives the two-interval record block, with the later interval written first,
\begin{equation}
 \begin{aligned}
 \widehat\Upsilon_{\kappa,y}^{(2)}
 &\coloneqq\frac{\Upsilon_{\kappa,y}^{(2)}}{4}\\
 &=\mathsf D_{s_2}\otimes\mathsf D_{s_1}
       +\kappa\mathsf C_{s_2}\otimes\mathsf C_{s_1},\\
 \Tr\widehat\Upsilon_{\kappa,y}^{(2)}&=q_{s_1}q_{s_2}.
 \end{aligned}
 \label{eq:shor-history-choi}
\end{equation}
Here $\Upsilon_{\kappa,y}^{(2)}$ uses the unnormalized convention of Eq.~\eqref{eq:comb-causality}. Retaining the syndromes gives $\sum_y|y\rangle\langle y|\otimes\widehat\Upsilon_{\kappa,y}^{(2)}$. The omitted environment-only phases do not affect this expansion: for a fixed record they multiply every history amplitude by the same final environment unitary, which cancels from all coefficients $\Tr(\rho_E A_{y,\boldsymbol Q}^\dagger A_{y,\boldsymbol P})$.

The term $\kappa\mathsf C_{s_2}\otimes\mathsf C_{s_1}$ is off-diagonal in the logical Pauli-history basis. Composing the corrected branches, as in Eq.~\eqref{eq:shor-two-branch}, combines $(I,I)$ with $(L_{\rm sh},L_{\rm sh})$ and $(I,L_{\rm sh})$ with $(L_{\rm sh},I)$. Their interference changes the rescaled endpoint block $\widehat J_{\kappa,y}=J(\mathcal N_{\kappa,y})/2$ by
\begin{equation}
 \begin{aligned}
 \widehat J_{\kappa,y}-\widehat J_{0,y}
 &=2\kappa u_{s_1}v_{s_1}u_{s_2}v_{s_2}\\
 &\quad\times\bigl(
 |L_{\rm sh}\rangle\!\rangle\langle\!\langle L_{\rm sh}|
 -|I\rangle\!\rangle\langle\!\langle I|\bigr).
 \end{aligned}
 \label{eq:shor-history-endpoint}
\end{equation}
Its trace vanishes, and its two Pauli--Choi diagonal entries are equal and opposite. Equation~\eqref{eq:choi-recovery-weights} therefore recovers the correlation-dependent terms in Eq.~\eqref{eq:shor-score-effects} and their doubled contribution to the fidelity difference.

\subsection{Decision crossing and exact full-record loss}

For the angles specified in Sec.~\ref{sec:shor-example}, let $\phi_s\coloneqq 2\arctan(v_s/u_s)$, taking the displayed continuous branches. Then
\begin{equation}
 \begin{aligned}
 \phi_0&=-2\arctan(\sqrt3\zeta^4),\qquad \phi_1=\pi/3,\\
 \phi_2&=\phi_3=2\arctan(\sqrt3\zeta^2).
 \end{aligned}
 \label{eq:shor-phases}
\end{equation}
On the fair family, the normalized residual logical map is the real mixture of $I$ and $L_{\rm sh}$ with multiplier
\begin{equation}
 g_\kappa(s_1,s_2)=\cos\phi_{s_1}\cos\phi_{s_2}
                       -\kappa\sin\phi_{s_1}\sin\phi_{s_2}.
 \label{eq:shor-multiplier}
\end{equation}
In particular, $g_\kappa(1,1)=(1-3\kappa)/4$. For $0<\zeta\le1/5$, all sums and differences of two phases other than $2\phi_1$ lie strictly between $-\pi/2$ and $\pi/2$. Indeed $\phi_1+\phi_2<\pi/2$, $2\phi_2<\pi/2$, and $|\phi_0|<\phi_2$. At $\kappa=\pm1$ the multipliers are the corresponding cosines. Since they are affine in $\kappa$, they are positive throughout $[-1,1]$. Thus $(1,1)$ is the only record whose optimal Pauli changes.

For $\kappa=1$ versus $\kappa=0$, let $\eta_{\rm mix}$ be the probability of applying $L_{\rm sh}$ on that record. The identity is optimal on the other records. The respective fidelity losses are
\begin{equation}
 \begin{aligned}
 \Floss{1}(\eta_{\rm mix})&=\frac{q_1^2}{2}(1-\eta_{\rm mix}),\\
 \Floss{0}(\eta_{\rm mix})&=\frac{q_1^2}{4}\eta_{\rm mix}.
 \end{aligned}
\end{equation}
Their maximum is minimized at $\eta_{\rm mix}=2/3$, which gives the minimax fidelity loss $\Fmm[\mathrm{sh}]$ of $\kappa=1$ versus $\kappa=0$:
\begin{equation}
 \Fmm[\mathrm{sh}]=\frac{q_1^2}{6}
                   =\frac83\zeta^8+O(\zeta^{10}).
 \label{eq:shor-minimax}
\end{equation}
The simultaneous domination argument of Appendix~\ref{app:regret}, with $L_{\rm sh}$ replacing $\Xbar$, shows that general final CPTP recovery has the same minimax value. More generally, for $\kappa>1/3$ versus $0$, balancing the two affine losses gives $q_1^2(3\kappa-1)/(12\kappa)$. For $\kappa\le1/3$ a common identity decoder is optimal. This example thus separates undetermined fidelity differences from undetermined recovery decisions.

The crossing also realizes the perturbation argument of \cref{app:framework-proofs}. The state $\rho_{1/3}$ is full rank, and $\rho_{1/3}\pm\eta Z_{12}/4$ are physical for $0<\eta<2/3$. Their calibration data are identical, while their fidelity differences on $(1,1)$ are $\mp3q_1^2\eta/4$. Their exact minimax loss is $3q_1^2\eta/8$.

\subsection{Delayed extraction measures the missing direction}

In a separate calibration run, omit the extraction and correction between the two fixed interactions. In the joint $Z$-eigenvalue sector $(\xi_1,\xi_2)$, the product is
\begin{equation}
 U_2U_1=\prod_{j=1}^3
                e^{-i(\xi_1+\xi_2)\vartheta_jX_j}.
\end{equation}
Opposite signs produce the identity and no syndrome. Equal signs produce doubled angles. The zero-syndrome branch contains the identity and $P_{\rm sh}$ paths with relative imaginary phase, so its probability is $\prod_j\cos^2(2\vartheta_j)+\prod_j\sin^2(2\vartheta_j)$, independent of the encoded input. Both equal-sign sectors therefore give the nonzero probability $p_*$ of Eq.~\eqref{eq:shor-calibration}. Tracing out the environment removes its off-diagonal sign blocks, giving the operator identity $C_{\rm nz}=p_*(I_E+Z_{12})/2$ for an arbitrary environment input.

In particular, the measured nonzero rate $\Tr(\rho_EC_{\rm nz})=p_*(1+\kappa)/2$ selects the identity on $(1,1)$ below $2p_*/3$ and selects $L_{\rm sh}$ above it, because
\begin{equation}
 \Delta E_{11;I,L_{\rm sh}}
      =q_1^2\left(I_E-\frac{3}{2p_*}C_{\rm nz}\right).
\end{equation}
The full nonzero syndrome label has a conditional distribution independent of $\kappa$, so it gives no additional information about this parameter. Since $\sin^2(2\vartheta_1)=12\zeta^4/(1+3\zeta^4)^2$ and $\sin^2(2\vartheta_2)=\sin^2(2\vartheta_3)=4\zeta^2/(1+\zeta^2)^2$, \cref{eq:shor-calibration} gives $p_*=8\zeta^2+O(\zeta^4)$. Repeated runs with independent noise realizations are Bernoulli experiments of mean $p_*(1+\kappa)/2$. A fixed-error test of $\kappa=1$ versus $0$ therefore uses $\Theta(\zeta^{-2})$ such runs. This count refers to discriminating the two models with this experiment. For the short data task itself, the uncalibrated loss in Eq.~\eqref{eq:shor-minimax} is already $O(\zeta^8)$.
\section{CSS branch structure and noncancellation}\label{app:branch}
For each syndrome, physical $X$ errors occupy two corrected logical classes of opposite weight parity. After removing a common branch phase, one amplitude is real and the other purely imaginary. When both are nonzero, their relative phase is $\pm\pi/2$, giving \cref{eq:branch-form}. Changing $\xi$ conjugates the amplitudes and leaves their squared magnitudes unchanged. The following calculation gives both the signed coefficient $b_{\C}$ and the retained-channel norm. The remaining coefficient is $a_{\C}=2\sum_s m_s^2>0$: correcting every $X$ error of weight at most $r$ eliminates lower-weight logical errors, and every partition of a minimum logical support into $r$ and $r+1$ qubits supplies a nonzero contribution at weight $r+1$.

\subsection{Retained leading coefficient}\label{app:retained-coefficient}
\begin{proposition}[Leading coefficient with the syndrome retained]
\label{prop:retained-coefficient}
Fix a CSS code encoding one logical qubit, with the code space defined by the $+1$ eigenspaces of positive-phase stabilizer generators, even-weight $X$ stabilizers, and $X$ distance $d_X=2r+1\ge3$. Fix any $X$-correction lookup table that corrects every $X$ error of weight at most $r$. For the uniform rotation $U_\xi(\theta)=\bigotimes_{j=1}^n e^{-i\xi\theta X_j}$, let $N_{d_X}$ count the weight-$d_X$ supports of the nontrivial logical-$X$ class. Then, as $\theta\to0^+$,
\begin{align}
 \beta_{\rm ret}
 &=2N_{d_X}4^r\theta^{d_X}
       +O_{\C}(\theta^{d_X+2}),
       \label{eq:retained-leading-coefficient}\\
 \frac{\beta_{\rm ret}}{|B|}
 &=\frac{4^r}{\binom{2r}{r}}+O_{\C}(\theta^2).
       \label{eq:retained-leading-ratio}
\end{align}
The leading ratio is the same for every lookup table allowed by \cref{def:family}. The remainder constants may depend on the fixed code and table.
\end{proposition}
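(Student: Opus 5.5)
The plan is to compute $\beta_{\rm ret}$ exactly from the branch form \cref{eq:branch-form}, and then expand $u_s,v_s$ in $\theta$ by coset counting. The instrument difference is block diagonal in the syndrome register:
\[
\mathcal E_+-\mathcal E_-=\sum_s|s\rangle\langle s|\otimes\bigl(K_s(+)\rho K_s(+)^\dagger-K_s(-)\rho K_s(-)^\dagger\bigr)=\sum_s|s\rangle\langle s|\otimes 2iu_sv_s[\rho,\Xbar].
\]
The diamond norm of a direct sum over classical blocks is the sum of the block norms, since an optimal input is shared and the trace norms add. The map $\rho\mapsto i[\rho,\Xbar]$ has diamond norm $2$: the upper bound is immediate, and the input $|+i\rangle$ in the $\Xbar$-eigenbasis attains it. This gives the exact identity $\beta_{\rm ret}=2\sum_s|u_sv_s|$ stated in \cref{eq:beta}.

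For the expansion, I will write $u_s$ (respectively $v_s$) as the sum of the amplitudes $(\sin\theta)^{|e|}(\cos\theta)^{n-|e|}$ over physical $X$ strings $e$ with syndrome $s$ that the table corrects to $I$ (respectively $\Xbar$). Positive-phase stabilizers introduce no relative signs. The phases $(-i)^{|e|}$ are removed within each class using parity, which leaves a sign $\pm$ fixed by the parity of the class representative.

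Since the table corrects all weights $\le r$ and $d_X=2r+1$, each syndrome $s$ has one of two forms. Either its $I$-class has minimum weight $j\le r$, in which case its $\Xbar$-class has weight at least $d_X-j$. Or its $\Xbar$-class has low weight and the $I$-class weight is at least $r+1$. In either case the product $u_sv_s$ has leading order $\theta^{d_X}$, coming from pairs $(e,e')$ in the two classes with $e\oplus e'$ a minimum-weight logical support and $e\subset$ that support. Each minimum logical support $\ell$ and each subset $e\subset\ell$ of size $j$ contributes one such pair, where the table's correctable representative is the smaller of $e$ and $\ell\setminus e$.

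Counting gives $\sum_s|u_sv_s|=\theta^{d_X}N_{d_X}\sum_{j=0}^{r}\binom{2r+1}{j}+O(\theta^{d_X+2})$. This uses $\sum_{j\le r}\binom{2r+1}{j}=4^r$, so $\beta_{\rm ret}=2N_{d_X}4^r\theta^{d_X}+\dots$. The corrections are $O(\theta^{d_X+2})$ because amplitudes in a fixed class differ in weight by even numbers, as $X$ stabilizers have even weight. The signed sum $B=2\sum_s u_sv_s$ carries the sign $(-1)^{r-j}$ noted in \cref{sec:visibility}, giving $|B|=2N_{d_X}\theta^{d_X}\bigl|\sum_{j\le r}(-1)^{r-j}\binom{2r+1}{j}\bigr|+\dots=2N_{d_X}\binom{2r}{r}\theta^{d_X}+\dots$ by the alternating partial-sum identity. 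Dividing yields \cref{eq:retained-leading-ratio}. The leading coefficients involve only the minimum-weight structure and the fact that every weight-$\le r$ error is corrected, so the ratio does not depend on the table.

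The main obstacle is to justify that at order $\theta^{d_X}$ the contributions to a single syndrome do not cancel, so that the absolute values $|u_sv_s|$ really add to the count $4^r N_{d_X}$. The risk is that several minimum logical supports could share a syndrome and contribute with opposite signs inside one $u_s$ or $v_s$. I would handle this by showing that in a fixed syndrome the low-weight class, of weight $\le r$, has a unique leading amplitude up to equal-parity degeneracies. The sign of the high-weight partner's leading amplitude is then fixed by weight parity, so all terms in $u_sv_s$ at that order share one sign $(-1)^{r-j}\xi$. This is the same parity bookkeeping that gave \cref{eq:branch-form}.
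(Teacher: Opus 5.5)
Your proposal follows the paper's proof essentially step for step: the block-diagonal diamond-norm evaluation $\beta_{\rm ret}=2\sum_s|u_sv_s|$, the expansion over the two corrected classes (which have opposite weight parity and whose minimum weights sum to at least $d_X$), the bijection with splittings of minimum-weight logical supports giving $N_{d_X}4^r$, and the alternating sum giving $\binom{2r}{r}$.

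Two small repairs are needed. First, your second case cannot occur. The table places every error of weight $\le r$ in the $I$-class, so the real alternative is that both classes have minimum weight $\ge r+1$, and by parity such a syndrome contributes only at $O(\theta^{d_X+2})$. Second, the non-cancellation you flag is settled as in the paper. If $p$ in the $I$-class and $p'$ in the $\Xbar$-class of syndrome $s$ satisfy $|p|+|p'|=d_X$, both must be minimum-weight members of their classes, because a lighter member paired with the other string would be a nontrivial logical of weight $<d_X$. So all order-$\theta^{d_X}$ pairs in one syndrome have the same two weights and hence identical phases. It is equal weight, not merely equal parity, that fixes the sign, since same-class strings whose weights differ by $2$ carry opposite signs.
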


\begin{proof}
Write $C_X\subseteq C_Z^\perp\subseteq\mathbb F_2^n$ for the binary $X$-stabilizer and zero-$X$-syndrome spaces. Since the code encodes one qubit, $C_Z^\perp/C_X$ has dimension one. Choose $\mathbf{x}_{\mathrm L}\in C_Z^\perp\setminus C_X$ and orient the logical Pauli as $\Xbar=V^\dagger X^{\mathbf{x}_{\mathrm L}}V$. Every vector in $C_X$ has even weight, while every vector in $\mathbf{x}_{\mathrm L}+C_X$ has odd weight. If the correction for syndrome $s$ is $X^{e_s}$, its two corrected logical classes are
\begin{equation}
 \mathcal S_{s,0}=e_s+C_X,\qquad
 \mathcal S_{s,1}=e_s+\mathbf{x}_{\mathrm L}+C_X.
 \label{eq:retained-two-cosets}
\end{equation}
Errors in these classes act after correction as $I$ and $\Xbar$, respectively. Every $X$ stabilizer acts as $+1$ on the code space, giving the same sign to all representatives of either class.

For $\xi=+1$, before removing the common branch phase, the two amplitudes for $b\in\{0,1\}$ are
\begin{equation}
 h_{s,b}(\theta)
 =\sum_{p\in\mathcal S_{s,b}}
   (-i)^{|p|}(\sin\theta)^{|p|}(\cos\theta)^{n-|p|}.
 \label{eq:retained-coset-amplitudes}
\end{equation}
Here $|p|$ denotes Hamming weight. Define the minimum weights and their multiplicities by
\begin{equation}
 w_{s,b}\coloneqq \min_{p\in\mathcal S_{s,b}}|p|,\qquad
 c_{s,b}\coloneqq \#\{p\in\mathcal S_{s,b}:|p|=w_{s,b}\}.
 \label{eq:retained-coset-minima}
\end{equation}
Each coset has fixed parity, and equal-weight amplitudes have equal phase. Consequently
\begin{equation}
 h_{s,b}(\theta)
 =(-i)^{w_{s,b}}c_{s,b}\theta^{w_{s,b}}
     +O_{\C}(\theta^{w_{s,b}+2}).
 \label{eq:retained-coset-leading}
\end{equation}
The minimum-weight terms therefore add with a common phase and survive in the coset amplitude.

For $p\in\mathcal S_{s,0}$ and $p'\in\mathcal S_{s,1}$, their binary sum belongs to the nontrivial logical class. Hence
\begin{equation}
 d_X\le |p+p'|\le |p|+|p'|,
 \qquad w_{s,0}+w_{s,1}\ge d_X.
 \label{eq:retained-distance-constraint}
\end{equation}
The two minimum weights have opposite parity. A syndrome therefore contributes at degree $d_X$ only if $w_{s,0}+w_{s,1}=d_X$. Otherwise its first possible degree is at least $d_X+2$.

Suppose equality holds. One minimum weight is at most $r$, and the table corrects all such errors to $I$. Thus $w_{s,0}=j_s\le r$ and $w_{s,1}=d_X-j_s$. Every pair attaining this total weight consists of minimum-weight representatives of its two cosets. Equivalently, a lighter representative in either coset, paired with the opposite-class member, would contradict \cref{eq:retained-distance-constraint}. All degree-$d_X$ pairs in a fixed syndrome therefore have the same two weights and the same relative phase.

After removing a common phase, the branch is $u_sI-i v_s\Xbar$, so $u_sv_s=ih_{s,0}^*h_{s,1}$. For a contributing syndrome, \cref{eq:retained-coset-leading} gives
\begin{equation}
 u_sv_s=(-1)^{r-j_s}c_{s,0}c_{s,1}\theta^{d_X}
             +O_{\C}(\theta^{d_X+2}).
 \label{eq:retained-branch-leading}
\end{equation}
In particular, its absolute leading coefficient is exactly $c_{s,0}c_{s,1}$.

These products count a concrete set of physical-error pairs. At total weight $d_X$, equality in \cref{eq:retained-distance-constraint} forces $p$ and $p'$ to have disjoint supports, and $p+p'$ is a minimum-weight logical support. Conversely, take any such support and partition it into a subset $p$ of size $j\le r$ and its complement $p'$. The two errors have the same syndrome, so the table, which corrects $p$ to $I$, corrects $p'$ to $\Xbar$. Their weights sum to $d_X$, so both are minimum representatives of the corresponding corrected cosets. This is a bijection: the pair determines its logical support and its unique smaller subset. Consequently
\begin{equation}
 \sum_{s:\,w_{s,0}+w_{s,1}=d_X}c_{s,0}c_{s,1}
 =N_{d_X}\sum_{j=0}^{r}\binom{2r+1}{j}
 =N_{d_X}4^r.
 \label{eq:retained-partition-count}
\end{equation}

It remains to connect this count to the retained-channel norm. With syndrome output retained, the sign-odd map has classical blocks
\begin{equation}
 \frac{\mathcal E_+-\mathcal E_-}{2}(\rho)
 =\sum_s|s\rangle\langle s|\otimes
       \bigl(-iu_sv_s[\Xbar,\rho]\bigr).
 \label{eq:retained-sign-odd-blocks}
\end{equation}
The commutator map $\rho\mapsto-i[\Xbar,\rho]$ has diamond norm two: the triangle inequality gives the upper bound, and a $\Zbar$ eigenstate attains it. The same input attains the bound in every block, and trace norms add over the orthogonal syndrome blocks. Thus $\beta_{\rm ret}=2\sum_s|u_sv_s|$. Equations \eqref{eq:retained-branch-leading} and \eqref{eq:retained-partition-count} prove \eqref{eq:retained-leading-coefficient}.

Summing the same contributions with their signs in \cref{eq:retained-branch-leading} yields
\begin{equation}
 \begin{split}
 B=2\sum_s u_sv_s
 &=2N_{d_X}\theta^{d_X}
      \sum_{j=0}^{r}(-1)^{r-j}\binom{2r+1}{j}\\
 &\quad+O_{\C}(\theta^{d_X+2})\\
 &=2N_{d_X}\binom{2r}{r}\theta^{d_X}
       +O_{\C}(\theta^{d_X+2}).
 \end{split}
 \label{eq:retained-signed-count}
\end{equation}
The last identity follows by substituting Pascal's identity into the alternating sum, after which all terms except the boundary term cancel. Reversing the logical-axis convention can reverse $B$, without changing its magnitude. Dividing the two nonzero leading expansions proves \eqref{eq:retained-leading-ratio}.
\end{proof}

\section{Full-record recovery optimization}\label{app:regret}
The conditional channels in \cref{eq:conditional} follow by multiplying the normalized single-axis branch rotations and averaging their signs. The proof below reduces full-record CPTP recovery to randomized Pauli decisions. Equation~\eqref{eq:scaled-recovery-scores} in that proof gives the exact finite-angle noise-aware optima.

At $T_\pi$, the fixed choices $I$ under $\mathsf R$ and $\Xbar$ under $\mathsf F$ imply $\FC{h}\ge1-\RefInf{h}$. For any common decision, its summed fidelity is at most $1+\mathbb E|g_{\mathsf F}+g_{\mathsf R}|/2$, and
\begin{equation}
 \mathbb E|g_{\mathsf F}+g_{\mathsf R}|
 \le\mathbb E(1+g_{\mathsf F})+\mathbb E(1-g_{\mathsf R})
 =2(\RefInf{\mathsf F}+\RefInf{\mathsf R}).
\end{equation}
Subtracting from the sum of the two noise-aware lower bounds proves $\Fmm\ge[1/2-\RefInf{\mathsf F}-\RefInf{\mathsf R}]_+$. An equal mixture of $I$ and $\Xbar$ achieves fidelity $1/2$ under both models, proving the upper bound in \cref{thm:regret}.

\subsection{Uniform limit at scaled storage times}\label{app:scaled-horizon}

\begin{proof}[Proof of \cref{eq:horizon-curve} in \cref{thm:regret}]
At storage time $T$, write
\begin{equation}
 \Fmm(T)\coloneqq \inf_{\mathcal R}
 \max_{h\in\{\mathsf F,\mathsf R\}}
 \bigl[\FC{h}(T)-\Fid{h}(\mathcal R;T)\bigr],
 \label{eq:scaled-regret-definition}
\end{equation}
where $\mathcal R$ ranges over common final CPTP recoveries whose inputs are the logical output and full syndrome record. The common record law is $q_y=\prod_{j=1}^Tq_{s_j}$, and the two conditional channels are
\begin{equation}
 \mathcal L_{g_h(y)}
 \coloneqq \frac{1+g_h(y)}2\id+\frac{1-g_h(y)}2\mathcal X,
 \qquad \mathcal X(\rho)=\Xbar\rho\Xbar,
 \label{eq:scaled-real-channels}
\end{equation}
where $g_{\mathsf F}(y)=\cos\phi_{\rm tot}(y)$ and $g_{\mathsf R}(y)=\prod_j\cos\phi_{s_j}$. Both multipliers lie in $[-1,1]$. We first establish a stability bound that permits optimization after approximating these conditional channels.

For any qubit CPTP recovery $\mathcal R_y$, set $f_{I,y}\coloneqq F_{\mathrm e}(\mathcal R_y)$ and $f_{X,y}\coloneqq F_{\mathrm e}(\mathcal R_y\circ\mathcal X)$. Hilbert--Schmidt orthogonality of $I$ and $\Xbar$ gives, for any Kraus operators $W_{y,\mu}$ of $\mathcal R_y$,
\begin{align}
 f_{I,y}+f_{X,y}
 &=\frac14\sum_\mu\left(
   |\Tr W_{y,\mu}|^2+|\Tr(W_{y,\mu}\Xbar)|^2\right)\notag\\
 &\le\frac12\sum_\mu\Tr(W_{y,\mu}^\dagger W_{y,\mu})=1.
 \label{eq:scaled-recovery-bessel}
\end{align}
Choose $I$ with probability $(1+\eta_y)/2$ and $\Xbar$ with probability $(1-\eta_y)/2$, where $\eta_y\coloneqq f_{I,y}-f_{X,y}\in[-1,1]$. For every real $g$, this randomized Pauli recovery increases the fidelity after $\mathcal L_g$ by $(1-f_{I,y}-f_{X,y})/2\ge0$. The replacement works simultaneously under both models, record by record. Thus arbitrary final CPTP recovery and randomized Pauli recovery have the same noise-aware and minimax optima, with
\begin{align}
 \Fid{h}(\eta;T)&=\frac12\left(1+\sum_yq_y\eta_yg_h(y)\right),\notag\\
 \FC{h}(T)=\FP{h}(T)&=\frac12\left(1+\sum_yq_y|g_h(y)|\right).
 \label{eq:scaled-recovery-scores}
\end{align}

Consider another pair of real multipliers $\widetilde g_h(y)$ with the same record law, and denote their fidelity losses by $\widetilde{\Floss{h}}(\eta)$ and their minimax loss by $\Fmm[\widetilde g](T)$. Put $\delta_h\coloneqq \sum_yq_y|g_h(y)-\widetilde g_h(y)|$. For every decision $\eta$, its fidelity and the noise-aware optimum each change by at most $\delta_h/2$. More explicitly, its fidelity loss is
\begin{equation}
 \Floss{h}(\eta)=\frac12\sum_yq_y
        \bigl(|g_h(y)|-\eta_yg_h(y)\bigr),
\end{equation}
so $|\Floss{h}(\eta)-\widetilde{\Floss{h}}(\eta)|\le\delta_h$ uniformly over $\eta$. Taking first the maximum over models and then the infimum over the common decision set preserves this uniform bound. Therefore
\begin{equation}
 \left|\Fmm(T)-\Fmm[\widetilde g](T)\right|
 \le\max_{h\in\{\mathsf F,\mathsf R\}}\delta_h.
 \label{eq:scaled-minimax-stability}
\end{equation}
The bounds on the noise-aware optimum and on the minimax value hold even though the record alphabet grows with $T$.

To control the conditional multipliers, omit zero-probability outcomes, draw $s$ with law $q_s$, and set
\begin{equation}
 Z_s\coloneqq e^{i\phi_s},\qquad
 \mathbb E Z_s=A+iB=\lvert A+iB\rvert e^{i\alpha}.
 \label{eq:scaled-phase-moment}
\end{equation}
For sufficiently small $\theta$, $0<A\le\lvert A+iB\rvert\le1$ and $\alpha=\arctan(B/A)\ne0$. Under the product record law, $S_T\coloneqq \prod_{j=1}^TZ_{s_j}$ obeys $|S_T|=1$ and $\mathbb E S_T=(\lvert A+iB\rvert e^{i\alpha})^T$. A direct calculation gives
\begin{equation}
 \mathbb E|S_T-e^{iT\alpha}|^2=2(1-\lvert A+iB\rvert^T).
 \label{eq:scaled-phase-concentration}
\end{equation}
Taking real parts and using Cauchy--Schwarz yields
\begin{equation}
 \mathbb E|g_{\mathsf F}(y)-\cos(T\alpha)|
 \le\sqrt{2(1-\lvert A+iB\rvert^T)}.
 \label{eq:scaled-frozen-lone}
\end{equation}
For the resampled model, $g_{\mathsf R}(y)\le1$ and independence gives $\mathbb E g_{\mathsf R}(y)=A^T$. Hence
\begin{equation}
 \mathbb E|g_{\mathsf R}(y)-1|=1-A^T.
 \label{eq:scaled-resampled-lone}
\end{equation}
These expectations directly control the dependence of the conditional channels on the full syndrome record.

The fixed-code orders in \cref{eq:orders} imply $1-A=O_{\C}(\theta^{d_X+1})$ and $|\alpha|=\Theta_{\C}(\theta^{d_X})$. Fix $0<x_{\max}<\infty$. Since $T_\pi=\operatorname{round}(\pi/|\alpha|)$, uniformly over $0\le x\le x_{\max}$ and $T=T_\theta(x)$,
\begin{equation}
 0\le1-\lvert A+iB\rvert^T\le1-A^T\le T(1-A)
 =O_{\C,x_{\max}}(\theta).
 \label{eq:scaled-modulus-bound}
\end{equation}
The two roundings, in $T_\pi$ and in $T_\theta(x)$, together contribute only
\begin{equation}
 |T_\theta(x)\alpha-\operatorname{sgn}(B)\pi x|
 \le\frac{1+x}{2}|\alpha|
 =O_{\C,x_{\max}}(\theta^{d_X}).
 \label{eq:scaled-rounding}
\end{equation}
The cosine is even and has Lipschitz constant one. Combining \cref{eq:scaled-frozen-lone,eq:scaled-resampled-lone,eq:scaled-modulus-bound,eq:scaled-rounding} therefore proves the uniform conditional-channel estimates
\begin{align}
 \mathbb E|g_{\mathsf F}(y)-\cos(\pi x)|
 &=O_{\C,x_{\max}}(\sqrt\theta),\notag\\
 \mathbb E|g_{\mathsf R}(y)-1|
 &=O_{\C,x_{\max}}(\theta).
 \label{eq:scaled-conditional-limits}
\end{align}
At $T=0$ the record is empty and all products equal one, so the same bounds include $x=0$.

It remains to optimize the pair with constant multipliers $\widetilde g_{\mathsf F}\coloneqq \cos(\pi x)$ and $\widetilde g_{\mathsf R}\coloneqq 1$, retaining the same independent record. Any rule $\eta_y$ acts through its average $\overline\eta\coloneqq \sum_yq_y\eta_y$. If $\cos(\pi x)\ge0$, the common choice $\overline\eta=1$ is optimal under both models and the minimax fidelity loss is zero. Otherwise write $\cos(\pi x)=-\chi$ with $\chi=\chi(x)>0$. The two fidelity losses are
\begin{equation}
 \widetilde{\Floss{\mathsf F}}(\overline\eta)=\frac{\chi(1+\overline\eta)}2,
 \qquad
 \widetilde{\Floss{\mathsf R}}(\overline\eta)=\frac{1-\overline\eta}2.
\end{equation}
Their maximum is minimized by $\overline\eta=(1-\chi)/(1+\chi)$ and has value $\chi/(1+\chi)$. Applying \cref{eq:scaled-minimax-stability} to \cref{eq:scaled-conditional-limits} proves the uniform minimax limit in \cref{cor:scaled-horizon}. Equation~\eqref{eq:scaled-recovery-scores} for the noise-aware optimum, together with the same conditional estimates and $||u|-|v||\le|u-v|$, also gives, uniformly for $0\le x\le x_{\max}$,
\begin{align}
 \left|\FC{\mathsf F}(T_\theta(x))-
       \frac{1+|\cos(\pi x)|}{2}\right|
 &=O_{\C,x_{\max}}(\sqrt\theta),\notag\\
 \bigl[1-\FC{\mathsf R}(T_\theta(x))\bigr]
 &=O_{\C,x_{\max}}(\theta).
 \label{eq:scaled-oracle-limits}
\end{align}
\end{proof}

At $x=1$, \cref{thm:regret} gives the sharper endpoint estimates $\Fmm(T_\pi)=1/2-O_{\C}(\theta)$ and $\FC{h}(T_\pi)=1-O_{\C}(\theta)$ for both models. Across scaled storage times, the $\mathsf F$ noise-aware optimum tends to $(1+|\cos(\pi x)|)/2$, while the $\mathsf R$ noise-aware optimum tends to one.

\subsection{Sharpness of the square-root rate}\label{app:half-horizon-tightness}

\begin{proposition}[Fidelity loss at $T_\pi/2$]\label{prop:half-horizon-tightness}
Fix a code and $X$-correction lookup table satisfying \cref{def:family}, and assume in addition that the table is minimum-weight in the sense of \cref{sec:regret}. Then
\begin{equation}
 \Fmm\bigl(T_\theta(1/2)\bigr)
 =\sqrt{\frac{a_{\C}}{2|b_{\C}|}}\,\sqrt\theta
     +o_{\C}(\sqrt\theta).
 \label{eq:half-horizon-tightness}
\end{equation}
With their minimum-weight $X$-correction lookup tables, the coefficient is $3/2$ for Steane and $\sqrt3/2$ for Shor. In particular, on every fixed interval with $x_{\max}\ge1/2$, the uniform error in \cref{cor:scaled-horizon} is of order $\sqrt\theta$.
\end{proposition}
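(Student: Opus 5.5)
The plan is to reduce the minimax loss at $T=T_\theta(1/2)$ to a single expectation, $\mathbb E[g_{\mathsf F}(y)^-]$ with $x^-\coloneqq\max\{-x,0\}$, and then to evaluate that expectation with a central limit theorem for the accumulated branch phase. First, by the Pauli reduction in \cref{app:regret} [\cref{eq:scaled-recovery-scores}], every common recovery can be taken to be a randomized Pauli rule $\eta_y\in[-1,1]$, with
\[
 \Floss{h}(\eta)=\tfrac12\textstyle\sum_yq_y\bigl(|g_h(y)|-\eta_yg_h(y)\bigr).
\]
For a minimum-weight table, \cref{lem:minimum-weight-small-phase} gives $|\phi_s|=O_{\C}(\theta)$ for every syndrome. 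Hence, for small $\theta$, every factor $\cos\phi_s$ is positive and $g_{\mathsf R}(y)>0$ for all records.

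\emph{Upper bound.} Take the common decoder $\eta_y\equiv1$. Since $g_{\mathsf R}>0$, its $\mathsf R$ loss vanishes. Its $\mathsf F$ loss is $\tfrac12\mathbb E(|g_{\mathsf F}|-g_{\mathsf F})=\mathbb E[g_{\mathsf F}^-]$. So $\Fmm\le\mathbb E[g_{\mathsf F}^-]$.

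\emph{Lower bound.} Fix a prior $p$ on $\mathsf F$ and $1-p$ on $\mathsf R$, as in the dual formula \eqref{eq:finite-minimax} (with CPTP replaced by randomized Pauli, which is justified by the reduction above). The Bayes loss separates over records and equals
\[
 \mathbb E\bigl[\min\{p\,g_{\mathsf F}^-,(1-p)g_{\mathsf R}\}\bigr].
\]
Choose $1-p=\theta^{1/4}$. We have $g_{\mathsf R}\ge1-O(\theta)$ except on an event of probability $O(\theta)$, and $g_{\mathsf F}^-$ is $O(\sqrt\theta)$ in $L^2$. The minimum therefore equals $p\,g_{\mathsf F}^-$ up to an error of $\mathbb E[g_{\mathsf F}^-\mathbf 1\{g_{\mathsf F}^->\theta^{1/4}/2\}]$, which is $o(\sqrt\theta)$ by Chebyshev and Cauchy--Schwarz. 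Together with the upper bound this gives $\Fmm=\mathbb E[g_{\mathsf F}^-]+o_{\C}(\sqrt\theta)$.

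\emph{Evaluating $\mathbb E[g_{\mathsf F}^-]$.} Write $\phi_{\rm tot}=\operatorname{sgn}(B)\pi/2+\psi$. Then $g_{\mathsf F}=\cos\phi_{\rm tot}=\mp\sin\psi$, and replacing $\sin\psi$ by $\psi$ costs $\mathbb E|\psi|^3/6=O(\theta^{3/2})$. By the symmetry $\xi\to-\xi$ we may take $B>0$, so that $\mathbb E[g_{\mathsf F}^-]=\mathbb E[\psi_+]+o(\sqrt\theta)$.

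Here $\psi=\sum_{j\le T}(\phi_{s_j}-\mu)+(T\mu-\pi/2)$ with $\mu\coloneqq\mathbb E\phi_s$. The drift term $T\mu-\pi/2$ is small. Comparing $\mu$ with $\alpha=\arg\mathbb E e^{i\phi_s}$ costs $O(\mathbb E|\phi_s|^3)=O(\theta^{d_X+2})$ per cycle, hence $O(\theta^2)$ after $T=\Theta(\theta^{-d_X})$ cycles. The rounding estimate \eqref{eq:scaled-rounding} contributes $O(\theta^{d_X})$. Both are $o(\sqrt\theta)$.

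For the variance, $1-A=\mathbb E(1-\cos\phi_s)=\tfrac12\mathbb E\phi_s^2+O(\theta^2\,\mathbb E\phi_s^2)$. By \cref{eq:orders} this gives $\operatorname{Var}\phi_s=2a_{\C}\theta^{d_X+1}(1+o(1))$, since $\mu^2=O(\theta^{2d_X})$ is negligible. Using $T\approx\pi/(2b_{\C}\theta^{d_X})$, we get
\[
 \sigma^2\coloneqq T\operatorname{Var}\phi_s=\frac{\pi a_{\C}}{|b_{\C}|}\,\theta\,(1+o(1)).
\]

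\emph{The main obstacle} is the quantitative CLT needed to pass from $\psi$ to a Gaussian in the $L^1$ functional $\mathbb E[\psi_+]$, at precision $o(\sigma)$. I would apply a Berry--Esseen or Wasserstein-1 bound for i.i.d. sums. Since each centered increment is bounded by $O(\theta)$, the Lyapunov ratio is
\[
 \frac{T\,\mathbb E|\phi_s-\mu|^3}{\sigma^3}\le\frac{O(\theta)\,\sigma^2}{\sigma^3}=O(\sqrt\theta)\to0.
\]
The Wasserstein-1 CLT then gives $W_1(\psi/\sigma,\mathcal N(0,1))=O(\sqrt\theta)$, and $x\mapsto x_+$ is $1$-Lipschitz. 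Hence
\[
 \mathbb E[\psi_+]=\frac{\sigma}{\sqrt{2\pi}}+o(\sqrt\theta)=\sqrt{\frac{a_{\C}\theta}{2|b_{\C}|}}+o(\sqrt\theta).
\]

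For the stated coefficients, I would substitute the counts from \cref{app:branch}: $a_{\C}=2\sum_sm_s^2$ and $|b_{\C}|=2N_{d_X}\binom{2r}{r}$, with $N_3$ and $m_s$ computed for Steane and Shor. For Steane the nonzero syndromes have $m_s=3$, and Shor has every $m_s\in\{0,1\}$; these should reproduce $3/2$ and $\sqrt3/2$. Finally, since the limit curve $\chi/(1+\chi)$ vanishes at $x=1/2$, the loss of exact order $\sqrt\theta$ shows that the uniform error in \cref{cor:scaled-horizon} cannot be $o(\sqrt\theta)$.
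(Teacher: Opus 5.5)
Your proposal is correct and follows essentially the same route as the paper. The identity decoder gives the upper bound $\mathbb E[g_{\mathsf F}^-]$, a prior with weight $\theta^{1/4}$ on $\mathsf R$ matches it up to $O(\theta^{3/4})$, and a central limit theorem for the accumulated phase near $\pm\pi/2$ evaluates this expectation; the only differences are technical. You use a quantitative Wasserstein-1 CLT, which supplies an explicit rate, where the paper uses the Lindeberg CLT plus uniform integrability, and you read off the per-cycle phase variance from $1-A$ rather than from the coset-counting moment lemma.
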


\begin{proof}
Use the principal branch angles $\phi_s\in(-\pi,\pi)$ supplied by \cref{lem:minimum-weight-small-phase}. They give the same conditional channels as any angles differing by multiples of $2\pi$. Let
\begin{equation}
 \mu_\theta\coloneqq \mathbb E_q\phi_s,\qquad
 \sigma_\theta^2\coloneqq \operatorname{Var}_q(\phi_s).
\end{equation}
The phase-moment estimates in \cref{eq:small-phase-moments}, with $a_{\C}=2\sum_s m_s^2$, give
\begin{align}
 \max_s|\phi_s|&=O_{\C}(\theta),\notag\\
 \mu_\theta&=B+O_{\C}(\theta^{d_X+2}),\notag\\
 \sigma_\theta^2&=2a_{\C}\theta^{d_X+1}
                    +O_{\C}(\theta^{d_X+3}).
 \label{eq:half-horizon-phase-moments}
\end{align}
Write $T\coloneqq T_\theta(1/2)$ and $\phi_{\rm tot}(y)=\sum_{j=1}^T\phi_{s_j}$ for records of this length. The branch records are independent with law $q_s$, while $T\sim\pi/(2|b_{\C}|\theta^{d_X})$. Since $\alpha=\arctan(B/A)=B+O_{\C}(\theta^{d_X+2})$, the moment estimates and rounding of the storage time imply
\begin{align}
 T\mu_\theta&=\operatorname{sgn}(b_{\C})\frac\pi2
                 +O_{\C}(\theta^2),\notag\\
 \frac{T\sigma_\theta^2}{\theta}
 &\longrightarrow \frac{\pi a_{\C}}{|b_{\C}|}.
 \label{eq:half-horizon-mean-variance}
\end{align}
Each centered phase increment is $O_{\C}(\theta)$, whereas the standard deviation of their sum is of order $\sqrt\theta$. Thus the triangular array of centered increments satisfies the Lindeberg condition, and
\begin{equation}
 \frac{\phi_{\rm tot}(y)-\operatorname{sgn}(b_{\C})\pi/2}{\sqrt\theta}
 \;\Longrightarrow\;
 N\!\left(0,\frac{\pi a_{\C}}{|b_{\C}|}\right).
 \label{eq:half-horizon-phase-clt}
\end{equation}
At this storage time $g_{\mathsf F}=\cos\phi_{\rm tot}(y)$. Expanding the cosine about $\operatorname{sgn}(b_{\C})\pi/2$ therefore gives
\begin{equation}
 \frac{g_{\mathsf F}}{\sqrt\theta}
 \;\Longrightarrow\;
 N\!\left(0,\frac{\pi a_{\C}}{|b_{\C}|}\right).
 \label{eq:half-horizon-multiplier-clt}
\end{equation}
Furthermore, $|g_{\mathsf F}|\le |\phi_{\rm tot}(y)-\operatorname{sgn}(b_{\C})\pi/2|$ and \cref{eq:half-horizon-mean-variance} show that $\mathbb E g_{\mathsf F}^2=O_{\C}(\theta)$. The scaled multipliers consequently have uniformly bounded second moments, so their negative parts are uniformly integrable. Defining $\gamma_{\mathsf F}(y)\coloneqq [-g_{\mathsf F}(y)]_+$ gives
\begin{equation}
 \mathbb E\gamma_{\mathsf F}(y)
 =\sqrt{\frac{a_{\C}}{2|b_{\C}|}}\,\sqrt\theta
      +o_{\C}(\sqrt\theta),\qquad
 \mathbb E\gamma_{\mathsf F}(y)^2=O_{\C}(\theta).
 \label{eq:half-horizon-negative-part}
\end{equation}

For sufficiently small $\theta$, all $\cos\phi_s$ are positive, so $\gamma_{\mathsf R}(y)\coloneqq g_{\mathsf R}(y)$ lies in $(0,1]$, and \cref{eq:scaled-resampled-lone} gives $\mathbb E[1-\gamma_{\mathsf R}(y)]=O_{\C}(\theta)$. The final identity is optimal under $\mathsf R$ and has fidelity loss $\mathbb E\gamma_{\mathsf F}(y)$ under $\mathsf F$, proving $\Fmm(T)\le\mathbb E\gamma_{\mathsf F}(y)$.

For the matching lower bound, put $\tau\coloneqq \theta^{1/4}$ and assign prior probability $p=1-\tau$ to $\mathsf F$ and $\tau$ to $\mathsf R$. On records where $g_{\mathsf F}\ge0$, the models share the optimal identity correction. On every other record, their opposite preferences make the minimum prior-averaged fidelity loss on that record equal to $\min\{p\,\gamma_{\mathsf F}(y),\tau\gamma_{\mathsf R}(y)\}$. The worst-case fidelity loss of any common decoder is at least its prior-averaged fidelity loss, so
\begin{align}
 \Fmm(T)
 &\ge\mathbb E\min\{p\,\gamma_{\mathsf F}(y),\tau\gamma_{\mathsf R}(y)\}\notag\\
 &\ge p\,\mathbb E\gamma_{\mathsf F}(y)
       -\frac{p^2\mathbb E\gamma_{\mathsf F}(y)^2}{\tau}
       -\tau\mathbb E[1-\gamma_{\mathsf R}(y)]\notag\\
 &=\mathbb E\gamma_{\mathsf F}(y)-O_{\C}(\theta^{3/4}).
 \label{eq:half-horizon-bayes-lower}
\end{align}
Here the second inequality uses the Lipschitz dependence of $\min\{u,v\}$ on $v$ and $(u-\tau)_+\le u^2/\tau$ for $u\ge0$. Combining this bound with \cref{eq:half-horizon-negative-part} and the identity-decoder upper bound proves \cref{eq:half-horizon-tightness}.

For Steane, each of the seven nonzero syndromes has three weight-two $X$ errors in its corrected logical-$X$ class. Therefore $a_{\C}=2\cdot7\cdot3^2=126$. Its seven weight-three logical supports give $|b_{\C}|=2\cdot7\binom21=28$, and the coefficient in \cref{eq:half-horizon-tightness} is $\sqrt{126/(2\cdot28)}=3/2$. Since $\chi(1/2)=0$, this positive coefficient also supplies the asserted lower bound on the uniform error.
\end{proof}

\section{Stochastic \texorpdfstring{$X$}{X} faults and incorrect reports}
\label{app:reporting-faults}

\subsection{Exact blindness with residual syndrome sectors}

We use the code and fixed $X$-correction lookup table of \cref{def:family}. Write $\operatorname{syn}(e)$ for the syndrome of $X^e$, and represent the physical sector with syndrome $s_{\rm res}$ by $R_{s_{\rm res}}V$, where $V$ encodes the logical qubit. Each round applies $U_{\xi_j}$ and a stochastic Pauli $X^{e_j}$, performs ideal complete syndrome extraction with true outcome $s_j$, and physically applies $R_{y_j}$ selected by its reported outcome $y_j$.

The fault and reporting distributions are common to the two models. A classical fault memory may select $e_j$ using earlier faults, reports, and external classical settings. The reporting kernel may also use the current true syndrome and fault. Conditional on those classical arguments, all kernels are independent of the sign trajectory and the logical quantum state. Their initial classical state is independent of the sign source and has the same law in both models. External settings may depend on earlier reports and source-independent randomness. Boundary CPTP controllers may use a private quantum register and the reported history, but preserve each physical syndrome sector. Private logical measurement outcomes are not inputs to the fault or reporting kernels.

\begin{lemma}[Blind reported histories with stochastic $X$ faults]
\label{lem:reporting-history}
Under these assumptions, starting in the code space, the joint distribution of the entire reported history and final residual syndrome is independent of the sign trajectory. In particular, these observations are blind to the temporal model in the sense of \cref{sec:equivalence}: their distributions coincide under $\mathsf F$ and $\mathsf R$. The result allows classical fault memory and incorrect physical corrections.
\end{lemma}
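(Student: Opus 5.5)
The plan is an induction over rounds on the joint (unnormalized) state of the system, the private register, and the classical fault/report memory. The invariant is that, conditional on any fixed sign trajectory and any realized classical history (faults $e_1,\dots,e_j$, true syndromes, reports $y_1,\dots,y_j$), the physical state lies in a single syndrome sector. That sector is $R_{s_{\rm res}}V$ applied to some logical state, where $s_{\rm res}=s_{{\rm res},j}$ is a known function of the classical history. Moreover, the probability of that history does not depend on the signs. The base case is the code space with $s_{\rm res}=0$. Boundary controllers preserve syndrome sectors, so they only act inside $R_{s_{\rm res}}V\otimes\mathcal H_{\rm priv}$. Being trace preserving, they do not change history probabilities.

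For the inductive step, fix a round and let the current sector be $R_{c}V$ with $c=s_{\rm res}$. The fault $X^{e}$ is drawn from a kernel that depends only on the classical history, so its law is sign independent. Because $U_\xi$ commutes with every $X$-type operator, $X^{e}U_\xi R_c V = R_c X^{e}U_\xi V$ up to the order of commuting $X$ strings. Projecting onto the true syndrome $s$ therefore gives
\[
 \Pi_s X^{e}U_\xi R_cV = R_cX^{e}\,\Pi_{s\oplus c\oplus\operatorname{syn}(e)}U_\xi V .
\]
By the one-cycle branch form \cref{eq:branch-form}, with $s'=s\oplus c\oplus\operatorname{syn}(e)$, we have $V^\dagger R_{s'}\Pi_{s'}U_\xi V=u_{s'}I-i\xi v_{s'}\Xbar$. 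Hence the branch equals $R_cX^{e}R_{s'}V\,(u_{s'}I-i\xi v_{s'}\Xbar)$. Its effect on the logical system, tensored with the private register, is $q_{s'}I$, independent of $\xi$ and of the logical/private state. This is the same scalar-effect argument as in \cref{thm:equivalence}. The true syndrome $s$ thus occurs with probability $q_{s\oplus c\oplus\operatorname{syn}(e)}$, which depends only on classical data. The report $y$ is then drawn from a kernel in $(s,e,\text{history})$, so it is also sign independent. Applying $R_y$ leaves the state in the sector of the $X$ operator $R_yR_cX^{e}R_{s'}$, whose syndrome is $y\oplus c\oplus\operatorname{syn}(e)\oplus s'=y\oplus s$. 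This is a deterministic function of the classical history, and it serves as the new $s_{\rm res}$. Modulo $X$ stabilizers and a possible $\Xbar$, that operator equals $R_{y\oplus s}$, with the $\Xbar$ absorbed into the logical state. This closes the induction.

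Multiplying the round-wise conditional probabilities then shows that the joint law of the fault history, true syndromes, reports, and final residual syndrome factorizes into sign-independent kernels, for every sign trajectory $\boldsymbol\xi$. Marginalizing over the unreported faults and true syndromes gives the claim for the reported history and final $s_{\rm res}$. Averaging over the sign source then yields equality under $\mathsf F$ and $\mathsf R$. This last step uses that the initial classical fault state is independent of the source and has the same law in both models.

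The main obstacle is bookkeeping rather than analysis. First, I must verify that the commuting step is valid at the operator level with the environment and private register present. Second, I must verify that representing the sector by $R_{s_{\rm res}}V$ is consistent even though the actual accumulated $X$ frame may differ from $R_{s_{\rm res}}$ by a logical $\Xbar$. Such an $\Xbar$ commutes with $U_\xi$ and with the branch form up to the phase $\xi\to\xi$, so it does not affect the scalar effect. Finally, external settings and private outputs must never feed back into the kernels; this is exactly the stated assumption, and it is what keeps the conditional probabilities free of sign-dependent logical information.
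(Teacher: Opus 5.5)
Your proposal is correct and follows the paper's own proof. Both commute the $X$-type operators past $U_\xi$ and factor each round's branch as a zero-syndrome $X$-normalizer ($I$ or $\Xbar$ on the code space) times $K_w(\xi)$ with $w=s\oplus s_{\rm res}\oplus\operatorname{syn}(e)$. This gives the scalar effect $q_wI$ and the outgoing sector $s\oplus y$, and both then induct with the sign-independent classical kernels.

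One small correction to your closing remark: the assumptions do allow external classical settings to enter the fault kernel. Your induction already covers this, because those settings depend only on earlier reports and source-independent randomness.
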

\begin{proof}
Fix an incoming sector $s_{\rm res}$, a fault $e$, a true syndrome $s$, and a report $y$. Put $w\coloneqq s\oplus s_{\rm res}\oplus\operatorname{syn}(e)$, and recall that all $X$ operators commute with $U_\xi$. After the applied $R_y$, the outgoing sector is $s_{\rm res}'=s\oplus y$. In canonical code coordinates the unweighted branch is
\begin{align}
 &V^\dagger R_{s_{\rm res}'}R_y\Pi_sX^eU_\xi R_{s_{\rm res}}V
       =L_{s_{\rm res},s,y,e}K_w(\xi),\notag\\
 &L_{s_{\rm res},s,y,e}\coloneqq V^\dagger R_{s_{\rm res}'}R_yX^eR_{s_{\rm res}}R_wV
       \in\{I,\Xbar\},
 \label{eq:reporting-sector-branch}
\end{align}
up to a phase. Here $L_{s_{\rm res},s,y,e}$ is the logical restriction of an $X$-type normalizer with zero syndrome. Its branch effect is therefore $q_wI$ by \cref{eq:branch-form}, even for an input entangled with a private register. Multiplying this probability by the stipulated classical kernels gives a sign-independent update of the fault memory, physical sector, and reported history. Induction over rounds proves the claim. The residual sector is retained in this update, so the reported process need not have independent increments.
\end{proof}

\begin{proposition}[Incorrect reports and physical $X$ faults]
\label{prop:reporting-faults}
Use the identity data schedule. Physical $X$ faults occur independently across qubits and rounds, with probability $p_X$ per qubit per round. Syndrome-report masks are i.i.d., with probability $p_{\rm rep}$ of a nonzero mask, and are independent of the physical faults. Both fault processes are independent of the sign source. Reports select the actual physical corrections. A trusted final syndrome extraction and correction returns the block to the code space, and both the noise-aware and the common recovery receive its syndrome. The complete reported history has the same distribution under $\mathsf F$ and $\mathsf R$.

Let $\RefInf[(p_X)]{h}(T)$ be the infidelity of the final correction $I$ under $\mathsf R$, or $\Xbar$ under $\mathsf F$, in the correctly reported noisy experiment, and let $q_{\rm nz}^{(p_X)}$ be its one-cycle nonzero-syndrome probability. Set
\begin{equation}
\epsilon_{\rm rep}(T)\coloneqq \min\{1,
 \begin{aligned}[t]
 &(2T-1)q_{\rm nz}^{(p_X)}p_{\rm rep}\\[-2pt]
 &\quad +(T-1)p_{\rm rep}^2\}.
 \end{aligned}
 \label{eq:reporting-budget-main}
\end{equation}
Let $\widetilde{\FC{h}}$ and $\widetilde{\Fmm}$ be the noise-aware full-record CPTP optimum and the minimax fidelity loss of the actual noisy instrument. Then
\begin{align}
 \widetilde{\FC{h}}&\ge1-\RefInf[(p_X)]{h}-\epsilon_{\rm rep},\notag\\
 \widetilde{\Fmm}
 &\ge[\tfrac12-\RefInf[(p_X)]{\mathsf F}-\RefInf[(p_X)]{\mathsf R}
                   -2\epsilon_{\rm rep}]_+.
 \label{eq:reporting-main}
\end{align}
For fixed $d_X=3$, rates $p_X=\kappa_X\theta^2$ and $p_{\rm rep}=\kappa_{\rm rep}\theta^2$ with fixed $\kappa_X,\kappa_{\rm rep}\ge0$ give $\widetilde{\FC{h}}=1-O(\theta)$ and $\widetilde{\Fmm}=1/2-O(\theta)$ at the storage time $T_{\pi,p_X}=\Theta_{\C,\kappa_X}(\theta^{-3})$ of \cref{eq:reporting-noisy-horizon}, defined by the noisy logical phase.
\end{proposition}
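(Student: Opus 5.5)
Blindness of the reported history follows directly from \cref{lem:reporting-history}. Independent physical $X$ faults and i.i.d.\ report masks are classical kernels that do not depend on the signs or on the logical state. The trusted final extraction is one more round with a zero mask. So the plan reduces to the two fidelity bounds. For these I would compare the actual noisy instrument with the \emph{correctly reported} noisy experiment, i.e.\ the same physical faults with all masks set to zero. In that experiment, each round's branch is $K_w(\xi)$ composed with a known $\{I,\Xbar\}$ frame, as in \cref{eq:reporting-sector-branch}. The conditional logical channels are therefore real $I/\Xbar$ mixtures. This gives a fault-averaged analogue of \cref{eq:conditional}, with reference infidelities $\RefInf[(p_X)]{h}$.

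The key step is a coupling. Draw physical faults, masks, and signs jointly. On the event that every report is correct, the actual experiment coincides with the correctly reported one, record by record. I will bound the probability of the complementary event by $\epsilon_{\rm rep}(T)$. The point is that a wrong report changes the logical output only when it interacts with something: a nonzero true syndrome in the same or the adjacent round, or a second wrong report. A single wrong mask while the true syndrome stays zero is undone by the next extraction. It changes only the residual sector and the known frame, and the trusted final round fixes it. Conditioning on the residual sector as in \cref{lem:reporting-history} makes this precise.

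A union bound then gives the terms of \cref{eq:reporting-budget-main}. Pairs of a mask with a nonzero syndrome in the same or a neighbouring round contribute $(2T-1)q_{\rm nz}^{(p_X)}p_{\rm rep}$. Pairs of masks in consecutive rounds contribute $(T-1)p_{\rm rep}^2$. The ``$\min\{1,\cdot\}$'' is trivial. Entanglement fidelity is the probability of an effect, so replacing the true instrument by the coupled one changes any recovery's fidelity by at most the mismatch probability.

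With the coupling in hand, $\widetilde{\FC{h}}\ge1-\RefInf[(p_X)]{h}-\epsilon_{\rm rep}$ follows by applying the reference correction, extended to the full reported record. For the minimax bound, I would argue as in \cref{app:regret} via \cref{prop:calibration-loss}. In the correctly reported experiment, any common recovery has summed fidelity at most $1+\mathbb E|g_{\mathsf F}+g_{\mathsf R}|/2\le 1+\RefInf[(p_X)]{\mathsf F}+\RefInf[(p_X)]{\mathsf R}$. Each model's fidelity moves by at most $\epsilon_{\rm rep}$ under the coupling, which yields the $-2\epsilon_{\rm rep}$ term.

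For the scaling with $d_X=3$ and rates of order $\theta^2$, the $X$ faults only add stochastic Pauli classes to each syndrome. The fault-averaged branch multiplier is then $A_{p_X}-iB_{p_X}$, with $B_{p_X}=\Theta(\theta^3)$ and $1-A_{p_X}=O(\theta^4)$, so $T_{\pi,p_X}=\Theta(\theta^{-3})$. The concentration estimate \cref{eq:scaled-phase-concentration} then gives $\RefInf[(p_X)]{h}=O(\theta)$. Since $q_{\rm nz}=O(\theta^2)$, we get $\epsilon_{\rm rep}=O(\theta^{-3}\theta^4)=O(\theta)$.

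I expect the main obstacle to be showing $B_{p_X}\neq0$ and $1-A_{p_X}=O(\theta^4)$. This requires that a weight-one stochastic fault combined with a coherent weight-two amplitude contributes only at order $\theta^4$ to $1-A$ and does not cancel the $\theta^3$ drift. I would first check this coset by coset using the parity argument of \cref{app:branch}.
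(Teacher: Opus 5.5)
There is a genuine gap in the coupling step, and it breaks the minimax bound. You couple on the event that every report is correct, say the two experiments then coincide ``record by record,'' and propose to bound the complement by $\epsilon_{\rm rep}$. That complement has probability $1-(1-p_{\rm rep})^T$. At $T_{\pi,p_X}=\Theta(\theta^{-3})$ with $p_{\rm rep}=\kappa_{\rm rep}\theta^2$ and $\kappa_{\rm rep}>0$, one has $Tp_{\rm rep}=\Theta(\theta^{-1})$, so wrong reports are essentially certain. What $\epsilon_{\rm rep}$ actually bounds is the smaller event that some round leaves a nontrivial logical discrepancy, namely the factor $C_j$ of \cref{eq:reporting-cocycle}. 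This requires a nonzero \emph{increment} $w_j$ (the syndrome relative to the incoming residual sector) next to a nonzero mask, or two adjacent masks. It must be $w_j$, not the true syndrome $s_j=w_j\oplus b_{j-1}$: the latter is nonzero whenever $b_{j-1}$ is, and only $w_j$ is independent of the masks with $\Pr(w_j\neq0)=q_{\rm nz}^{(p_X)}$. Off this event the cleaned logical outputs agree. The reported record $(y_1,\ldots,y_T,b_T)$, with $y_j=w_j\oplus b_{j-1}\oplus b_j$, still differs from $(w_1,\ldots,w_T)$. Your coupling therefore controls only record-discarded channels. That suffices for $\widetilde{\FC{h}}\ge1-\RefInf[(p_X)]{h}-\epsilon_{\rm rep}$, since the reference correction ignores the record. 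It does not justify your claim that a record-dependent common recovery changes its fidelity by at most $\epsilon_{\rm rep}$ per model, because that recovery sees different records in the two experiments.

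The paper avoids transferring arbitrary recoveries and runs the argument of \cref{app:regret} on the actual instrument itself. The increment representation shows that $(y_1,\ldots,y_T,b_T)$ has the same law under $\mathsf F$ and $\mathsf R$. Conditioned on this record, the cleaned channel is a mixture, over unrecorded faults and masks, of $\{I,\Xbar\}$ factors times $\Xbar$-axis rotations, and the $\xi\to-\xi$ symmetry makes its multiplier real. The $\{I,\Xbar\}$ frame you call known is in fact fault-dependent and unrecorded; what the CPTP-to-Pauli reduction needs is this realness. Every common recovery then has summed fidelity at most $1+\widetilde{\RefInf{\mathsf F}}+\widetilde{\RefInf{\mathsf R}}$, where $\widetilde{\RefInf{h}}$ are the actual instrument's own fixed-correction infidelities. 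These depend only on record-discarded channels, so your coupling gives $\widetilde{\RefInf{h}}\le\RefInf[(p_X)]{h}+\epsilon_{\rm rep}$, and the second inequality follows.

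Alternatively, you can keep your route by coupling to a hybrid instrument that outputs the actual record together with the correctly reported logical state. The masks are independent of everything else, so this hybrid is a model-independent classical garbling of the correctly reported experiment, and the common-recovery bound transfers to it.

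Two smaller points. The claim $\widetilde{\Fmm}=1/2-O(\theta)$ also needs the trivial upper bound $1/2$, supplied by an equal $I/\Xbar$ mixture. And $\RefInf[(p_X)]{h}=O(\theta)$ follows from the record-discarded means $A_{p_X}^T$ and $\operatorname{Re}[(A_{p_X}-iB_{p_X})^T]$, using $T(1-A_{p_X})=O(\theta)$ and phase rounding; no concentration estimate is needed.
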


\subsection{A sparse-event bound for the actual noisy data task}

We first analyze the physical faults with correct reports, under the independence assumptions of \cref{prop:reporting-faults}. With $\nu_{p_X}(e)\coloneqq p_X^{|e|}(1-p_X)^{n-|e|}$, averaging the fixed-fault branches gives effects $q_s^{(p_X)}I$ and subnormalized transverse multipliers $c_s^{(p_X)}-i\xi d_s^{(p_X)}$, with real coefficients and $|c_s^{(p_X)}-i d_s^{(p_X)}|\le q_s^{(p_X)}$. Define
\begin{equation}
 \begin{gathered}
 A_{p_X}\coloneqq \sum_s c_s^{(p_X)},\qquad B_{p_X}\coloneqq \sum_s d_s^{(p_X)},\\
 q_{\rm nz}^{(p_X)}\coloneqq 1-q_0^{(p_X)}.
 \end{gathered}
 \label{eq:reporting-noisy-coefficients}
\end{equation}
At any integer storage time $T\ge1$, the correctly reported, record-discarded multipliers are $A_{p_X}^T$ under $\mathsf R$ and $\operatorname{Re}[(A_{p_X}-iB_{p_X})^T]$ under $\mathsf F$. Set
\begin{align}
 \RefInf[(p_X)]{\mathsf R}(T)&\coloneqq \frac{1-A_{p_X}^T}{2},\notag\\
 \RefInf[(p_X)]{\mathsf F}(T)&\coloneqq \frac{1+\operatorname{Re}[(A_{p_X}-iB_{p_X})^T]}{2}.
 \label{eq:reporting-noisy-defects}
\end{align}
These quantities refer to the fixed final corrections $I$ under $\mathsf R$ and $\Xbar$ under $\mathsf F$.

Now add i.i.d. report masks $b_j$, independent of the physical faults and the sign source, with $b_0=0$ and $\Pr(b_j\ne0)=p_{\rm rep}$. The distribution of a nonzero mask is arbitrary and fixed. The report is $y_j=s_j\oplus b_j$ and triggers the actual correction $R_{y_j}$. After round $T$, one trusted complete syndrome extraction measures the residual sector and applies its $X$ correction. Both the noise-aware and the common recovery receive the reported history and this final syndrome and use the same final CPTP recovery class. This cleanup costs one additional syndrome extraction. Tildes below refer to this noisy experiment and its own noise-aware benchmark.

\begin{proof}[Proof of \cref{prop:reporting-faults}]
After averaging the independent physical faults, scalar branch effects couple the experiment to i.i.d. increments $w_j$ of law $q^{(p_X)}$, independent of the masks and signs. The true syndrome, report, and outgoing residual sector obey
\begin{equation}
 s_j=w_j\oplus b_{j-1},\qquad
 y_j=w_j\oplus b_{j-1}\oplus b_j,\qquad s_{{\rm res},j}=b_j.
 \label{eq:reporting-increments}
\end{equation}
Relative to the correctly reported branch, the canonical logical map has the additional physical $X$-type normalizer $C_j$,
\begin{equation}
 C_j=R_{b_j}R_{w_j\oplus b_{j-1}\oplus b_j}R_{b_{j-1}}R_{w_j}.
 \label{eq:reporting-cocycle}
\end{equation}
Its restriction to the code space acts as $I$ or $\Xbar$ and commutes with the conditional logical phase channels. Final cleanup removes the final sector representative. The logical maps therefore differ only by the product of the $C_j$.

The factor $C_j$ is the identity if both adjacent masks vanish, or if $w_j=0$ and at most one adjacent mask is nonzero. A nontrivial factor consequently requires a nonzero increment next to a wrong report, or two adjacent wrong reports. Independence and a union bound give \cref{eq:reporting-budget-main}. Outside these events, the cleaned logical channels agree for every input, including an entangled input. Half the diamond distance between their record-discarded channels is at most $\epsilon_{\rm rep}$, so each fixed-action fidelity changes by at most that amount. This proves the first inequality in \cref{eq:reporting-main}.

The actual record $\widetilde y=(y_1,\ldots,y_T,b_T)$ has the same distribution under both models by \cref{eq:reporting-increments}. Conditioned on it, the cleaned channel is a mixture of single-axis phase channels and logical $X$ factors. Global sign symmetry makes its transverse multiplier $g_h(\widetilde y)$ real, with $|g_h|\le1$. The CPTP-to-Pauli reduction of \cref{app:regret} therefore applies to this actual instrument. If its fixed-correction infidelities are $\widetilde{\RefInf{h}}$, then
\[
 \mathbb E|g_{\mathsf F}+g_{\mathsf R}|
 \le\mathbb E(1+g_{\mathsf F})+\mathbb E(1-g_{\mathsf R})
 =2(\widetilde{\RefInf{\mathsf F}}+\widetilde{\RefInf{\mathsf R}}).
\]
Every common recovery has summed fidelity at most $1+\widetilde{\RefInf{\mathsf F}}+\widetilde{\RefInf{\mathsf R}}$, whereas the two noise-aware optima sum to at least $2-\widetilde{\RefInf{\mathsf F}}-\widetilde{\RefInf{\mathsf R}}$. Their maximum fidelity loss is at least $[1/2-\widetilde{\RefInf{\mathsf F}}-\widetilde{\RefInf{\mathsf R}}]_+$. Using $\widetilde{\RefInf{h}}\le\RefInf[(p_X)]{h}+\epsilon_{\rm rep}$ proves the second inequality in \cref{eq:reporting-main}.
\end{proof}

\subsection{A distance-three weak-noise regime}

For any fixed code and lookup table in \cref{def:family} with $d_X=3$, let $p_X=\kappa_X\theta^2$ and $p_{\rm rep}=\kappa_{\rm rep}\theta^2$ for fixed $\kappa_X,\kappa_{\rm rep}\ge0$. Then
\begin{equation}
 \begin{aligned}
 1-A_{p_X}&=O(\theta^4),\\
 B_{p_X}&=b_{\C}\theta^3+O(\theta^5),\\
 q_{\rm nz}^{(p_X)}&=O(\theta^2).
 \end{aligned}
 \label{eq:reporting-distance-three-orders}
\end{equation}
To see the first order, note that zero physical faults give logical error $O(\theta^4)$. One fault has probability $O(p_X)$ and needs at least one coherent error to escape correction, which contributes $O(p_X\theta^2)$. Two or more faults have probability $O(p_X^2)$. For $B_{p_X}$, each fixed-fault cross term pairs coherent supports in different logical classes, whose total weight is at least three. Nonzero fault patterns change its leading coefficient only at $O(p_X\theta^3)$, and \cref{eq:orders} supplies the nonzero zero-fault coefficient. A nonzero syndrome requires a coherent error or a physical fault, proving the final order.

For sufficiently small $\theta$, take
\begin{equation}
 T_{\pi,p_X}\coloneqq \operatorname{round}\frac{\pi}{|\arg(A_{p_X}-iB_{p_X})|}
           =\Theta(\theta^{-3}).
 \label{eq:reporting-noisy-horizon}
\end{equation}
Since $A_{p_X}\le|A_{p_X}-iB_{p_X}|\le1$, the same phase-rounding estimate as for \cref{eq:defects} gives $\RefInf[(p_X)]{h}(T_{\pi,p_X})=O(\theta)$. Equation~\eqref{eq:reporting-budget-main} gives $\epsilon_{\rm rep}(T_{\pi,p_X})=O(\theta)$ as well. Hence $\widetilde{\FC{h}}=1-O(\theta)$ and $\widetilde{\Fmm}=1/2-O(\theta)$. Complete logical $X$ dephasing supplies the universal upper bound $1/2$ on the latter. The noise-aware benchmark throughout knows F or R and the specified fault laws, while receiving the same noisy record and cleanup as the common recovery.

\section{A simultaneous environment-qubit extension}
\label{app:quantum}

\subsection{Common interaction and scalar effects}
Set $H\coloneqq \sum_jX_j$ and $\Omega\coloneqq (\theta^2H^2+\omega^2I)^{1/2}$. Functional calculus gives
\begin{equation}
 U(\theta,\omega)=\cos\Omega\otimes I_E
 -i\omega\operatorname{sinc}(\Omega)\otimes X_E
 -i\theta H\operatorname{sinc}(\Omega)\otimes Z_E,
 \label{eq:quantum-functional}
\end{equation}
where $\operatorname{sinc}x\coloneqq \sin x/x$ and $\operatorname{sinc}0\coloneqq 1$. In the generator of \cref{eq:quantum-unitary}, coupling and precession act simultaneously throughout each noise interval.

In the physical $X$-string expansion $U(\theta,\omega)=\sum_PP\otimes M_P$, each $P$ acts on $\mathcal H_{\rm phys}$ and $M_P\in\mathcal B(\mathbb C^2)$ acts on the environment qubit. With real coefficients $\eta_{0,P},\eta_{x,P},\eta_{z,P}$, even and odd strings have
\begin{equation}
 M_P=\begin{cases}
 \eta_{0,P}I_E-i\eta_{x,P}X_E,& |P|\text{ even},\\
 -i\eta_{z,P}Z_E,& |P|\text{ odd},
 \end{cases}
 \label{eq:quantum-word}
\end{equation}
Indeed, real functions of $H^2$ contain only even strings, whereas multiplication by $H$ gives odd strings. Thus $M_P^\dagger M_P=\varpi_PI_E$, with $\varpi_P\coloneqq \eta_{0,P}^2+\eta_{x,P}^2$ for even $|P|$ and $\varpi_P\coloneqq\eta_{z,P}^2$ for odd $|P|$, and $\sum_P\varpi_P=1$. The retained model's Pauli-history probability is
\begin{equation}
 \Tr\!\left(M_{P_T}\cdots M_{P_1}\rho_E
             M_{P_1}^\dagger\cdots M_{P_T}^\dagger\right)
 =\prod_{t=1}^T\varpi_{P_t},
 \label{eq:quantum-history}
\end{equation}
which equals the reset model's probability.

For syndrome $s$, the opposite parities of the two corrected logical classes give the one-cycle joint branch $\widetilde K_s\in\mathcal B(\mathcal H_L\otimes\mathbb C^2)$,
\begin{equation}
 \widetilde K_s=I_L\otimes C_{s,0}+\Xbar\otimes C_{s,1},
 \label{eq:quantum-branch}
\end{equation}
where $C_{s,0},C_{s,1}\in\mathcal B(\mathbb C^2)$ act only on the environment, and $(C_{s,0},C_{s,1})$ is either $(\eta_{0,s}I-i\eta_{x,s}X_E,-i\eta_{z,s}Z_E)$ or the reversed pair, with real coefficients. Anticommutation of $X_E$ and $Z_E$ cancels the cross term, so
\begin{equation}
 \widetilde K_s^\dagger\widetilde K_s=(\eta_{0,s}^2+\eta_{x,s}^2+\eta_{z,s}^2)I_{LE}=q_sI_{LE}.
 \label{eq:quantum-scalar}
\end{equation}
Here $I_{LE}$ is the identity on $\mathcal H_L\otimes\mathbb C^2$. The scalar effect holds even after the environment has become entangled with the logical system or a private controller register. Applying it successively proves the equality of syndrome distributions stated in \cref{sec:quantum-extension}. These equalities compare retention and reset at the same parameters. Their common probabilities $q_s$ can depend on $\omega$ and reduce to those of \cref{eq:branch-form} at $\omega=0$.

To express both models with one known dilation for a fixed storage time, append a hidden classical flag, an initially maximally mixed active environment register, and fresh maximally mixed environment registers. At every interval boundary a known controlled swap exchanges the active register with the next fresh register only in the reset flag state. Used registers remain inaccessible. The initial flag is the only model-dependent preparation. This realizes the two recorded experiments with identical known global interactions and different initial environment states, as required by the framework.

\subsection{An explicit quantum-memory witness}
The Choi operator of a classical-memory process is separable across whole channel links: it is a convex combination of tensor products of positive operators, one for each link, together with an initial-state factor. A negative partial transpose across groups of whole links is therefore a sufficient witness of quantum memory \cite{giarmatzi2021}. We use the normalized Pauli-Choi vectors $|P\rangle\!\rangle\coloneqq (P\otimes I)|\Omega_{2^n}\rangle$ of physical Paulis $P$ and omit the independent initial system state. Projections within individual links preserve this separability. The projection below serves only to test the mathematical process tensor and is not performed in the data circuit.

Label two physical Paulis by $0$ and $1$ (below, $I$ and $X_1$), and suppose their coefficients acting on the environment are
\begin{equation}
 M_0=\eta_0I-i\eta_xX_E,\qquad M_1=-i\eta_zZ_E,
 \quad \varpi_0\coloneqq \eta_0^2+\eta_x^2.
 \label{eq:quantum-witness-coefficients}
\end{equation}
After projecting each of four links onto these two labels, the retained-environment process has coefficient matrix
\begin{equation}
 Q_{\mathbf b,\mathbf c}=\frac12\Tr\!\left[
 M_{b_4}\cdots M_{b_1}
 (M_{c_4}\cdots M_{c_1})^\dagger\right].
 \label{eq:quantum-gram}
\end{equation}
Here $\mathbf b,\mathbf c\in\{0,1\}^4$. Its trace is $(\varpi_0+\eta_z^2)^4$, and normalization is immaterial to the sign of a partial transpose. On the three histories $0000,0101,1111$, the partial transpose of the first two links is the matrix
\begin{equation}
 \begin{pmatrix}
 \varpi_0^4 & \eta_z^2\varpi_0 d_0 & \eta_z^4\varpi_0^2\\
 \eta_z^2\varpi_0 d_0 & \eta_z^4\varpi_0^2 & \eta_z^6\varpi_0\\
 \eta_z^4\varpi_0^2 & \eta_z^6\varpi_0 & \eta_z^8
 \end{pmatrix},
 \quad d_0\coloneqq \eta_0^4-6\eta_0^2\eta_x^2+\eta_x^4.
 \label{eq:quantum-minor}
\end{equation}
This follows directly from $M_0M_0^\dagger=\varpi_0I$, $M_1^2=-\eta_z^2I$, and $M_1M_0=M_0^\dagger M_1$. For example, the first off-diagonal entry is $\eta_z^2\varpi_0\Tr(M_0^4)/2=\eta_z^2\varpi_0d_0$. The vector
\begin{equation}
 |v\rangle\coloneqq \eta_z^4|0000\rangle
       +\eta_z^2\varpi_0|0101\rangle-2\varpi_0^2|1111\rangle
\end{equation}
therefore satisfies
\begin{equation}
 \langle v|Q^{T_{12}}|v\rangle
       =-16\eta_0^2\eta_x^2\eta_z^8\varpi_0^2<0
 \label{eq:quantum-negative-witness}
\end{equation}
whenever $\eta_0\eta_x\eta_z\ne0$. A classical-memory process would remain separable under these linkwise projections, so the original physical process requires quantum memory.

For \cref{eq:quantum-unitary}, choose physical Pauli strings $I$ and $X_1$. Let $x_j$ be independent uniform signs, $m\coloneqq \sum_jx_j$, and $\Omega_m\coloneqq (\theta^2m^2+\omega^2)^{1/2}$, the eigenvalue of $\Omega$ on the sector with $\sum_jx_j=m$. The coefficients of $I$ and $X_1$ are
\begin{align}
 \eta_0&=\mathbb E\cos\Omega_m,&
 \eta_x&=\omega\mathbb E\operatorname{sinc}\Omega_m,\notag\\
 \eta_z&=\frac\theta n\mathbb E[m^2\operatorname{sinc}\Omega_m].
 \label{eq:quantum-witness-values}
\end{align}
The last identity uses permutation symmetry, $\mathbb E[x_1m\mid m]=m^2/n$. If $\theta>0$, $\omega\ne0$, and $\sqrt{n^2\theta^2+\omega^2}<\pi/2$, then $\eta_0,\eta_z>0$ and $\eta_x$ has the sign of $\omega$. This proves the quantum-memory witness stated in \cref{sec:quantum-extension}. The reset-environment process has a product realization across links, whereas the retained-environment process has no classical-memory realization, even one with history-dependent instruments.

\subsection{Recorded logical channels and continuity of recovery loss}
For $q_s>0$ define the environment operators $W_{s,\pm},W_\pm(y)\in\mathcal B(\mathbb C^2)$ by
\begin{equation}
 \begin{gathered}
 W_{s,\pm}\coloneqq (C_{s,0}\pm C_{s,1})/\sqrt{q_s},\\
 W_\pm(y)\coloneqq W_{s_T,\pm}\cdots W_{s_1,\pm}.
 \end{gathered}
\end{equation}
Both $W_{s,\pm}$ are unitary with determinant one. For an even-weight correction from the lookup table they have the form $(\eta_{0,s}I-i(\eta_{x,s}X_E\pm\eta_{z,s}Z_E))/\sqrt{q_s}$. For an odd-weight correction, one acquires an overall minus sign, which preserves determinant one in dimension two. The maximally mixed initial environment state therefore gives real multipliers
\begin{align}
 g_{\mathrm{Ret}_\omega}(y)
 &=\tfrac12\Tr[W_+(y)W_-(y)^\dagger],\notag\\
 g_{\mathrm{Res}_\omega}(y)
 &=\prod_{t=1}^T\tfrac12\Tr[W_{s_t,+}W_{s_t,-}^\dagger].
 \label{eq:quantum-logical-multipliers}
\end{align}
Thus the CPTP-to-Pauli reduction of \cref{app:regret} applies here as well. The reality of these multipliers uses the maximally mixed initial environment state, whereas the scalar-effect argument applies to any initial environment state. All $X$ corrections in the fixed lookup table commute with the joint unitary, so \cref{lem:frame-defer} also continues to apply.

Duhamel's formula gives the operator-norm estimate
\begin{equation}
 \|U(\theta,\omega)-U(\theta,0)\|_\infty\le|\omega|.
 \label{eq:quantum-duhamel}
\end{equation}
For either model, telescoping over $T$ intervals bounds half the diamond distance between its complete recorded instruments at $\omega$ and zero by $T|\omega|$. Fixed recovery fidelities and their optimized values each change by at most this amount. Hence the minimax fidelity losses of the two model pairs, $\Fmm[\mathrm{Ret}_\omega,\mathrm{Res}_\omega](T)$ and $\Fmm[\mathsf F,\mathsf R](T)=\Fmm(T)$, with the same full-record final CPTP action set, satisfy
\begin{equation}
 \left|\Fmm[\mathrm{Ret}_\omega,\mathrm{Res}_\omega](T)
             -\Fmm[\mathsf F,\mathsf R](T)\right|
 \le 2T|\omega|.
 \label{eq:quantum-continuity}
\end{equation}
The operations within a run may correlate the environment and the system. Contractivity and the hybrid argument include those correlations.

For a concrete conservative bound, the fixed-correction infidelities of Shor at $\theta=0.1$, $T=267$ give $[1/2-\RefInf{\mathsf F}-\RefInf{\mathsf R}]_+ =0.120386047734\ldots$. Equation~\eqref{eq:quantum-continuity} thus gives at least $0.066986047734\ldots$ at $\omega=10^{-4}$. These parameters also satisfy the four-link witness condition. The argument establishes the coexistence of quantum memory, identical observed statistics, and nonzero recovery loss at a fixed storage time. Since the continuity bound grows with $T$, extending the weak-noise cost results to fixed nonzero $\omega$ requires a separate argument. The witness applies to the physical process before QEC. Contraction with QEC may remove its temporal entanglement.

\section{Calibration with retained quantum output}
\label{app:calibration}

This appendix proves \cref{thm:costs} for the access classes in \cref{sec:access}. Every channel norm includes an arbitrary reference system. CPTP maps on enlarged classical--quantum registers represent measurement outcomes and feed-forward. The bounds therefore cover calibration procedures that retain quantum output until recovery.

\subsection{One-cycle and two-interval sensitivities}

For the logical branch $K_s(\xi)=u_sI-i\xi v_s\Xbar$, the sign-odd part of the corrected instrument is
\begin{equation}
 \frac{\mathcal E_+-\mathcal E_-}{2}(\rho)
 =\sum_s|s\rangle\langle s|\otimes
                  \bigl(-iu_sv_s[\Xbar,\rho]\bigr).
 \label{eq:cal-sign-odd}
\end{equation}
\Cref{app:retained-coefficient} shows that this map has diamond norm $2\sum_s|u_sv_s|$, which proves the equality in \cref{eq:beta}, and evaluates its leading coefficient.

For a two-interval query, let $\mathcal U_\pm\coloneqq \operatorname{Ad}_{U_\pm(\theta)}$ be the physical one-interval channels for the two signs, and write $\mathcal U_{\rm av}\coloneqq (\mathcal U_++\mathcal U_-)/2$ and $\mathcal U_{\rm odd}\coloneqq (\mathcal U_+-\mathcal U_-)/2$. Before the common final extraction, the $\mathsf F/\mathsf R$ channel difference is exactly
\begin{equation}
 \tfrac12(\mathcal U_+^2+\mathcal U_-^2)-\mathcal U_{\rm av}^2
       =\mathcal U_{\rm odd}^2.
 \label{eq:cal-two-difference}
\end{equation}
With $H=\sum_{j=1}^nX_j$, the physical generator obeys $\|{-i}[H,\cdot]\|_\diamond\le2n$, and hence $\|\mathcal U_{\rm odd}\|_\diamond\le2n\theta$. Preparation, extraction, correction, and disposal of inaccessible systems are common CPTP maps, so
\begin{equation}
 \tfrac12\|\mathcal Q_{\mathsf F,2}
                 -\mathcal Q_{\mathsf R,2}\|_\diamond
       \le2n^2\theta^2.
 \label{eq:cal-two-bound}
\end{equation}

\begin{lemma}[Whole calibration output]
\label{lem:hybrid}
The full retained calibration states of every protocol in \cref{sec:access} satisfy \cref{eq:hybrid}.
\end{lemma}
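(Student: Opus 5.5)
The plan is to run a hybrid argument over noise calls, as in \cref{eq:hybrid}. The main difficulty is that under $\mathsf F$ the sign is shared by all intervals of a run, so the model-$\mathsf F$ process is not a product of identical calls. I would handle this by conditioning on the hidden signs.

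Fix a protocol in $\mathcal A_{\rm del}$ with $m$ complete QEC cycles and $N_{\rm del}$ two-interval queries. Its runs are separated by source resets, and each run is an interleaving of complete cycles and queries with adaptive CPTP controls, a private quantum register and retained outputs. Under $\mathsf R$ every cycle is the channel $\mathcal E_{\rm av}\coloneqq(\mathcal E_++\mathcal E_-)/2$, independently of everything else. Every query is $\mathcal Q_{\mathsf R,2}$.

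Under $\mathsf F$, each run draws one sign $\xi$ and applies $\mathcal E_\xi$ at every cycle. Each query becomes $\mathcal Q_{\xi,2}$, meaning extraction after $U_\xi^2=U_\xi(2\theta)$. I would write $\sigma_{\mathsf F}=\frac12\sum_\xi\sigma_{\mathsf F,\xi}$ for each run, and use joint convexity of the trace distance together with independence of runs given the model. This reduces the problem to bounding the distance between the fixed-sign protocol output and the $\mathsf R$ output, run by run. Equivalently, I would bound it for the whole protocol after conditioning on all run signs.

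For a fixed sign pattern, I would replace calls one at a time by their $\mathsf R$ counterparts. By trace-norm contractivity, together with the diamond norm's allowance for arbitrary ancillas (private register, other runs, retained outputs), each replacement costs at most half the diamond distance between the two calls.

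\emph{Replacing a cycle.} Here
\begin{equation*}
\tfrac12\|\mathcal E_\xi-\mathcal E_{\rm av}\|_\diamond=\tfrac12\bigl\|\tfrac12(\mathcal E_+-\mathcal E_-)\bigr\|_\diamond=\tfrac12\beta_{\rm ret}
\end{equation*}
by \cref{eq:beta}. Summed over $m$ cycles this gives $m\beta_{\rm ret}/2$.

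\emph{Replacing a query.} Here $\mathcal Q_{\mathsf R,2}$ is $\mathcal U_{\rm av}^2$ followed by extraction, and the fixed-sign version is $\mathcal U_\xi^2$ followed by extraction. Writing $\mathcal U_\xi=\mathcal U_{\rm av}+\xi\,\mathcal U_{\rm odd}$ gives
\begin{equation*}
\mathcal U_\xi^2-\mathcal U_{\rm av}^2=\xi(\mathcal U_{\rm av}\mathcal U_{\rm odd}+\mathcal U_{\rm odd}\mathcal U_{\rm av})+\mathcal U_{\rm odd}^2 .
\end{equation*}
The term linear in $\xi$ has norm $O(n\theta)$, which is too large. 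So I would not condition on the sign inside queries. Queries are fresh and independently reset, so each query carries its own independent fair sign under $\mathsf F$, not the run sign. Its $\mathsf F$ channel is then exactly $\mathcal Q_{\mathsf F,2}$, and \cref{eq:cal-two-bound} bounds the replacement cost by $2n^2\theta^2$. Summing over queries gives $2N_{\rm del}n^2\theta^2$.

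The main care is the ordering of the hybrid. I would first replace the query calls, treating each fresh query as an independent channel, with the cycle runs left in their $\mathsf F$ form. The resulting intermediate protocol is a mixture over run signs of products. I would then condition on the run signs and replace the cycles. The triangle inequality combines the two stages.

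For $\mathcal A_{\rm log}$ the bound is the case $N_{\rm del}=0$. For $\mathcal A_{\rm syn}$, \cref{thm:equivalence} shows that the withheld outputs leave identical states, consistent with the bound. The remaining check is that adaptivity based on earlier outcomes is absorbed into the common CPTP maps between calls. This holds because the hybrid replaces whole channels acting on the full classical–quantum register.
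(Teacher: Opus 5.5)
Your argument is correct and essentially the paper's. Conditioning on the run signs and applying joint convexity matches the paper's device of storing each fixed sign in an inaccessible classical register and summing the replacement cost over the sign blocks, which gives $\beta_{\rm ret}/2$ per cycle, while each freshly reset query is replaced as a whole channel at cost $2n^2\theta^2$ via \cref{eq:cal-two-bound}. Your point that the query's own sign must not be conditioned on, since that would leave a first-order term, is the same reason the paper treats each query as a well-defined channel on its input.
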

\begin{proof}
For each ordinary $\mathsf F$ run, sample the fixed sign in advance and store it in an inaccessible classical register. Replace its complete QEC cycles one at a time by $\mathcal E_0\coloneqq (\mathcal E_++\mathcal E_-)/2$. Immediately before a replacement, the joint state, including the hidden registers, has the form $\sum_{\boldsymbol \xi}|\boldsymbol \xi\rangle\langle\boldsymbol \xi|\otimes\varrho_{\boldsymbol \xi}$, where $\boldsymbol\xi$ lists the stored signs, $\varrho_{\boldsymbol \xi}\succeq0$, and $\sum_{\boldsymbol \xi}\Tr\varrho_{\boldsymbol \xi}=1$. Let $\xi_j$ be the stored sign of the run containing the call being replaced. Changing the call from $\mathcal E_{\xi_j}$ to $\mathcal E_0$ changes the trace norm by at most
\begin{equation}
 \sum_{\boldsymbol \xi}\left\|
   [ (\mathcal E_{\xi_j}-\mathcal E_0)\otimes\id](\varrho_{\boldsymbol \xi})
                      \right\|_1
 \le\beta_{\rm ret}.
 \label{eq:cal-hidden-hybrid}
\end{equation}
All subsequent operations are CPTP on the enlarged state and contract trace distance, including operations controlled by a hidden sign. Thus the $m$ ordinary calls contribute at most $m\beta_{\rm ret}/2$. The proof allows correlations between the accessible system and the fixed sign accumulated during a run.

Each delayed-extraction query has a fresh source, independent of its input conditional on the model. It is therefore a well-defined channel on that input. Replacing its $\mathsf F$ channel by its $\mathsf R$ channel and applying \cref{eq:cal-two-bound} costs at most $2n^2\theta^2$ in trace distance per query. Common adaptive boundary operations and a private quantum register do not increase any replacement cost. After all replacements the accessible output is exactly the $\mathsf R$ output. The triangle inequality proves \cref{eq:hybrid}.
\end{proof}

\subsection{Joint recovery as a consequence of the decision bound}

\begin{corollary}[Joint calibration-assisted recovery]
\label{lem:joint}
If the retained calibration state $\sigma_h$ is independent of the environment prepared for the data run, conditional on $h$, every common final CPTP processor obeys \cref{eq:joint}.
\end{corollary}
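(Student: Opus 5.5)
The plan is to derive \cref{eq:joint} by combining \cref{prop:calibration-loss} with a lower bound on the conflict $\Gamma^{\mathrm{CPTP}}_{\mathsf F\mathsf R}$ of \cref{eq:general-conflict}. The independence hypothesis is exactly the assumption of \cref{prop:calibration-loss}: the input to the common processor is $\sigma_h\otimes\tau_h$. Applying that proposition with $h=0,1$ identified with $\mathsf F,\mathsf R$ gives
\[
 \max_h\{\FC{h}-\Fid{h}(\mathcal G)\}\ge\Bigl[\Gamma^{\mathrm{CPTP}}_{\mathsf F\mathsf R}-\tfrac12\Dtr(\sigma_{\mathsf F},\sigma_{\mathsf R})\Bigr]_+.
\]
It then suffices to show $\Gamma^{\mathrm{CPTP}}_{\mathsf F\mathsf R}\ge\tfrac12-\RefInf{\mathsf F}-\RefInf{\mathsf R}$. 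Substituting this bound and using monotonicity of $[\cdot]_+$ yields $[(1-\Dtr)/2-\RefInf{\mathsf F}-\RefInf{\mathsf R}]_+$, which is \cref{eq:joint}.

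For the conflict bound, I would first bound the noise-aware optima from below. The fixed corrections $I$ under $\mathsf R$ and $\Xbar$ under $\mathsf F$ give $\FC{h}\ge1-\RefInf{h}$, using the definitions in \cref{eq:defects}. It follows that $(\FC{\mathsf F}+\FC{\mathsf R})/2\ge1-(\RefInf{\mathsf F}+\RefInf{\mathsf R})/2$.

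Next I would bound the best common full-record CPTP recovery from above. By the reduction of \cref{app:regret}, each record's conditional channel is the real mixture $\mathcal L_{g_h(y)}$. Any CPTP recovery is therefore dominated, simultaneously for both models, by a randomized Pauli rule $\eta_y\in[-1,1]$, whose fidelity is given by \cref{eq:scaled-recovery-scores}. The summed fidelity is $1+\tfrac12\sum_yq_y\eta_y(g_{\mathsf F}+g_{\mathsf R})\le1+\tfrac12\mathbb E|g_{\mathsf F}+g_{\mathsf R}|$. The displayed inequality $\mathbb E|g_{\mathsf F}+g_{\mathsf R}|\le\mathbb E(1+g_{\mathsf F})+\mathbb E(1-g_{\mathsf R})=2(\RefInf{\mathsf F}+\RefInf{\mathsf R})$ then bounds the common average by $\tfrac12+\tfrac12(\RefInf{\mathsf F}+\RefInf{\mathsf R})$. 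Subtracting this from the lower bound on the noise-aware average gives $\Gamma^{\mathrm{CPTP}}_{\mathsf F\mathsf R}\ge\tfrac12-\RefInf{\mathsf F}-\RefInf{\mathsf R}$.

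The step most in need of care is the identification $\mathbb E(1+g_{\mathsf F})=2\RefInf{\mathsf F}$ and $\mathbb E(1-g_{\mathsf R})=2\RefInf{\mathsf R}$. This follows from $\sum_yq_yg_h(y)$ equalling the record-discarded mean readout of \cref{eq:readout}, which holds at any storage time, not just $T_\pi$. The remaining concern is that \cref{prop:calibration-loss} was stated with the CPTP benchmark, matching $\FC{h}$ here, so no Pauli-closure issue arises. Because $\Dtr$ is the full retained state, no measurement of the calibration output is assumed.
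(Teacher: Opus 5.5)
Your proof is correct and has the same structure as the paper's: you invoke \cref{prop:calibration-loss} and reduce to the conflict bound $\Gamma^{\mathrm{CPTP}}_{\mathsf F\mathsf R}\ge 1/2-\RefInf{\mathsf F}-\RefInf{\mathsf R}$, combining the noise-aware lower bound $\FC{h}\ge 1-\RefInf{h}$ with the upper bound $1+\RefInf{\mathsf F}+\RefInf{\mathsf R}$ on the summed fidelity of any common recovery. The only difference is in proving that upper bound. You use the Pauli domination of \cref{app:regret} and $\mathbb E|g_{\mathsf F}+g_{\mathsf R}|\le 2(\RefInf{\mathsf F}+\RefInf{\mathsf R})$, as in the proof of \cref{thm:regret}. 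The paper's proof of the corollary instead uses orthogonality of the endpoint Choi vectors of $\id$ and $\operatorname{Ad}_{\Xbar}$, together with the fact that each conditional channel lies at history-averaged half diamond distance exactly $\RefInf{h}$ from its endpoint, which gives the same bound.
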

\begin{proof}
It suffices to bound the uncalibrated conflict in \cref{eq:general-conflict}. The two fixed endpoint corrections give
\begin{equation}
 \FC{\mathsf F}+\FC{\mathsf R}
 \ge 2-\RefInf{\mathsf F}-\RefInf{\mathsf R}.
 \label{eq:cal-oracle-score}
\end{equation}
For any common full-record recovery, the scores on the ideal endpoints $\operatorname{Ad}_{\Xbar}$ and $\id$ sum to at most one, by the orthogonality of their Choi vectors. The syndrome law is common to both models, and half the history-averaged diamond distance of each actual conditional channel from its endpoint is exactly $\RefInf{h}$. Thus
\begin{equation}
 \Fid{\mathsf F}(\mathcal R)+\Fid{\mathsf R}(\mathcal R)
 \le 1+\RefInf{\mathsf F}+\RefInf{\mathsf R}.
 \label{eq:cal-common-score}
\end{equation}
Subtracting these bounds and dividing by two gives $\Gamma_{\mathsf F\mathsf R}^{\mathrm{CPTP}} \ge 1/2-\RefInf{\mathsf F}-\RefInf{\mathsf R}$. Applying \cref{prop:calibration-loss} proves \cref{eq:joint}, including joint processors that retain quantum calibration output.
\end{proof}

\subsection{Achievability and completion of the cost theorem}

No weight-one $X$ error has zero syndrome: an $X$ stabilizer of that weight is excluded by parity, and a logical $X$ of that weight is excluded by $d_X\ge3$. If two distinct weight-one errors have the same syndrome, their product is a weight-two $X$ stabilizer and acts as $+1$ on the specified code sector. The probability of a nonzero syndrome after the rotation $U_+(x)$ acts on the code space is therefore an even analytic function
\begin{equation}
 p_{\C}(x)=\kappa_{\C}x^2+O_{\C}(x^4),\qquad
 \kappa_{\C}=\sum_{s\ne0}(n_s^{(1)})^2>0.
 \label{eq:cal-event-onset}
\end{equation}
Commutativity of the physical rotations proves \cref{eq:delayed-mixture}, hence $p_{\rm nz}=p_{\C}(2\theta)$ and \cref{eq:delayed-rate}.

For Bernoulli probabilities $p_{\rm nz}$ and $p_{\rm nz}/2$, the single-query Bhattacharyya coefficient is
\begin{equation}
 \begin{split}
 \operatorname{BC}
 &=\frac{p_{\rm nz}}{\sqrt2}
       +\sqrt{(1-p_{\rm nz})(1-p_{\rm nz}/2)}\\
 &=1-\left(\frac34-\frac1{\sqrt2}\right)p_{\rm nz}
       +O(p_{\rm nz}^2).
 \end{split}
 \label{eq:cal-binary-overlap}
\end{equation}
For $N$ independent queries, the sum of the two conditional errors of the equal-prior likelihood-ratio test is the distributional overlap, which is at most $\operatorname{BC}^N$. Thus each conditional error probability $e_h$, the probability that the test selects the other model when $h$ holds, is at most $\exp[-c_{\C}N\theta^2]$ for some fixed $c_{\C}>0$ and all sufficiently small $\theta$. Applying the fixed correction of the selected model in the independent data run then gives a loss of at most $\max_h\RefInf{h}+\max_h e_h$. Taking $e_h\le\epsilon/2$ proves the upper bound for $\mathcal A_{\rm del}$. The logical experiment using complete QEC cycles has $e_h=\RefInf{h}$, proving its upper bound with cost $T_\pi$.

Conversely, \cref{eq:joint} requires
\begin{equation}
 \Dtr(\sigma_{\mathsf F},\sigma_{\mathsf R})
       \ge1-2\epsilon-2\RefInf{\mathsf F}-2\RefInf{\mathsf R}
 \label{eq:cal-required-distance}
\end{equation}
for any procedure attaining the target. Since $\RefInf{h}=O_{\C}(\theta)$, the right-hand side is bounded below by a positive constant for small enough $\theta$. \Cref{eq:hybrid} and $\beta_{\rm ret}=\Theta_{\C}(\theta^{d_X})$ then give the lower bounds in \cref{eq:costs}. For syndrome-only calibration the distance is exactly zero at every budget, which proves the infinite-cost assertion at every sufficiently small angle.

\subsection{Explicit calibration prefactors}\label{app:cost-prefactors}

For calibration using complete QEC cycles, each cycle costs one physical noise interval, so its interval budget $M$ is also its cycle count. The data run keeps the identity schedule at $T_\pi$, and its environment is prepared independently of calibration given $h\in\{\mathsf F,\mathsf R\}$. Combining the full-output hybrid bound \eqref{eq:hybrid} with the joint-processor bound \eqref{eq:joint} gives
\begin{equation}
 \max_h[\FC{h}-\Fid{h}(\mathcal G)]
 \ge\left[\frac12-\frac{M\beta_{\rm ret}}4
              -\RefInf{\mathsf F}-\RefInf{\mathsf R}\right]_+.
 \label{eq:retained-cycle-loss}
\end{equation}
The fixed-correction infidelities are $\RefInf{\mathsf R}=(1-A^{T_\pi})/2$ and $\RefInf{\mathsf F}=(1+\operatorname{Re}[(A-iB)^{T_\pi}])/2$. They upper-bound the optimized noise-aware infidelities. For fixed $0<\epsilon<1/2$, every allowed protocol attaining expected fidelity loss at most $\epsilon$ must therefore satisfy
\begin{equation}
 M^*_{\mathcal A_{\rm log},\epsilon}
 \ge\frac{2}{\beta_{\rm ret}}
      [1-2\epsilon-2\RefInf{\mathsf F}-2\RefInf{\mathsf R}]_+.
 \label{eq:retained-prefactor-lower}
\end{equation}
This lower bound includes adaptive logical boundary controls, retained quantum calibration memory, and joint final processing with the data, under the prescribed fixed-budget and source-reset conventions.

For the upper bound, prepare a $\Zbar=+1$ state, run $T_\pi$ complete QEC cycles, and measure $\Zbar$. Assign outcome $+1$ to $\mathsf R$ and $-1$ to $\mathsf F$. The conditional identification errors are exactly $\RefInf{\mathsf R}$ and $\RefInf{\mathsf F}$. After the independent data run, apply $I$ for decision $\mathsf R$ and $\Xbar$ for decision $\mathsf F$. A correct decision has fixed-correction infidelity $\RefInf{h}$, and a wrong decision has infidelity at most one. The expected fidelity loss is therefore at most $2\RefInf{h}$ under either model. Since $\RefInf{h}=O_{\C}(\theta)$, one calibration run suffices whenever $2\max_h\RefInf{h}\le\epsilon$. For all sufficiently small $\theta$,
\begin{equation}
 M^*_{\mathcal A_{\rm log},\epsilon}\le T_\pi.
 \label{eq:retained-prefactor-upper}
\end{equation}

Finally, $A\to1$, $B\to0$, and the definition of $T_\pi$ give $T_\pi|B|\to\pi$. Equations \eqref{eq:retained-leading-ratio}, \eqref{eq:retained-prefactor-lower}, and \eqref{eq:retained-prefactor-upper} imply
\begin{align}
 \frac{2(1-2\epsilon)}{\pi}\frac{\binom{2r}{r}}{4^r}
 &\le\liminf_{\theta\to0^+}
       \frac{M^*_{\mathcal A_{\rm log},\epsilon}}{T_\pi}\notag\\
 &\le\limsup_{\theta\to0^+}
       \frac{M^*_{\mathcal A_{\rm log},\epsilon}}{T_\pi}
 \le1.
 \label{eq:retained-prefactor-bracket}
\end{align}
For $d_X=3$, the lower constant is $(1-2\epsilon)/\pi$. Equivalently,
\begin{equation}
 \left[\frac{1-2\epsilon}{\pi}-O_{\C,\epsilon}(\theta)\right]T_\pi
 \le M^*_{\mathcal A_{\rm log},\epsilon}\le T_\pi.
 \label{eq:retained-distance-three-bracket}
\end{equation}
The optimal prefactor lies in this explicit interval. All limits fix the code, the lookup table, and the target fidelity loss. Their finite-angle remainder constants can depend on the code.

\section{Calibration under a maximum extraction delay}
\label{app:cadence}

We prove \cref{thm:delay-cost} uniformly in the delay cap $\ell$. A protocol has $m$ complete QEC cycles and independent delayed-extraction queries of prescribed lengths $\lambda_1,\ldots,\lambda_N\in\{2,\ldots,\ell\}$, with total budget $M=m+\sum_j\lambda_j$.

\subsection{Uniform complete-output lower bound}

Consider a length-$\lambda$ query with physical channels $\mathcal U_\xi(x)\coloneqq \operatorname{Ad}_{e^{-i\xi xH}}$, where $H=\sum_jX_j$. Freeze all controls at their chosen design values and vary only an auxiliary amplitude $x$. With common CPTP maps $\mathcal B_j$ allowed between intervals, the channel conditional on the sign string is
\begin{equation}
 \mathcal V_{\boldsymbol \xi}(x)
 =\mathcal B_\lambda\circ\mathcal U_{\xi_\lambda}(x)\circ\cdots\circ
        \mathcal B_1\circ\mathcal U_{\xi_1}(x)\circ\mathcal B_0.
 \label{eq:delay-conditional-channel}
\end{equation}
The intermediate maps $\mathcal B_j$, $1\le j<\lambda$, act on the physical register together with all retained classical records and private quantum registers. Each noise channel acts as the identity on these ancillary registers. The boundary maps may encode the logical input and perform the final extraction and correction. Thus the enlarged query channel retains every allowed output register. Averaging $\mathcal V_{\boldsymbol \xi}(\theta)$ over the sign law of model $h$ gives the query channel $\mathcal Q_{h,\lambda}(\theta)$, which for $\lambda=2$ and no intermediate controls is \cref{eq:delayed-mixture}. Derivative bounds on each noise channel are $\|\mathcal U_\xi'(x)\|_\diamond\le2n$ and $\|\mathcal U_\xi''(x)\|_\diamond\le4n^2$. The product rule gives
\begin{equation}
 \|\mathcal V_{\boldsymbol \xi}''(x)\|_\diamond\le4n^2\lambda^2.
 \label{eq:delay-second-derivative}
\end{equation}
At $x=0$ all sign strings give the same channel, and the first derivative is a sum of common maps multiplied by the individual signs. Both models have zero mean for each sign, so their averaged channels agree to first order. Taylor's integral formula, together with the trivial bound of one, yields
\begin{equation}
 \begin{split}
 &\tfrac12\|\mathcal Q_{\mathsf F,\lambda}(\theta)
                  -\mathcal Q_{\mathsf R,\lambda}(\theta)\|_\diamond\\
 &\quad\le \delta_\lambda(\theta)\coloneqq \min\{1,2n^2\lambda^2\theta^2\},\quad \lambda\ge2.
 \end{split}
 \label{eq:delay-query-bound}
\end{equation}
At $\lambda=1$ the two channels agree exactly. The bound does not assume $\lambda\theta\ll1$, and permitting intermediate controls only enlarges the class against which the lower bound holds.

Replacing ordinary calls as in \cref{lem:hybrid}, and each fresh delayed-extraction query as a whole channel, shows that the complete calibration output satisfies
\begin{equation}
 \Dtr(\sigma_{\mathsf F},\sigma_{\mathsf R})
 \le\min\left\{1,\frac{m\beta_{\rm ret}}2
                         +\sum_j \delta_{\lambda_j}(\theta)\right\}.
 \label{eq:delay-hybrid}
\end{equation}
Fresh source resets justify the channel replacement even when the input is entangled with a quantum register shared with other runs. For $2\le \lambda\le\ell$,
\begin{align}
 \delta_\lambda/\lambda&\le2n^2\lambda\theta^2\le4n^2(\ell-1)\theta^2,\notag\\
 \delta_\lambda/\lambda&\le\min\{1/\lambda,2n^2\lambda\theta^2\}\le\sqrt2 n\theta.
\end{align}
Writing $\varsigma_\ell(\theta)\coloneqq \min\{(\ell-1)\theta^2,\theta\}$ gives
\begin{equation}
 \Dtr(\sigma_{\mathsf F},\sigma_{\mathsf R})
 \le M\max\{\beta_{\rm ret}/2,4n^2\varsigma_\ell(\theta)\}.
 \label{eq:delay-budget-bound}
\end{equation}
The delayed term is absent at $\ell=1$. Combining this with \cref{eq:cal-required-distance} and $\beta_{\rm ret}=O_{\C}(\theta^{d_X})$ proves the lower bound in \cref{eq:delay-cost}, with constants and an angle threshold independent of $\ell$.

\subsection{A uniform syndrome test}

For the allowed delayed-extraction queries, commutativity gives the exact nonzero-syndrome probabilities
\begin{align}
 p_{\mathsf F,\lambda}&=p_{\C}(\lambda\theta),\notag\\
 p_{\mathsf R,\lambda}&=\mathbb E p_{\C}(S_\lambda\theta)
 =2^{-\lambda}\sum_{j=0}^\lambda\binom{\lambda}{j}
                       p_{\C}((\lambda-2j)\theta),
 \label{eq:delay-exact-probabilities}
\end{align}
where $S_\lambda$ is a sum of independent fair signs. Choose a fixed $x_{\C}>0$ and $C_{\C}<\infty$ so that
\begin{equation}
 |p_{\C}(x)-\kappa_{\C}x^2|
 \le C_{\C}x^4\le\tfrac18\kappa_{\C}x^2
       \quad (|x|\le x_{\C}).
 \label{eq:delay-local-remainder}
\end{equation}
For $2\le \lambda\le x_{\C}/\theta$, the moments $\mathbb E S_\lambda^2=\lambda$ and $\mathbb E S_\lambda^4=3\lambda^2-2\lambda$ give
\begin{align}
 |p_{\mathsf F,\lambda}-\kappa_{\C}\lambda^2\theta^2|
   &\le C_{\C}\lambda^4\theta^4,\notag\\
 |p_{\mathsf R,\lambda}-\kappa_{\C}\lambda\theta^2|
   &\le C_{\C}(3\lambda^2-2\lambda)\theta^4.
 \label{eq:delay-remainders}
\end{align}
The relative form of \cref{eq:delay-local-remainder} gives uniformly
\begin{equation}
 p_{\mathsf F,\lambda}\ge\tfrac78\kappa_{\C}\lambda^2\theta^2,
 \qquad \frac{p_{\mathsf R,\lambda}}{p_{\mathsf F,\lambda}}
          \le\frac{9}{7\lambda}\le\frac9{14}.
 \label{eq:delay-contrast}
\end{equation}
The Bhattacharyya coefficient of the two Bernoulli distributions obeys
\begin{align}
 1-\operatorname{BC}_\lambda
 &\ge\tfrac12(\sqrt{p_{\mathsf F,\lambda}}-
                     \sqrt{p_{\mathsf R,\lambda}})^2\notag\\
 &\ge c_{\C}^{\rm test}\lambda^2\theta^2,\qquad
 c_{\C}^{\rm test}
   \coloneqq \frac{7\kappa_{\C}}{16}(1-\sqrt{9/14})^2>0.
 \label{eq:delay-test-rate}
\end{align}
After $N$ independent queries, each conditional error of the likelihood-ratio test is at most $\exp[-c_{\C}^{\rm test}N\lambda^2\theta^2]$. Thus a fixed error $\delta_{\rm err}>0$ requires at most
\begin{equation}
 N=\left\lceil\frac{\log(1/\delta_{\rm err})}
                   {c_{\C}^{\rm test}\lambda^2\theta^2}\right\rceil,
 \qquad N\lambda=O_{\C,\delta_{\rm err}}((\lambda\theta^2)^{-1}).
 \label{eq:delay-test-cost}
\end{equation}
The ceiling contributes at most $\lambda\le x_{\C}^2/(\lambda\theta^2)$.

For $\ell\ge2$, choose $\lambda=\min\{\ell,\lfloor x_{\C}/\theta\rfloor\}$ and $\theta\le x_{\C}/4$. This gives cost $O_{\C,\delta_{\rm err}}(\varsigma_\ell(\theta)^{-1})$, uniformly in $\ell$. Take $\delta_{\rm err}=\epsilon/2$ and decrease the angle threshold so that $\max_h\RefInf{h}\le\epsilon/2$. The induced recovery loss is then at most $\epsilon$. Since $\varsigma_\ell(\theta)\ge\theta^2\ge \theta^{d_X}$ for $\ell\ge2$ and small $\theta$, this upper bound has the form in \cref{eq:delay-cost}. At $\ell=1$, the logical endpoint experiment gives cost $T_\pi=O_{\C}(\theta^{-d_X})$. This completes the uniform proof.
\section{Ideal control sequences}\label{app:protection-comparison}
This appendix compares the original memory experiment with ideal physical randomization, logical Pauli twirling over $\{I,\Zbar\}$, and the midpoint echo sequence of logical dynamical decoupling. We record the one-cycle weights and randomized fidelity, then bound the conditional logical phases and derive the leading infidelities of these circuits. The final subsection proves the exact optimum within the logical-Pauli control class for the generalized Shor code.

For the $\mathsf F/\mathsf R$ pair, independent physical Pauli twirling gives independent physical $X$ errors with probability $\sin^2\theta$ per qubit and interval. Let $p_{\rm tw}$ be the one-cycle logical flip probability after the fixed lookup-table correction, and put $A_{\rm tw}\coloneqq1-2p_{\rm tw}$. The final identity correction achieves
\begin{equation}
 F_{\rm tw}(T)=\frac{1+A_{\rm tw}^T}{2}.
 \label{eq:twirl-fidelity}
\end{equation}
Under the code and minimum-weight-table assumptions of \cref{prop:protection-comparison}, the leading logical-flip contributions come from errors of weight exactly $r+1$. Their coherent amplitudes add before squaring, whereas physical randomization adds probabilities. With the counts $m_s$ defined after \cref{eq:retained-orders-main},
\begin{align}
 \frac{1-A}{2}&=\sum_s m_s^2\theta^{d_X+1}
                         +O_{\C}(\theta^{d_X+3}),\notag\\
 p_{\rm tw}&=\sum_s m_s\theta^{d_X+1}
                         +O_{\C}(\theta^{d_X+3}).
 \label{eq:twirl-leading}
\end{align}

\subsection{Conditional phases and moments}

\begin{lemma}[Small conditional phases]
\label{lem:minimum-weight-small-phase}
Assume \cref{def:family} and a minimum-weight lookup table in the sense of \cref{sec:regret}. Let $m_s$ be the counts defined in \cref{sec:visibility}, so that $\sum_sm_s^2>0$. For all sufficiently small $\theta>0$, the conditional rotation angles have representatives $\phi_s\in(-\pi,\pi)$ satisfying, for some constant $C'_{\C}<\infty$,
\begin{equation}
 \max_s|\phi_s|\le C'_{\C}\theta.
 \label{eq:small-phase-bound}
\end{equation}
Let $\phi_s$ take its value with probability $q_s=u_s^2+v_s^2$, and write $\mu_\theta\coloneqq \mathbb E\phi_s$ and $\sigma_\theta^2\coloneqq \operatorname{Var}(\phi_s)$. Then
\begin{align}
 \mu_\theta&=B+O_{\C}(\theta^{d_X+2}),\notag\\
 \mathbb E\phi_s^2
 &=4\sum_s m_s^2\theta^{d_X+1}+O_{\C}(\theta^{d_X+3}),\notag\\
 \sigma_\theta^2
 &=4\sum_s m_s^2\theta^{d_X+1}+O_{\C}(\theta^{d_X+3}),\notag\\
 \mathbb E|\phi_s-\mu_\theta|^3
 &=O_{\C}(\theta^{d_X+2}).
 \label{eq:small-phase-moments}
\end{align}
The physical representative chosen for a correction may be any member of the specified minimum-weight logical class.
\end{lemma}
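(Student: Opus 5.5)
The plan is to work syndrome by syndrome with the coset amplitudes $h_{s,0},h_{s,1}$ of \cref{eq:retained-coset-amplitudes}, using the expansion \cref{eq:retained-coset-leading}. With $w_{s,b}$ the minimum weights, the branch angle satisfies $\tan(\phi_s/2)=v_s/u_s=\pm|h_{s,1}|/|h_{s,0}|$. So $|\phi_s|$ is controlled by $\theta^{w_{s,1}-w_{s,0}}$ whenever $w_{s,1}>w_{s,0}$.

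\textbf{Step 1: the phase bound.} The minimum-weight assumption gives $w_{s,0}\le w_{s,1}$ for every syndrome. The two cosets have opposite parity, so in fact $w_{s,1}\ge w_{s,0}+1$. By \cref{eq:retained-coset-leading}, $|h_{s,1}|/|h_{s,0}|=(c_{s,1}/c_{s,0})\theta^{w_{s,1}-w_{s,0}}(1+O_{\C}(\theta^2))$. Hence $u_s\neq0$, and the principal representative $\phi_s=2\arctan(v_s/u_s)$ has magnitude $O_{\C}(\theta)$. The code is fixed, so there are finitely many syndromes, and taking the maximum over $s$ gives \cref{eq:small-phase-bound}.

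\textbf{Step 2: the first moment.} I write $\phi_s=2v_s/u_s+O(|v_s/u_s|^3)$. Then
\[
q_s\phi_s=2u_sv_s\,(1+v_s^2/u_s^2)+O(q_s|v_s/u_s|^3).
\]
Summing $2u_sv_s$ over $s$ gives $B$. It remains to show that the correction terms $q_s|v_s/u_s|^3$ and $u_sv_s\,(v_s/u_s)^2$ are $O_{\C}(\theta^{d_X+2})$. For this I use $q_s\sim\theta^{2w_{s,0}}$ and $|v_s/u_s|\sim\theta^{w_{s,1}-w_{s,0}}$. The correction then has order $\theta^{2w_{s,0}+3(w_{s,1}-w_{s,0})}=\theta^{3w_{s,1}-w_{s,0}}$. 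I bound this exponent using $w_{s,0}+w_{s,1}\ge d_X$ and $w_{s,1}\ge w_{s,0}+1$:
\[
3w_{s,1}-w_{s,0}=(w_{s,0}+w_{s,1})+2(w_{s,1}-w_{s,0})\ge d_X+2 .
\]

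\textbf{Step 3: the second moment.} Here $q_s\phi_s^2=4v_s^2\,(1+O(v_s^2/u_s^2))$, and $v_s^2=|h_{s,1}|^2\sim c_{s,1}^2\theta^{2w_{s,1}}$. The leading order $\theta^{d_X+1}=\theta^{2(r+1)}$ arises exactly when $w_{s,1}=r+1$. For a minimum-weight table, $c_{s,1}$ is then the number of weight-$(r+1)$ errors with syndrome $s$ corrected to $\Xbar$, which is $m_s$.

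The case $w_{s,1}\le r$ cannot occur. Such an error would have weight at most $r$ and be corrected to $\Xbar$, contradicting \cref{def:family}. All other syndromes have $w_{s,1}\ge r+2$ by parity, or else $c_{s,1}$ counts only higher weights. These contribute $O(\theta^{d_X+3})$, and this gives the second line of \cref{eq:small-phase-moments}. Since $\mu_\theta^2=O(\theta^{2d_X})$ is negligible, the variance statement follows.

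\textbf{Step 4: the third absolute moment.} I bound it by $8\,\mathbb E|\phi_s|^3+8|\mu_\theta|^3$. Then
\[
\mathbb E|\phi_s|^3\le\max_s|\phi_s|\cdot\mathbb E\phi_s^2=O(\theta)\,O(\theta^{d_X+1})=O_{\C}(\theta^{d_X+2}).
\]
Finally, choosing another minimum-weight representative for a correction changes each coset only by multiplication with an $X$ stabilizer, which acts as $+1$ on the code space, so nothing changes. The main obstacle is the bookkeeping in Step 2. One must check that no syndrome with $w_{s,0}+w_{s,1}=d_X$ produces a cubic correction below order $d_X+2$, and the exponent inequality above is exactly what handles this.
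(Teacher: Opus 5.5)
Your proof is correct and follows the paper's argument essentially step by step: coset leading orders, $w_{s,0}<w_{s,1}$ from the minimum-weight hypothesis plus parity, the representative $\phi_s=2\arctan(v_s/u_s)$, the expansion $q_s\phi_s^2=4v_s^2[1+O_{\C}(\theta^2)]$ with $c_{s,1}=m_s$ exactly when $w_{s,1}=r+1$, and the bound $\mathbb E|\phi_s|^3\le\max_s|\phi_s|\,\mathbb E\phi_s^2$. The only variation is the mean, where the paper uses the exact identity $\sum_s q_s\sin\phi_s=2\sum_s u_sv_s=B$ and $|\phi-\sin\phi|\le|\phi|^3/6$ to get $|\mu_\theta-B|\le\tfrac16\max_s|\phi_s|\,\mathbb E\phi_s^2$ from the other two bounds, instead of your per-syndrome count $3w_{s,1}-w_{s,0}\ge d_X+2$ (both are valid), and you omit the paper's one-line check of the asserted $\sum_s m_s^2>0$ (split a minimum-weight logical support into $r$ and $r+1$ qubits), though none of the asymptotic formulas depends on it.
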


\begin{proof}
For syndrome $s$, let $w_{s,0}$ and $w_{s,1}$ be the minimum physical weights in the classes corrected to $I$ and $\Xbar$, and let $c_{s,0},c_{s,1}$ count their minimum-weight representatives. The two classes have opposite weight parity. The hypothesis that the identity class contains a minimum-weight error therefore gives the strict inequality $w_{s,0}<w_{s,1}$. On the code space, each positive $X$ stabilizer acts as $+1$, so the amplitudes of equal-weight representatives add with the same phase. After a common branch phase is removed, their leading magnitudes are
\begin{equation}
 \begin{split}
 |u_s|&=c_{s,0}\theta^{w_{s,0}}
                  [1+O_{\C}(\theta^2)],\\
 |v_s|&=c_{s,1}\theta^{w_{s,1}}
                  [1+O_{\C}(\theta^2)].
 \end{split}
 \label{eq:small-phase-coset-orders}
\end{equation}
Thus $|v_s/u_s|=O_{\C}(\theta)$ for every syndrome. There are finitely many syndromes for the fixed code, so the bound is uniform. Choose $\phi_s=2\arctan(v_s/u_s)$, which differs from the original conditional angle by an integer multiple of $2\pi$. This proves \cref{eq:small-phase-bound} and preserves all conditional channels.

Because the table corrects every $X$ error of weight at most $r$, the logical class contains no error of weight at most $r$. Consequently $w_{s,1}\ge r+1$, and
\begin{equation}
 m_s=c_{s,1}\,\mathbf1_{\{w_{s,1}=r+1\}}.
 \label{eq:small-phase-count-convention}
\end{equation}
In particular, minimum-weight multiplicities with $w_{s,1}>r+1$ do not enter $\sum_s m_s^2$. The coherent expansion in \cref{eq:twirl-leading} gives
\begin{equation}
 \sum_s v_s^2=\sum_s m_s^2\theta^{d_X+1}
                        +O_{\C}(\theta^{d_X+3}).
 \label{eq:small-phase-flip-weight}
\end{equation}
The opposite coset parities exclude a contribution at degree $d_X+2$. Partitioning a minimum logical support into weights $r$ and $r+1$ shows that at least one $m_s$ is positive.

Set $x_s\coloneqq v_s/u_s$. Uniformly over the syndromes,
\begin{equation}
 \begin{aligned}
 q_s\phi_s^2&=4u_s^2(1+x_s^2)(\arctan x_s)^2\\
            &=4v_s^2[1+O_{\C}(\theta^2)].
\end{aligned}
 \label{eq:small-phase-second-moment}
\end{equation}
Summing proves the second-moment formula. Moreover, $\sum_sq_s\sin\phi_s=2\sum_su_sv_s=B$, and hence
\begin{equation}
 |\mu_\theta-B|
 \le\frac16\mathbb E|\phi_s|^3
 \le\frac16\max_s|\phi_s|\,\mathbb E\phi_s^2
 =O_{\C}(\theta^{d_X+2}).
 \label{eq:small-phase-mean-bound}
\end{equation}
Since $\mu_\theta=O_{\C}(\theta^{d_X})$ and $d_X\ge3$, $\mu_\theta^2=O_{\C}(\theta^{d_X+3})$, which gives the variance formula. Finally,
\begin{equation}
 \mathbb E|\phi_s-\mu_\theta|^3
 \le4\mathbb E|\phi_s|^3+4|\mu_\theta|^3
 =O_{\C}(\theta^{d_X+2}).
\end{equation}
\end{proof}

\subsection{Leading infidelities of the modified QEC circuits}

\begin{proof}[Proof of \cref{prop:protection-comparison}]
All comparisons use the same original storage time $T=T_\pi$ and the same fixed code. By \cref{eq:orders,eq:twirl-leading},
\begin{align}
 A&=1-2\sum_s m_s^2\theta^{d_X+1}+O_{\C}(\theta^{d_X+3}),\notag\\
 B&=b_{\C}\theta^{d_X}+O_{\C}(\theta^{d_X+2}),\notag\\
 T_\pi&=\frac{\pi}{|b_{\C}|\theta^{d_X}}
                   [1+O_{\C}(\theta^2)]+O(1).
 \label{eq:protection-endpoint-orders}
\end{align}

\paragraph{Full-record CPTP optima in the original experiment.} For the conditional $I/\Xbar$ mixtures considered here, \cref{eq:scaled-recovery-scores} gives the optimum over final CPTP recoveries, including ancillas. Applying it to each syndrome history yields
\begin{equation}
 \FC{h}(T)=\frac{1+\mathbb E_y|g_h(y)|}{2}.
 \label{eq:protection-full-record-optimum}
\end{equation}

For sufficiently small $\theta$, \cref{eq:small-phase-bound} implies $\cos\phi_s>0$ for every syndrome. Thus $g_{\mathsf R}(y)=\prod_j\cos\phi_{s_j}>0$ for every history, and
\begin{equation}
 \FC{\mathsf R}(T)=\frac{1+A^T}{2}
 \label{eq:protection-resampled-oracle}
\end{equation}
at every integer storage time. In particular, the final identity correction attains the full-record CPTP optimum under $\mathsf R$.

For $\mathsf F$ and a record $y=(s_1,\ldots,s_T)$ of length $T$, use the accumulated phase $\phi_{\rm tot}(y)=\sum_{j=1}^T\phi_{s_j}$. The syndrome law is the product $\prod_jq_{s_j}$, so these increments are independent even though the physical sign is fixed. The small-phase moment estimates and the definition of $T_\pi$ give
\begin{equation}
 \begin{split}
 T_\pi\mu_\theta
     &=\operatorname{sgn}(B)\pi+O_{\C}(\theta^2),\\
 \Sigma_\theta^2\coloneqq T_\pi\sigma_\theta^2&=O_{\C}(\theta).
 \end{split}
 \label{eq:protection-phase-concentration}
\end{equation}
For the following estimates, $y$ has length $T_\pi$. The centered increments are bounded by $L_\theta=O_{\C}(\theta)$. Bernstein's inequality gives, for fixed $x>0$,
\begin{equation}
 \Pr\{|\phi_{\rm tot}(y)-T_\pi\mu_\theta|\ge x\}
 \le2\exp\!\left[-\frac{x^2}{2(\Sigma_\theta^2+L_\theta x/3)}\right].
 \label{eq:protection-bernstein}
\end{equation}
For sufficiently small $\theta$, the event $\cos\phi_{\rm tot}(y)>0$ requires a deviation of at least $\pi/4$ from this mean. Hence, for some constant $c'_{\C}>0$,
\begin{equation}
 \begin{split}
 0\le\FC{\mathsf F}(T_\pi)
       -(1-\RefInf{\mathsf F})
 &=\mathbb E[(\cos\phi_{\rm tot}(y))_+]\\
 &\le2e^{-c'_{\C}/\theta}.
 \end{split}
 \label{eq:protection-frozen-oracle-gain}
\end{equation}
Thus the improvement of the full-record noise-aware optimum over fixed final $\Xbar$ is smaller than every power of $\theta$.

The endpoint expansion gives
\begin{equation}
 \begin{split}
 \frac{1-A^{T_\pi}}2
   &=\frac{\pi \sum_s m_s^2}{|b_{\C}|}\theta+O_{\C}(\theta^2),\\
 \frac{1+\operatorname{Re}[(A-iB)^{T_\pi}]}2
   &=\frac{\pi \sum_s m_s^2}{|b_{\C}|}\theta+O_{\C}(\theta^2).
 \end{split}
 \label{eq:protection-fixed-endpoint-losses}
\end{equation}
For the second line, $\lvert A+iB\rvert^2=A^2+B^2$ and $d_X\ge3$ imply $\lvert A+iB\rvert=1-2\sum_s m_s^2\theta^{d_X+1}+O_{\C}(\theta^{d_X+3})$. Rounding $T_\pi$ changes its accumulated phase from $\pi$ by $O_{\C}(\theta^{d_X})$. Combining these estimates proves
\begin{equation}
 1-\FC{h}(T_\pi)
 =\frac{\pi \sum_s m_s^2}{|b_{\C}|}\theta+o(\theta),
 \qquad h\in\{\mathsf F,\mathsf R\}.
 \label{eq:protection-oracle-leading}
\end{equation}

\paragraph{Active physical Pauli randomization.} By \cref{eq:twirl-fidelity,eq:twirl-leading}, the common final identity correction achieves
\begin{equation}
 1-F_{\rm tw}(T_\pi)
 =\frac{1-A_{\rm tw}^{T_\pi}}2
 =\frac{\pi \sum_s m_s}{|b_{\C}|}\theta+o(\theta).
 \label{eq:protection-physical-twirl-leading}
\end{equation}
Since each $m_s$ is a nonnegative integer, $\sum_sm_s\le\sum_sm_s^2$, with strict inequality exactly when some $m_s\ge2$. This compares the achieved infidelity of the randomized data circuit with the minimum infidelity attainable in the original circuit when the model and full syndrome record are available.

For completeness, the final identity correction is also optimal for the full syndrome record of this averaged randomized experiment at sufficiently small $\theta$. In each syndrome, the probabilities of its two corrected logical classes have leading orders $\theta^{2w_{s,0}}$ and $\theta^{2w_{s,1}}$. Since $w_{s,0}<w_{s,1}$, the conditional identity probability exceeds the logical-flip probability. Every composed conditional multiplier is consequently positive.

\paragraph{Logical Pauli twirling and a midpoint echo.} Independent uniform choices of conjugation by $I$ or $Z^{\otimes n}$ implement logical Pauli twirling over $\{I,\Zbar\}$ and change the sign in interval $j$ to $\eta_j\xi_j$, where $\eta_j=-1$ if $Z^{\otimes n}$ is applied and $\eta_j=+1$ otherwise. Averaging over the fresh independent $\eta_j$ gives the resampled sign law under either source model, including the syndrome output. After the known frame is compensated, the final identity correction has fidelity
\begin{equation}
 \Fid{\mathrm{log},h}(T)=\frac{1+A^T}{2}=\FC{\mathsf R}(T),
 \qquad h\in\{\mathsf F,\mathsf R\},
 \label{eq:protection-logical-average}
\end{equation}
for sufficiently small $\theta$. The protocol uses the record of applied Paulis for frame compensation and then discards it. The formula describes this averaged experiment. When the record is retained, the conditional channels need not equal the averaged one.

A midpoint logical $\Zbar$ pulse reverses the sign of the subsequent logical rotations. Compensating its final frame gives the mean readout under $\mathsf F$
\begin{equation}
 \langle\Zbar\rangle_{{\rm echo},\mathsf F}(T)
 =\begin{cases}
   \lvert A+iB\rvert^T,&T\text{ even},\\
   A\lvert A+iB\rvert^{T-1},&T\text{ odd},
  \end{cases}
 \label{eq:protection-echo-multiplier}
\end{equation}
where the two halves differ by one round in the odd case. Under $\mathsf R$ the multiplier remains $A^T$. For sufficiently small $\theta$, $0<A\le\lvert A+iB\rvert\le1$ and
\begin{equation}
 0\le\lvert A+iB\rvert-A=\frac{B^2}{\lvert A+iB\rvert+A}=O_{\C}(\theta^{2d_X}).
\end{equation}
The difference of two integer powers on $[0,1]$ is bounded by the exponent times the difference of their bases. Thus at $T=T_\pi$,
\begin{equation}
 \begin{aligned}
 0\le \langle\Zbar\rangle_{{\rm echo},\mathsf F}(T_\pi)-A^{T_\pi}
 &\le T_\pi(\lvert A+iB\rvert-A)\\
 &=O_{\C}(\theta^{d_X})=o(\theta).
 \end{aligned}
 \label{eq:protection-echo-error}
\end{equation}
The common final identity correction after either logical Pauli twirling or the midpoint echo therefore attains
\begin{equation}
 1-\Fid{\mathrm{control},h}(T_\pi)
 =\frac{\pi \sum_s m_s^2}{|b_{\C}|}\theta+o(\theta)
 \label{eq:protection-control-leading}
\end{equation}
under both models. All comparisons fix the code and use ideal additional controls at the same original storage time.
\end{proof}

\paragraph{Other $X$ correction tables.} For a table that chooses the opposite logical class in some syndromes, let $f(s)\in\{0,1\}$ record its difference from a table selecting the minimum-weight class. Its corrected branch is $\Xbar^{f(s)}K_s(\xi)$ up to a scalar phase. All the single-axis branches commute with $\Xbar$, so the final correction $\Xbar^{\sum_j f(s_j)}$ exactly removes the difference over a recorded history. The compensated echo and randomization protocols above preserve this identity: their logical-$Z$ pulses introduce only scalar commutation phases, and their effective noise branches remain single-axis. The same comparison is therefore achieved for another table by including this known final parity correction and using the counts of the minimum-weight logical classes. The raw weight-$(r+1)$ miscorrection counts of an arbitrary table can give a different coefficient when the final correction is fixed to identity.
\subsection{Exact midpoint echo for generalized Shor codes}
\label{app:logical-echo}

The midpoint echo is the specific logical dynamical-decoupling sequence considered here. Its exact optimality statement concerns the generalized Shor code of odd distance $d_X=2r+1$, with $d_X$ rows of $d_X$ qubits, $Z$ checks within rows, and adjacent products of full-row $X$ operators as $X$ checks. Use rowwise minimum-weight $X$ correction and the same common-sign physical rotations as in the main example. The allowed additional data controls are ideal, possibly syndrome-adaptive logical Pauli gates on $\mathcal H_L$ between complete QEC cycles, with compensation of their known final frame. The objective is worst-case absolute entanglement fidelity over $\mathsf F/\mathsf R$ and their mixtures, with a full-record final Pauli recovery.

For a row syndrome whose minimum correction has weight $w\le r$, the two corrected amplitudes, after removal of a common phase, are
\begin{equation}
 \begin{split}
 u_w&=(\cos\theta)^{d_X-w}(\sin\theta)^w,\\
 v_w&=(-1)^{r-w}(\cos\theta)^w(\sin\theta)^{d_X-w}.
 \end{split}
 \label{eq:echo-row-amplitudes}
\end{equation}
Its conditional phase is $\phi_w=2\arctan(v_w/u_w)$. If $0<\theta<\pi/(4d_X)$, then $|v_w/u_w|=(\tan\theta)^{d_X-2w}\le\tan\theta$, so $|\phi_w|\le2\theta$. The phases of the $d_X$ row branches add to the conditional phase $\phi_s$ of the round, with $|\phi_s|\le2d_X\theta<\pi/2$. The syndrome-averaged fixed-sign multiplier is $A-iB$, and its real part $A=\sum_sq_s\cos\phi_s>0$ is the coefficient of \cref{eq:AB} for this code and table.

Under $\mathsf R$, conditioning on any syndrome gives a real positive multiplier $\cos\phi_s$. A logical Pauli either preserves or reverses the phase axis and therefore leaves this averaged multiplier unchanged. Even adaptive logical-Pauli choices leave the compensated multiplier $\prod_j\cos\phi_{s_j}>0$. The final identity correction is optimal on every record, and the noise-aware $\mathsf R$ fidelity is exactly $(1+A^T)/2$. This bounds the worst-case performance of every allowed protocol.

For even $T$, a single logical $\Zbar$ pulse after round $T/2$ reverses the sign of the remaining accumulated phase. Under $\mathsf F$, averaging the independent syndrome outcomes gives
\begin{equation}
 (A-iB)^{T/2}(A+iB)^{T/2}=\lvert A+iB\rvert^T\ge A^T.
\end{equation}
Under $\mathsf R$ the multiplier remains $A^T$. Thus the midpoint echo and the final identity correction attain the upper bound, including all convex mixtures of $\mathsf F/\mathsf R$:
\begin{equation}
 \sup_{\mathfrak p}
       \inf_{0\le p\le1}\Fid{p\mathsf F+(1-p)\mathsf R}(\mathfrak p)
       =\frac{1+A^T}{2}.
 \label{eq:echo-exact-optimum}
\end{equation}
Here $\mathfrak p$ ranges over the allowed logical-Pauli protocols, $\Fid{h}(\mathfrak p)$ is the entanglement fidelity of $\mathfrak p$ under model $h$, including its controls and final Pauli recovery, and $p\mathsf F+(1-p)\mathsf R$ draws $\mathsf F$ with probability $p$.
Within this control class, calibration cannot improve the worst-case value. Physical Pauli dressing, general intermediate logical CPTP maps, and imperfect pulses lie outside the class, and the statement concerns worst-case absolute fidelity rather than the loss relative to each model's own optimal controlled circuit.
\section{Worst-case accuracy dependence}
\label{sec:accuracy-dependence}

We now keep the angle of the normalizer construction fixed and vary the desired accuracy. This asymptotic question is distinct from the fixed-target weak-angle limit of \cref{thm:costs}. The construction in \cref{eq:normalizer-construction} supplies an accuracy lower bound. Keep the same physical normalizer $P$, its logical representative $\bar P$, and fair sign $\xi$, but draw $c$ independently with
\begin{equation}
 \Pr_b(c=+1)=\tfrac12+b\nu,
 \qquad b\in\{+1,-1\},\quad 0<\nu<1/2.
 \label{eq:accuracy-pair}
\end{equation}
Use the identity data schedule without random dressing. The data channel $\mathcal N_b$ is
\begin{equation}
 \mathcal N_b=(\tfrac12-b\nu)\id
                   +(\tfrac12+b\nu)\operatorname{Ad}_{\bar P}.
 \label{eq:accuracy-data-channel}
\end{equation}
Its noise-aware optimal fidelities satisfy $\FC{b}=\FP{b}=1/2+\nu$. The two models prefer different corrections, and the wrong correction loses $2\nu$. The SPP has uniform Pauli-history probabilities on $\{I,P\}^2$, and every noise syndrome is zero.

Conditioned on $c$, the two-interval process, including its fair sign average, is independent of $b$. Any protocol using $N$ complete, independently reset two-interval queries is therefore a common CPTP postprocessing of $N$ hidden Bernoulli variables $c$. This simulation includes adaptive controls between the intervals, entangled inputs, and a retained private quantum register. Let $\sigma_b$ be the retained calibration state under model $b$. Relative-entropy data processing yields
\begin{equation}
 \begin{split}
 D_{\rm rel}(\sigma_+\Vert\sigma_-)
 &\le N D_{\rm B}(\nu),\\
 D_{\rm B}(\nu)&\coloneqq 2\nu\log\frac{1+2\nu}{1-2\nu}\\
              &=8\nu^2+O(\nu^4).
 \end{split}
 \label{eq:accuracy-kl}
\end{equation}

For completeness, the testing reduction also covers a joint quantum processor $\mathcal G$, whose record register is trivial in this example. Let $\Pi_I,\Pi_P$ be its positive program effects on $\mathcal H_{\rm cal}$ for the two ideal endpoint channels, so that $\Pi_I+\Pi_P\preceq I$. The effect $\Pi\coloneqq (I+\Pi_P-\Pi_I)/2$ defines a measurement-first recovery that applies $\bar P$ on outcome $\Pi$. Its fidelity is
\begin{equation}
 \tfrac12+b\nu\Tr[(\Pi_P-\Pi_I)\sigma_b],
\end{equation}
which is at least the original processor's fidelity $\tfrac12\Tr[(\Pi_I+\Pi_P)\sigma_b] +b\nu\Tr[(\Pi_P-\Pi_I)\sigma_b]$ under each model. The expected fidelity losses of this dominating recovery are $2\nu$ times its model-testing error probabilities. The optimal equal-prior testing error and the quantum Pinsker inequality therefore imply
\begin{equation}
 \begin{split}
 &\max_b\{\FC{b}-\Fid{b}(\mathcal G)\}\\
 &\quad\ge\nu[1-\Dtr(\sigma_+,\sigma_-)]\\
 &\quad\ge\nu\left[1-\sqrt{ND_{\rm B}(\nu)/2}\right].
 \end{split}
 \label{eq:accuracy-loss}
\end{equation}
Choose $\nu=2\epsilon$, with $0<\epsilon\le1/8$. Attaining expected loss at most $\epsilon$ requires
\begin{equation}
 N\ge\frac1{2D_{\rm B}(2\epsilon)}
        =\Omega(\epsilon^{-2}).
 \label{eq:accuracy-lower}
\end{equation}
A known logical eigenstate of a Pauli anticommuting with $\bar P$, followed by its final readout, observes $c$ without error. Bernoulli testing thus attains the same order in independent two-interval queries. The model pair varies with the target accuracy and uses a generally high-weight normalizer. The result is therefore a worst-case accuracy bound, separate from the weak local noise and fixed model pair of the main example.
\section{Finite-angle calculations}\label{app:numerics}
The accompanying enumeration routine specifies the Steane and Shor stabilizer generators, logical representatives, and fixed minimum-weight $X$ lookup tables. Enumerating physical $X$ supports gives the coherent branch amplitudes and, by squaring before grouping, the physical Pauli probabilities. Histories are grouped by identical conditional channels, and a sequential binomial decomposition evaluates their weights. Bernstein tail bounds control the omitted probability, while normalization and the exact means in \cref{eq:readout} check floating-point error.

The two-model minimax value is computed from its one-dimensional piecewise-linear dual over the prior on $\mathsf F$. A record-dependent randomized $I/\Xbar$ decoder attains the same value, checking the primal problem. The six rows in \cref{tab:finite-angle} have primal--dual gaps below $3\times10^{-9}$ and normalization and mean-readout residuals below $3\times10^{-8}$. The physical Pauli-randomization column is computed independently from physical error-string probabilities using \cref{eq:twirl-fidelity}. The same routine checks the retained-coefficient limit and the full-record continuity bounds at scaled storage times. \Cref{fig:separation} displays analytic weak-noise limits, and its panel (b) also includes a directly computed Steane curve at $\theta=0.01$. At $T=T_\theta(1/2)$ and $\theta=0.01$, the exact losses are 0.084596643 for Steane and 0.059386690 for Shor, compared with the leading-order predictions 0.150000000 and 0.086602540. These values show the slow approach to the asymptotic formula. The accompanying source package includes the code definitions, enumeration and optimization routines, and numerical data in the \texttt{anc/} directory, with all figure definitions and plotting data embedded in the manuscript source. The numerical checks use Python's standard library.

\section{List of notations}
\label{app:notation}

The lists below collect the symbols used in more than one place, in order of first appearance within each group. The bracket after each entry gives the defining equation or section. Symbols used only inside a single proof are defined there.

% Keep symbols in the text flow: REVTeX description-list labels can protrude
% into the margin or the neighboring column. Paragraph entries wrap and split
% naturally between columns/pages without a fixed-width label box.
\newenvironment{notationlist}
 {\par\small\setlength{\parindent}{0pt}\setlength{\parskip}{3pt}\raggedright}
 {\par}
\newcommand{\notationentry}[1]{\par\noindent#1\enspace---\enspace\ignorespaces}

\subsection*{Codes and circuits}
\begin{notationlist}
 \notationentry{$\mathcal H_L$, $\mathcal H_{\rm phys}$, $\mathcal H_E$} Logical, physical-data, and environment Hilbert spaces [\cref{sec:setting}]
 \notationentry{$\mathcal B(\mathcal H,\mathcal K)$, $\operatorname{Herm}(\mathcal H)$} Linear maps from $\mathcal H$ to $\mathcal K$, with $\mathcal B(\mathcal H)=\mathcal B(\mathcal H,\mathcal H)$, and Hermitian operators on $\mathcal H$ [\cref{sec:setting}]
 \notationentry{$n$, $k$, $D$} Physical qubits, logical qubits, and logical dimension $D=2^k$ [\cref{sec:setting}]
 \notationentry{$\rho_L$, $\rho'_L$} Input logical state, and output logical state after the final recovery, on $\mathcal H_L$ [\cref{fig:setting}]
 \notationentry{$V$} Encoding isometry $\mathcal H_L\to\mathcal H_{\rm phys}$ [\cref{sec:setting}]
 \notationentry{$T$, $j$} Storage time in noise intervals, and interval index [\cref{fig:setting}]
 \notationentry{$U_j$} Known unitary on $\mathcal H_{\rm phys}\otimes\mathcal H_E$ in interval $j$ [\cref{fig:setting}]
 \notationentry{$\mathcal M$, $\mathcal D$} Syndrome extraction, and the fixed map that returns the logical output [\cref{fig:setting}]
 \notationentry{$s$, $s_j$, $R_s$} Syndrome, syndrome of cycle $j$, and its physical lookup-table correction [\cref{fig:setting}]
 \notationentry{$P_{\rm fr}(y)$} Final Pauli frame of the actively applied corrections [\cref{eq:frame-record-map}]
 \notationentry{$\mathcal P_n$, $\mathcal P_k$} Physical Paulis, and phase-fixed Hermitian logical Paulis [\cref{eq:pauli-twirl,eq:branch}]
 \notationentry{$\C$} CSS code with a fixed $X$-correction lookup table [\cref{def:family}]
 \notationentry{$d_X$, $r$} $X$ distance $d_X=2r+1$ [\cref{def:family}]
 \notationentry{$\Xbar$, $\Zbar$} Logical Paulis, with $Z^{\otimes n}\simeq\Zbar$ meaning that $Z^{\otimes n}$ represents $\Zbar$ [\cref{eq:global-z-conjugation}]
 \notationentry{$N_{d_X}$} Number of minimum-weight supports of the nontrivial logical $X$ class [\cref{eq:orders}]
 \notationentry{$m_s$} Number of weight-$(r+1)$ $X$ errors with syndrome $s$ that the table corrects to $\Xbar$ [\cref{sec:family}]
 \notationentry{$n_s^{(1)}$} Number of weight-one $X$ errors with syndrome $s$ [\cref{eq:delayed-rate}]
\end{notationlist}

\subsection*{Noise models and environment}
\begin{notationlist}
 \notationentry{$h$} Label of the unknown noise model [\cref{eq:instrument}]
 \notationentry{$E$, $I_E$, $\rho_E(h)$} Environment, its identity, and its initial state under $h$ [\cref{fig:setting}]
 \notationentry{$\mathcal T$, $\operatorname{Ad}_Q$} Pauli twirl, and conjugation by $Q$ [\cref{eq:pauli-twirl}]
 \notationentry{SPP} Spatiotemporal Pauli process: the joint distribution of physical Pauli histories after an independent twirl in every interval [\cref{eq:pauli-twirl}]
 \notationentry{$U_\xi(\theta)$} Common-sign $X$ rotation in one interval [\cref{eq:physical}]
 \notationentry{$\theta$} Rotation parameter, half the Bloch-sphere angle [\cref{eq:physical}]
 \notationentry{$\xi$, $\boldsymbol\xi$} Rotation sign in one interval, and sign trajectory [\cref{eq:physical,eq:syndrome}]
 \notationentry{$\mathsf F$, $\mathsf R$} Fixed-sign and resampled-sign models [\cref{eq:physical}]
 \notationentry{$\rho_E^{\mathsf F}$, $\rho_E^{\mathsf R}$} Hidden sign registers of the two models [\cref{eq:classical-environment}]
 \notationentry{$U(\theta,\omega)$, $\omega$} Joint unitary with one environment qubit, and its precession parameter [\cref{eq:quantum-unitary}]
 \notationentry{$\mathrm{Ret}_\omega$, $\mathrm{Res}_\omega$} Environment qubit retained, or reset at every interval boundary [\cref{eq:quantum-unitary}]
 \notationentry{$p_X$, $p_{\rm rep}$} Physical $X$-fault and report-error rates [\cref{sec:equivalence}]
\end{notationlist}

\subsection*{Records, recovery, and fidelity}
\begin{notationlist}
 \notationentry{$y$, $\mathcal Y$} Complete record supplied to the decoder, and its alphabet, with $y=(s_1,\ldots,s_T)$ for $\mathsf F/\mathsf R$ [\cref{eq:instrument}]
 \notationentry{$\Phi_h$} Recorded logical instrument from $\mathcal B(\mathcal H_L)$ to $\mathcal B(\mathcal H_Y\otimes\mathcal H_L)$ [\cref{eq:instrument}]
 \notationentry{$\mathcal N_{h,y}$} Residual logical map for record $y$, before the final recovery: completely positive and trace nonincreasing on $\mathcal B(\mathcal H_L)$ [\cref{eq:instrument}]
 \notationentry{$K_{h,y,\mu}$} Kraus operators of $\mathcal N_{h,y}$, acting on $\mathcal H_L$ [\cref{eq:branch}]
 \notationentry{$w_h(y,L)$} Fidelity contribution of correction $L$ at record $y$ [\cref{eq:branch}]
 \notationentry{$p_h(y)$} Probability of record $y$ for a maximally mixed input [\cref{eq:fidelity}]
 \notationentry{$d(L\mid y)$} Pauli decoder: probability of applying $L$ at record $y$ [\cref{eq:fidelity}]
 \notationentry{$\Fid{h}(d)$, $\Fid{h}(\mathcal R)$} Entanglement fidelity of a Pauli decoder or of a final CPTP recovery $\mathcal R=\{\mathcal R_y\}$ under $h$ [\cref{eq:fidelity}]
 \notationentry{$\FP{h}$, $\FC{h}$} Noise-aware optima over Pauli decoders and over record-dependent final CPTP recoveries [\cref{eq:fidelity}]
 \notationentry{$\Floss{h}^{\mathrm P}$, $\Floss{h}^{\mathrm{CPTP}}$} Fidelity loss relative to $\FP{h}$ or $\FC{h}$ [\cref{eq:fidelity}]
 \notationentry{$|\Omega_D\rangle$, $F_{\mathrm e}(\mathcal M)$} Maximally entangled state, and entanglement fidelity of a map [\cref{eq:kraus-fidelity}]
 \notationentry{$z$, $Q_{\rm cal}$} Finite-sample classical calibration transcript, and retained quantum calibration output [\cref{fig:setting}]
\end{notationlist}

\subsection*{Decision framework}
\begin{notationlist}
 \notationentry{$\widetilde K_{y,\mu}$, $B_{y,L,\mu}$} Joint branches on logical system and environment, and environment components $\mathcal H_E\to\mathcal H_{E'}$ [\cref{eq:environment-effects}]
 \notationentry{$E_{y,L}$} Fidelity effect: positive operator on $\mathcal H_E$ whose expectation is $w_{\rho_E}(y,L)$ [\cref{eq:environment-effects,eq:score}]
 \notationentry{$\Delta E_{y;L,L'}$} Difference operator $E_{y,L}-E_{y,L'}$ [\cref{eq:contrast}]
 \notationentry{$C_i$, $\mathcal V_{\rm cal}$} Calibration effects on $\mathcal H_E$, normalized separately for each setting, and their real span with $I_E$ [\cref{eq:span}]
 \notationentry{$\mathbf p$} Exact calibration probabilities $(\Tr(\rho_EC_i))_i$ [\cref{eq:span}]
 \notationentry{$\Lambda_{\mathbf p}$} Models consistent with exact calibration probabilities $\mathbf p$ [\cref{eq:common-optimum}]
 \notationentry{$\Delta F^{\mathrm P,\star}(\Lambda_{\mathbf p})$} Minimax Pauli fidelity loss over $\Lambda_{\mathbf p}$ [\cref{eq:general-minimax}]
 \notationentry{$\sigma_h$, $\mathcal G$} Calibration state on $\mathcal H_{\rm cal}$, and a common CPTP processor from $\mathcal B(\mathcal H_{\rm cal}\otimes\mathcal H_Y\otimes\mathcal H_L)$ to $\mathcal B(\mathcal H_L)$, with fidelity $\Fid{h}(\mathcal G,\sigma)$ [\cref{eq:general-conflict}]
 \notationentry{$\Gamma^{\mathrm{CPTP}}_{01}$} Equal-prior conflict between two models without calibration [\cref{eq:general-conflict}]
 \notationentry{$L_{\rm sh}$, $\kappa$} Two-round Shor example: logical Pauli of $X_1X_2X_3$, and the sign correlation $\Tr(\rho_EZ_{12})$ [\cref{eq:shor-interactions,eq:shor-crossing}]
\end{notationlist}

\subsection*{Fixed and resampled signs}
\begin{notationlist}
 \notationentry{$K_s(\xi)$} Logical Kraus operator for syndrome $s$, acting on $\mathcal H_L$ [\cref{eq:branch-form}]
 \notationentry{$u_s$, $v_s$, $q_s$} Branch coefficients, with $q_s=u_s^2+v_s^2$ the probability of syndrome $s$ [\cref{eq:branch-form}]
 \notationentry{$A$, $B$} Syndrome-averaged multiplier $A-i\xi B$ of the perpendicular Bloch components [\cref{eq:AB}]
 \notationentry{$a_{\C}$, $b_{\C}$} Leading coefficients of $1-A$ and $B$ [\cref{eq:orders}]
 \notationentry{$\mathcal E_\xi$, $\mathcal E_\pm$} One-cycle instrument with the syndrome retained, for sign $\xi=\pm1$ [\cref{eq:beta}]
 \notationentry{$\beta_{\rm ret}$} Sign sensitivity $\|(\mathcal E_+-\mathcal E_-)/2\|_\diamond$ [\cref{eq:beta}]
 \notationentry{$\langle\Zbar\rangle_{\mathsf F}(T)$, $\langle\Zbar\rangle_{\mathsf R}(T)$} Mean $\Zbar$ readout before final recovery [\cref{eq:readout}]
 \notationentry{$\alpha$, $T_\pi$} Logical rotation angle per cycle under $\mathsf F$, and the storage time at which it accumulates $\pi$ [\cref{eq:horizon}]
 \notationentry{$\phi_{\rm tot}(y)$} Accumulated phase $\sum_j\phi_{s_j}$ of a syndrome history [\cref{eq:conditional}]
 \notationentry{$\phi_s$, $q_y$} Logical rotation angle for syndrome $s$, and history probability $\prod_jq_{s_j}$ [\cref{eq:conditional}]
 \notationentry{$g_h(y)$} Conditional logical multiplier of history $y$ [\cref{eq:conditional}]
 \notationentry{$\mathcal X$} Logical bit flip $\rho\mapsto\Xbar\rho\Xbar$ [\cref{eq:conditional}]
 \notationentry{$\Fmm(T)$} Full-record minimax fidelity loss for $\mathsf F/\mathsf R$, over final CPTP recoveries [\cref{eq:regret}]
 \notationentry{$x$, $T_\theta(x)$, $\chi(x)$} Scaled storage time, $\operatorname{round}(xT_\pi)$, and $[-\cos\pi x]_+$ [\cref{eq:horizon-curve}]
 \notationentry{$\RefInf{\mathsf F}$, $\RefInf{\mathsf R}$} Infidelities of the fixed corrections $\Xbar$ under $\mathsf F$ and $I$ under $\mathsf R$ at $T_\pi$ [\cref{eq:defects}]
\end{notationlist}

\subsection*{Calibration}
\begin{notationlist}
 \notationentry{$\mathcal A_{\rm syn}$, $\mathcal A_{\rm log}$, $\mathcal A_{\rm del}$} Access classes: syndromes only, all cycle outputs, and two-interval delayed extraction [\cref{sec:access}]
 \notationentry{$\mathcal A_{\rm del}(\ell)$, $\ell$} Access class with delayed-extraction queries of length at most $\ell$ [\cref{eq:delay-cost}]
 \notationentry{$M$, $m$, $N_{\rm del}$} Interval budget $M=m+2N_{\rm del}$, complete QEC cycles, and two-interval queries [\cref{sec:access}]
 \notationentry{$\epsilon$} Target fidelity loss, $0<\epsilon<1/2$ [\cref{eq:costs}]
 \notationentry{$M^*_{\mathcal A,\epsilon}$, $M^*_\epsilon(\theta,\ell)$} Smallest budget that reaches loss $\epsilon$ in class $\mathcal A$ or $\mathcal A_{\rm del}(\ell)$ [\cref{eq:costs,eq:delay-cost}]
 \notationentry{$\mathcal E_{\rm av}$, $\mathcal E_{\rm id}$} One extraction after a fair-sign rotation, and the zero-syndrome identity channel [\cref{eq:delayed-mixture}]
 \notationentry{$\mathcal Q_{\mathsf F,2}$, $\mathcal Q_{\mathsf R,2}$} Two-interval delayed-extraction channels [\cref{eq:delayed-mixture}]
 \notationentry{$p_{\rm nz}$, $\kappa_{\C}$} Nonzero-syndrome probability of a two-interval query under $\mathsf F$, and its leading coefficient [\cref{eq:delayed-rate}]
 \notationentry{$S_\lambda$} Sum of $\lambda$ independent fair signs in a length-$\lambda$ query [\cref{eq:delay-cost}]
\end{notationlist}

\subsection*{Modified QEC circuits}
\begin{notationlist}
 \notationentry{$p(y,L)$, $d^\star$} Joint distribution of record and residual logical Pauli after physical randomization, and its optimal decoder [\cref{eq:pauli-optimum}]
 \notationentry{$p_{\rm tw}$, $A_{\rm tw}$} One-cycle logical flip probability after physical randomization, and $A_{\rm tw}=1-2p_{\rm tw}$ [\cref{eq:twirl-fidelity}]
 \notationentry{$F_{\rm tw}(T)$} Fidelity of physical randomization with the final identity correction [\cref{eq:twirl-fidelity}]
 \notationentry{$P$, $\bar P$, $c$} Normalizer Pauli, its logical action, and the model sign [\cref{eq:normalizer-construction}]
\end{notationlist}

\subsection*{General conventions}
\begin{notationlist}
 \notationentry{$J(\mathcal M)$, $\widehat J_{h,y}$} Unnormalized Choi operator, and rescaled logical branch $J(\mathcal N_{h,y})/D$ with trace $p_h(y)$ [\cref{eq:choi-convention,eq:choi-recovery-weights}]
 \notationentry{$|P\rangle\!\rangle$} Unit Pauli--Choi vector $(P\otimes I)|\Omega_D\rangle$ for logical dimension $D$; the physical-noise witness uses dimension $2^n$ [\cref{app:choi-weights,app:quantum}]
 \notationentry{$\|\cdot\|_1$, $\|\cdot\|_\diamond$} Trace norm and diamond norm
 \notationentry{$\Dtr(\rho,\tau)$} Trace distance $\|\rho-\tau\|_1/2$ [\cref{eq:program-bound}]
 \notationentry{$[x]_+$} $\max\{x,0\}$ [\cref{eq:program-bound}]
 \notationentry{$O_{\C}$, $o_{\C}$, $\Theta_{\C,\epsilon}$} Asymptotics as $\theta\to0^+$, with implied constants that depend only on the subscripts: here the code $\C$ with its lookup table and the target $\epsilon$
\end{notationlist}

\end{document}